\newtheorem{Thm}{Theorem}
\newtheorem*{thm}{Theorem XI.2.4 of Ref. \cite{FK94}}
\newtheorem{theorem}{Theorem}[section]
\newtheorem{Th}{Theorem}[section]
\newtheorem{Lem}[theorem]{Lemma}
\newtheorem{Prop}{Proposition}[section]
\newtheorem{Cor}{Corollary}[section]
\newtheorem{Definition}{Definition}
\newtheorem{Def}{Definition}[section]
\newtheorem{rmk}{Remark}[section]
\numberwithin{equation}{section}
\newcommand{\bb}{\mathbb}
\newcommand{\ms}{\mathscr}
\newcommand{\mr}{\mathrm}
\newcommand{\frk}{\mathfrak}
\newcommand{\tr}{\mr{tr}\,}
\begin{document}
%\large

\title{Generalized Kepler Problems I: Without Magnetic Charges}
\author{Guowu Meng}

\address{Department of Mathematics, Hong Kong University of Science and
Technology, Clear Water Bay, Kowloon, Hong Kong}
%    Current address
\curraddr{Institute for Advanced Study, Einstein Drive, Princeton, New Jersey
08540 USA}

\email{mameng@ust.hk}
%    \thanks will become a 1st page footnote.
\thanks{The author was supported by Qiu Shi Science and Technologies
Foundation.}

%    General info
%\subjclass[2000]{Primary 22E46, 22E70; Secondary 81S99, 51P05}
 
\date{May 23, 2011}

%    General info
\subjclass[2000]{Primary 81R05, 81R12, 81Q80; Secondary  22E46, 22E70, 51P05}

%\dedicatory{This paper is dedicated to our advisors.}

\keywords{Euclidean Jordan Algebras, Kepler Problems, Unitary Highest
Weight Representations, Generalized Laguerre polynomials, Superintegrable Models}

%\dedicatory{This paper is dedicated to our advisors.}

%Add at the beginning:

%\usepackage{setspace}

%Then switch between different spacing options with:

%\doublespacing
%\singlespacing
%\onehalfspacing

%For other sizes use the \setstretch command like this:

%\setstretch{1.8}

\maketitle
\begin{abstract}
For each simple euclidean Jordan algebra $V$ of rank $\rho$ and degree $\delta$, we introduce a family of classical dynamic problems. These dynamical problems all share the characteristic features of the Kepler problem for planetary motions, such as existence of Laplace-Runge-Lenz vector and hidden symmetry. After suitable quantizations, a family of quantum dynamic problems, parametrized by the nontrivial Wallach parameter $\nu$, is obtained. Here, $\nu\in{\mathcal W}(V):=\{k {\delta\over 2}\mid k=1, \ldots, (\rho-1)\}\cup((\rho-1){\delta\over 2}, \infty)$ and was introduced by N. Wallach to parametrize the set of nontrivial scalar-type unitary lowest weight representations of the conformal group of $V$.

For the quantum dynamic problem labelled by $\nu$, the bound state spectra is
$-{1/2\over (I+\nu{\rho\over 2})^2}$, $I=0$, $1$, \ldots  and its Hilbert space of bound states gives a new realization
for the afore-mentioned representation labelled by $\nu$. A few
results in the literature about these representations become more explicit and more refined. 

The Lagrangian for a classical Kepler-type dynamic problem introduced here is still of the simple form:
${1\over 2} ||\dot x||^2+{1\over r}$. Here, $\dot x$ is the velocity of a unit-mass particle moving on the space
consisting of $V$'s semi-positive elements of a fixed rank, and  $r$ is the inner product of $x$ with the identity element of $V$.

\end{abstract}

%\tableofcontents

\section {Introduction}
The goal of this paper and its sequels is to develop a general mathematical theory for the
Kepler problem of planetary motions. As with many endeavors of developing a
general theory for a nice mathematical object, one usually starts with a
reformulation of this object from a new perspective which captures the essence
of such object --- a crucial step which often requires introducing new
concepts, and then work out all the necessary technical details. 

For example, choosing continuous functions in calculus as the nice mathematical
object, introducing the concept of open set and reformulating the notion of
continuous function in terms of open sets, one arrives at the much broader
notion of continuous map between topological spaces. For a sophisticated
example, choosing the Riemann integral for continuous functions over close intervals
as the nice mathematical object, reformulating it as the Lebesgue integral --- a
step which requires introducing the concepts of measurable space and measure,
one arrives at the general integration theory. For still another
sophisticated example, choosing the Euler number of closed oriented manifolds as the nice
mathematical object, reformulating it as the obstruction to the existence of
nowhere vanishing vector fields --- a step which requires introducing the
concepts of vector bundle and obstruction, one arrives at the theory
of characteristic classes. 

In our case, the Kepler problem of planetary motions is the nice mathematical
object, and we reformulate it as a dynamic problem on the {\it open future light
cone} 
$$
\{(t, \vec x)\in {\bb R}^4\mid t^2=|\vec x|^2, t>0\}
$$
of the Minkowski space \footnote{This reformulation seems to support the thesis of 2T-physics developed by I. Bars \cite{IBars10}.}. To quickly convince readers that this is a much
better formulation, one notes that the orbits (including the colliding ones) in this new formulation have a
simpler and uniform description \footnote{This description emerged soon after Ref. \cite{meng09} was written up.}: \begin{center}\emph{orbits are precisely the
intersections of 2D planes with the open future light cone}.\end{center} 
A closer examination reveals that it is the euclidean Jordan algebra structure, something that is more fundamental than the Lorentz structure, that plays a pivotal role in this new formulation. Once this point is realized, it is just a matter of time to work out the details for the general theory.  In this general theory, a key concept is that of {\bf canonical cone}, something that generalizes the notion of open future light cone.  With this key concept introduced,  it is not hard to describe what a generalized Kepler problem is. For example, a {\bf generalized classical Kepler problem} is just the dynamical problem for a single particle on a canonical cone $\mathcal C$ moving under a conserved force whose potential function is $-{1\over r}$. (Here, $r$ is the function on $\mathcal C$ such that $r(x)$ is the inner product of $x\in \mathcal C$ with the identity element of the euclidean Jordan algebra.)

One would expect to obtain the {\bf generalized quantum Kepler problems} by quantizing the generalized classical Kepler problems. However, due to the operator ordering problem, in general we don't know how to do the quantization, so we seem to be stuck. The way out of it is to demand that the hidden symmetry in the generalized classical Kepler problems be still present after quantization. Incidentally, that leads to a rediscovery of those unitary lowest modules mentioned in the abstract. 

As a comparison with the general integration theory,
{\it euclidean Jordan algebras} are like {\it measurable spaces}, {\it canonical cones} (cf. Subsection \ref{SS:Canonical Cones}) are like {\it measured spaces}, a {\it generalized Kepler problem} (cf. Section \ref{S: GKP}) on a canonical cone is like an
{\it integration theory} on a measured space. 

\subsection {A general remark on euclidean Jordan algebras} Euclidean Jordan
algebras were first introduced by P. Jordan \cite{PJordan33} to formalize
the algebra of physics observables in quantum mechanics. With E.
Wigner and J. von Neumann, Jordan \cite{JVW34} classified the finite dimensional
euclidean Jordan algebras: {\it every
 finite dimensional euclidean Jordan algebra is a direct sum of simple ones, and
the simple ones consist of four infinite series and one exceptional.} 
 
Although euclidean Jordan algebras are abandoned by physicists quickly, they
have been extensively studied at much more general settings by mathematicians
since 1950's. For an authoritative account of the history of Jordan algebras,
readers may consult the recent book by K. McCrimmon \cite{K. McCrimmon2004}. 
However, for our specific purpose, the book by J. Faraut and A. Kor\'{a}nyi 
\cite{FK94} is sufficient --- everything we really need either is already there
or can be derived from there.
 
It is helpful for us to view euclidean Jordan algebras as the super-symmetric
analogues of compact real Lie algebras. Just as compact real Lie algebras are
analogues of the infinitesimal gauge group for electromagnetism, euclidean
Jordan algebras are analogues of the infinitesimal space-time, but with a
more refined euclidean Jordan algebra structure hidden behind. In our view, {\em it is this more refined hidden
structure that is responsible for the existence of the
inverse square law}. Therefore, we would not be surprised if someday euclidean
Jordan algebras indeed play an indispensable role in the study of the
fundamental physics.

\subsection {A quick review of euclidean Jordan algebras}\label{ss:eJA}
Throughout this paper and its sequels we always assume that $V$ is a finite
dimensional euclidean Jordan algebra. That means that $V$ has both the inner product structure and the Jordan algebra (with an identity element $e$) structure on the underlying finite dimensional real vector space, such that, 
the two structures are compatible in the sense that
the Jordan multiplication by $u\in V$, denoted by $L_u$, is always self-adjoint
with respect to the
inner product on $V$.  We write the Jordan product of $u, v\in V$ as $uv$, so
$uv=L_uv$,  and the {\bf Jordan triple product} of $u, v, w\in V$ as $\{uvw\}$.
By definition, $\{uvw\}=S_{uv}w$, where
\begin{eqnarray}
S_{uv}=[L_u, L_v]+L_{uv},
\end{eqnarray}
i.e., 
\begin{eqnarray}
\{uvw\}=u(vw)-v(uw)+(uv)w. 
\end{eqnarray}
It is a fact that a simple euclidean Jordan algebra is uniquely
determined by its rank $\rho$ and degree $\delta$.

$V$ is a real Jordan algebra means that $V$ is a real commutative algebra such that 
\begin{eqnarray}
[L_u, L_{u^2}]=0, \quad u\in V.\nonumber
\end{eqnarray}
The inner product of $u, v\in V$, denoted by $\langle u\mid v\rangle$, is assumed to be the one such that the length of
$e$ is one: $||e||=1$.  We further assume that $V$ is simple, then the inner
product is unique. A simple computation shows that
\begin{eqnarray}
[S_{uv}, S_{zw}]=S_{\{uvz\}w}-S_{z\{vuw\}}.
\end{eqnarray}
Therefore, these $S_{uv}$ span a real Lie algebra --- the {\bf structure algebra} of
$V$, denoted by $\frk{str}(V)$ or simply $\frk{str}$. It is a fact that the {\bf
derivation algebra} of $V$, denoted by $\frk{der}(V)$ or simply $\frk{der}$, is
a  Lie subalgebra of $\frk{str}(V)$. Note that $\frk{der}$ --- a generalization of
$\frk{so}(3)$ --- is compact and $\frk{str}$ --- a generalization of
$\frk{so}(3,1)\oplus \bb R$ --- is reductive.

The generalization of $\frk{so}(4,2)$, denoted by $\frk{co}(V)$ or simply
$\frk{co}$, was independently discovered by Tits, Kantor, and
Kroecher  \cite {TKK60}. Like $\frk{so}(4,2)$, $\frk{co}(V)$ --- the {\bf conformal Lie
algebra} of $V$,  is a simple real Lie algebra. In this paper and its sequels, the
universal enveloping algebra of $\frk{co}$ shall be referred to as the {\bf TKK
algebra} and the corresponding simply connected Lie group shall be referred to
as the {\bf conformal group} and denoted by $\mr{Co}(V)$ or $\mr{Co}$.

As a vector space, $$\frk{co}(V)=V\oplus \frk{str}(V)\oplus V^*.$$
If we rewrite $u\in V$ as $X_u$ and $\langle v\mid\;\rangle\in V^*$ as $Y_v$,
then the TKK algebra is determined by the {\bf TKK commutation relations}: 
\begin{eqnarray}\label{TKKRel}
\fbox{$\begin{matrix}[X_u, X_v] =0, \quad [Y_u, Y_v]=0, \quad [X_u,
Y_v] = -2S_{uv},\cr\\ [S_{uv},
X_z]=X_{\{uvz\}}, \quad [S_{uv}, Y_z]=-Y_{\{vuz\}},\cr\\
[S_{uv}, S_{zw}] = S_{\{uvz\}w}-S_{z\{vuw\}}.
\end{matrix}$}
\end{eqnarray}
Here, $u,v,z,w\in V$.

\subsection{Classical realization of TKK algebras}
The total cotangent space $T^*V=V\times V^*$ is a symplectic space. Let
$\{e_\alpha\}$ be an orthonomal basis for $V$, with respect to which,  a  point
$(x, \langle \pi\mid\;\rangle)\in T^*V$ can be represented by its coordinate
$(x^\alpha, \pi^\beta)$. Then the basic Poisson bracket relations on $T^*V$ are
$$\{x^\alpha, \pi^\beta\}=\delta^{\alpha\beta}, \quad \{x^\alpha, x^\beta\}=0, \quad \{\pi^\alpha, \pi^\beta\}=0.$$

Introducing the moment functions
\begin{eqnarray}
{\mathcal S}_{uv} :=\langle S_{uv}(x)\mid \pi\rangle, \quad {\mathcal X}_u: =
\langle x\mid \{\pi u\pi \}\rangle, \quad {\mathcal Y}_v: =\langle x\mid
v\rangle
\end{eqnarray} on $T^*V$,  we shall show that, for any $u$, $v$, $z$ and $w$ in
$V$,
\begin{eqnarray}\label{PoissonR}
\left\{\begin{matrix}
\{\mathcal X_u, \mathcal X_v\}=0, \quad \{\mathcal Y_u, \mathcal Y_v\}=0, \quad
\{\mathcal X_u,
\mathcal Y_v\} = -2\mathcal S_{uv},\cr\\ \{\mathcal S_{uv},
\mathcal X_z\}=\mathcal X_{\{uvz\}}, \quad \{\mathcal S_{uv}, \mathcal
Y_z\}=-\mathcal Y_{\{vuz\}},\cr\\
\{\mathcal S_{uv}, \mathcal S_{zw}\} = \mathcal S_{\{uvz\}w}-\mathcal
S_{z\{vuw\}}.
\end{matrix}\right.
\end{eqnarray}
The realization of $O$ by $\mathcal O$ is referred to as the classical realization of the TKK algebra. The quantization of this classical
realization leads to operator realizations of the TKK algebra.

\subsection{Operator realizations of TKK algebras}
The canonical quantization involves promoting classical physical variables $\mathcal O$ to
differential operators $\acute O$ (or the duals
$\grave O$) using recipe: $\pi^\alpha\to -i{\partial \over \partial
x^\alpha}$ (or $x^\alpha\to i{\partial \over \partial \pi^\alpha}$). Here is a
word of warning:  in order to get anti-hermitian differential operators in the end, instead of
using the quantized differential operators, we actually use the quantized differential operators multiplied
by $-i$.

Due to the ambiguity with the operator ordering, the canonical quantization has
an ambiguity, measured by a real parameter $\mu$ here. For simplicity, we write
$\sum_\alpha e_\alpha {\partial \over \partial x^\alpha}$ as ${/\hskip -5pt
\partial}$ and $\sum_\alpha e_\alpha {\partial \over \partial \pi^\alpha}$ as
${\backslash \hskip -6pt \partial}$. The quantization recipe, with the operator
orderings taken into account, yields either differential operators on $V$:

\begin{eqnarray}\label{explicitFor}
\left\{
\begin{array}{rcl}
{\acute S}_{uv}(\nu) &:=&-\langle S_{uv}(x)\mid {/\hskip -5pt
\partial}\rangle-{\nu\over 2}\tr (uv),\\
\\
{\acute X}_u (\nu)&:=& i \langle x\mid \{{/\hskip -5pt \partial} u {/\hskip -5pt
\partial} \}\rangle+i\nu\tr(u{/\hskip -5pt \partial}), \\
\\
{\acute Y}_v(\nu) &:=& -i\langle x\mid v\rangle
\end{array}\right.
\end{eqnarray}
or differential operators on $V^*$: 
\begin{eqnarray}\label{explicitFor*}
\left\{
\begin{array}{rcl}
{\grave  S}_{uv}(\nu) &:=& \langle S_{vu}(\pi) \mid  {\backslash    
 \hskip -6pt \partial}\rangle-{\nu^*\over 2}\tr (uv),\\
\\
{\grave  X}_u (\nu)&:=&   \langle \{\pi u \pi\} \mid {\backslash    
 \hskip -6pt \partial}\rangle -\nu^*\tr(u\pi), \\
\\
{\grave Y}_v(\nu) &:=& \langle v\mid {\backslash    
 \hskip -6pt \partial}\rangle.
\end{array}\right.
\end{eqnarray}
Here, $\nu^*=\nu-{2n\over \rho}$ with $n=\dim V$.

Eqns (\ref{explicitFor}) and (\ref{explicitFor*}) provide us two families of operator realizations for the TKK algebra. It is not hard to see that the two families are related by the Fourier transform and for the realizations to be unitary we must have $\nu\ge
0$.  Note that the case $\nu^*=0$ is well-known to physicists, cf. Ref. \cite{MG93}; moreover,
 for $\nu>1+(\rho-1)\delta$, the explicit formula in Eq.
(\ref{explicitFor}) has been obtained in Ref.  \cite{ADO06} via an indirect route.  

These operator realizations as given in Eq. (\ref{explicitFor}) are not unitary with respect to the obvious $L^2$-inner product
$$
(\psi_1, \psi_2)=\int_V \bar \psi_1\psi_2\, dm,
$$
where $dm$ is the Lebesgue measure. Therefore, the right hermitian inner products  must be found in order to have unitary realizations. For us, the clues come out naturally in our \cite{meng09, meng10} study of the Kepler problem.

\subsection{Quantum realizations of TKK algebras}\label{SS: QTKK}
In order to address the unitarity problem for the quantization given by the
explicit formula in Eq. (\ref{explicitFor}), we are led to introduce the notion
of {\bf canonical cones}. Let $k>0$ be an integer no more than $\rho$.  The
canonical cone of rank $k$, denoted by ${\mathcal C}_k$, is an open Riemannian
manifold. As a smooth manifold, it consists of all semi-positive elements in $V$
of rank $k$. The Riemannian metric on $\mathcal C_k$ is
defined in terms of Jordan multiplication, so it is invariant under the automorphism group of $V$. One important fact here is that
the operators defined in Eq. (\ref{explicitFor}) always descend to differential operators on $\mathcal C_k$.

Denote by $\mathcal P(V)$ the algebra of polynomial maps from $V$ to $\bb C$, by
$\mathcal P(\mathcal C_k)$ the algebra of functions on $\mathcal C_k$ coming from
the restriction of
elements in $\mathcal P(V)$, i.e.,
\begin{eqnarray}
\mathcal P(\mathcal C_k)=\{p: {\mathcal C_k} \to \bb C\mid p\in \mathcal P(V)\}.
\end{eqnarray}
We use $r$ to denote this function on $V$: 
\begin{eqnarray}\label{r-function}
r(x)=\langle e\mid x\rangle.
\end{eqnarray}
We shall show that $e^{-r}{\mathcal P}(\mathcal C_k)$ is a common domain for
the operators defined in Eq. (\ref{explicitFor}) and is dense in $L^2(d\mu_\nu)$, here $d\mu_\nu$ is a measure on $\mathcal C_\rho$
(on $\mathcal C_k$) if $\nu>(\rho-1){\delta\over 2}$ (if $\nu=k{\delta\over
2}$, $k=1$, $2$, \ldots, $\rho-1$), and can be explicitly written down
\footnote{The explicit formula for $d\mu_\nu$ has already been given in Ref.
\cite{FK94} when $\nu>(\rho-1){\delta\over 2}$.} in terms of
the volume form $\mr{vol}$ on the canonical cone. It shall also be demonstrated
that $L^2(d\mu_\nu)$
is a unitary lowest weight representation $\pi_\nu$ for $\mr{Co}$ with the
lowest weigh equal to $\nu\lambda_0$, here $\lambda_0$ is the fundamental weight
conjugate to the unique non-compact simple root of $\frk{co}$ under a suitable
choice of Cartan subalgebra. Note that, we have a unitary lowest weight
representation for $\mr{Co}$ when either $\nu=k{\delta\over 2}$ with $k=1,
\ldots, \rho-1$ or $\nu>(\rho-1){\delta\over 2}$, and these $\pi_\nu$ exhaust
all nontrivial scalar-type unitary lowest weight representations for $\mr{Co}$, according to
Ref. \cite{EHW82}. 

As a historical note, we would like to mention that the whole list of these representations $\pi_\nu$ was first discovered by N. Wallach \cite{Wallach79} via algebraic methods and the first analytic model for these representations was discovered by M. Vergne and H. Rossi \cite{Vergne&Rossi76}. In relatively recent years, these analytic models were further extended to real semi-simple Jordan algebras by A. Dvorsky and S. Sahi \cite{Dvorsky&Sahi03}. 

Even though we get the explicit formula for measure $d\mu_\nu$ via the
Riemannian metric  on canonical cones,  as far as representations are concerned,
the introduction of this Riemannian metric is not needed. However, it is essential if one wishes to go one step further, i.e., unravel the hidden dynamic models behind these unitary lowest weight representations.  

\subsection {Universal Kepler hamiltonian and universal Lenz vector}
The hidden dynamical models referred to in the proceeding paragraph are of the
Kepler type in the sense that the hidden conserved Lenz vector always exists.  To see
that,  we need to introduce the {\bf universal hamiltonian}
\begin{eqnarray}\label{universalH}
H:={1\over 2} Y_e^{-1} X_e+iY_e^{-1}
\end{eqnarray}
and the {\bf universal Lenz vector}
\begin{eqnarray}
A_u:=iY_e^{-1}[L_u, Y_e^2H], \quad u\in V.
\end{eqnarray}
Here, $i$ is the imaginary unit, and $Y_e^{-1}$ is the formal inverse of
$Y_e$.  Note that both $H$ and $A_u$ are elements of the
complexified TKK algebra with $Y_e$ inverted. 

The manifest conserved quantity is $L_{u,v}:=[L_u, L_v]$ --- the analogue of
angular momentum, and the hidden conserved quantity is $A_u$ --- the analogue of
original Lenz vector; in fact,  the following commutation relations have been
verified in Ref. \cite{meng10}:
\begin{eqnarray}
\left\{\begin{array}{lcl} 
[L_{u,v}, H] & = & 0\;,
\cr
[A_u, H] & = & 0\;,\cr
[L_{u,v},  L_{z, w}] & = &  L_{[L_u, L_v]z,w}+ L_{z, [L_u, L_v]w}\;,\cr
[L_{u,v}, A_z] &= & A_{[L_u, L_v]z}\;,\cr[A_u, A_v] & = &
-2H L_{u,v}\;.
\end{array}\right.
\end{eqnarray}

Now,  under a unitary lowest weight representation $\pi_\nu$ in the proceeding
subsection, both $H$ and $A_u$ become differential operators $\acute
H(\nu)$ and $\acute A_u(\nu)$ on the canonical cone. However, $\acute H(\nu)$ is
not quite right, because the term $-\acute Y_e^{-1}(\nu) \acute X_e(\nu)$ in
$\acute H(\nu)$ is not the Laplace operator on the
canonical cone, even
up to an additive function.

\subsection{Generalized Kepler problems}
By comparing with the Laplace operator on the canonical cone, one realizes that
$-\acute Y_e^{-1}(\nu) \acute X_e(\nu)$ is almost right: after conjugation by
the multiplication with a positive function on the canonical cone, modulo an
additive function, it becomes the Laplace operator on the canonical
cone. After this conjugation, $\acute O$ becomes a new differential
operator (shall be denoted by
$\tilde O$), and $L^2(d\mu_\nu)$ becomes $L^2({1\over r}\mr{vol})$.
Now $\tilde H(\nu)$ is the hamiltonian for the quantum Kepler-type dynamic problem
behind representation $\pi_\nu$. It is in this
sense we say that $L^2({1\over r}\mr{vol})$ is more natural than $L^2(d\mu_\nu)$.

The bound state spectrum for $\tilde H(\nu)$
is $$-{1/2\over (I+\nu{\rho\over 2})^2}, \quad I=0, 1, 2,\ldots $$
and the Hilbert space of bound states --- a closed subspace
of $L^2(\mr{vol})$, denoted by $\ms H(\nu)$, provides a new realization for
representation $\pi_\nu$.

The two natural realizations for $\pi_\nu$, one by $L^2({1\over r}\mr{vol})$,
one by $\ms H(\nu)$, implies that
there are two kinds of orthogonalities for generalized Laguerre polynomials,
generalizing the well-known fact that there are two kinds of orthogonalities for Laguerre
polynomials, cf. Ref. \cite{CD2003}. We would like to point out that this phenomena was first observed by A.O. Barut and H. Kleinert \cite{BarutKleinert67} for the original Kepler/Coulomb problem. 

By the way, one may arrive at a quantum Kepler-type dynamic problem by suitably quantizing a classical Kepler-type problem whose configuration space is a canonical cone and Lagrangian is
\begin{eqnarray}
L(x,\dot x)={1\over 2} ||\dot x||^2+{1\over r}.
\end{eqnarray} 
\subsection{Outline of the paper}
In Section \ref{S:Jordan Algebras}, we give a quick review of Jordan
algebras and especially euclidean Jordan algebras.  In Section \ref{Realizations of TKK algebras}, we first review TKK algebras, then we introduce the classical realization for TKK algebras. 
The operator realizations follows from formally quantizing the classical realization. Section \ref{quantization} is the most technical section of this article, here, the unitarity of those operator realizations is addressed. In Section \ref{S: GKP} we present the generalized Kepler problem behind each nontrivial unitary representation in Section \ref{quantization}, then we solve the bound state problem and prove that the Hilbert space of bound states provides another realization for the representation. Here, the proof is complete only after one proves an analogue of Theorem 2 in Ref. \cite{CD2003} for generalized Laguerre polynomials, something that can surely be done.  Two appendixes are given in the end. In Appendix \ref{App: A}, a technical theorem extending the existing one in Ref. \cite{FK94} is proved. In Appendix \ref{App:B}, for the convenience of readers, some basic notations for this paper are listed. 

\vskip 10pt
Ref. \cite{FK94} is our most consulted book --- really a bible for our
purpose.  However, we should warn the readers that there are some convention/notation differences: 1) our invariant inner product on $V$ is the one such
that the identity element $e$ has unit length and is denoted by $\langle \mid
\rangle$ rather than $(, )$, 2) the rank and degree of a Jordan algebra is
denoted by $\rho$ and $\delta$ rather than $r$ and $d$ because $r$ is reserved
for function $x\mapsto \langle e \mid x \rangle$ and $d$ is reserved for the
exterior derivative operator, 3) the Jordan multiplication by $u$ is denoted
by $L_u$ rather than $L(u)$, 4) the Jordan triple product of $u$, $v$ and $w$ is
denoted by $\{uvw\}$ rather than $\{u,v,w\}$, 5) we use $S_{uv}$ rather than
$u\mbox{\tiny{$\Box$}} v$ for endomorphism $w\mapsto \{uvw\}$. 

\begin{comment}
Note: Since the author is not an expert on most of the mathematical areas touched upon by the subject studied here, he apologizes in advance to those authors whose works are either not cited properly or not cited at all.  It is his strong intention to rectify the problem as much as possible in the future versions of this paper. 
\end{comment}

\vskip 10pt
{\bf Acknowledgment}: This work is performed while the author is visiting the {\it Institute for Advanced Study} (IAS) during the academic year 2010-2011. He would like to thank the {\it Qiu Shi Science and Technologies Foundation} for providing the financial support for his visit to the IAS. He would also like to thank Professors M. Atiyah, R. Howe and C. Taubes for continuous encouragements and the IAS faculty members P. Goddard, T. Spencer, P. Sarnak and E. Witten for enlightening discussions.

\section{Jordan Algebras}\label{S:Jordan Algebras}
Jordan algebra has become a big subject now; however, what is relevant for us here is
very minimal. In this section, we shall review
the bare essential of Jordan algebras, oriented towards our purpose. Apart
from Lemma \ref{LemmaVogan}, everything else presented here can be found from the book by J. Faraut and A.
Kor\'{a}nyi \cite{FK94}.

\subsection{Basic definitions}

Recall that an {\bf algebra} $V$ over a field $\bb F$ is a vector
space over $\bb F$ together with a $\bb F$-bilinear map $V\times
V\to V$ which maps $(u, v)$ to $uv$. This $\bb F$-bilinear map can
be recast as a linear map $V\to \mr{End}_{\bb F}(V)$ which maps $u$
to $L_u$: $v\mapsto uv$.

We say that algebra $V$ is {\it commutative} if $uv=vu$ for any $u,
v\in V$. As usual, we write $u^2$ for $uu$ and $u^{m+1}$ for $uu^m$
inductively.

\begin{Def} A {\bf Jordan algebra} over $\bb F$ is just a
commutative algebra $V$ over $\bb F$ such that
\begin{eqnarray}\label{JA}
[L_u, L_{u^2}]=0
\end{eqnarray} for any $u\in V$.
\end{Def}

As the first example, we note that $\bb F$ is a Jordan algebra over
$\bb F$. Here is a {\it recipe} to produce Jordan algebras. Suppose
that $\Phi$ is an associative algebra over field $\bb F$ with
characteristic $\neq 2$, and $V\subset \Phi$ is a linear subspace of
$\Phi$, closed under square operation, i.e, $u\in V$ $\Rightarrow$
$u^2\in V$. Then $V$ is a Jordan algebra over $\bb F$ under the
Jordan product
$$
uv:={(u+v)^2-u^2-v^2\over 2}.
$$
Applying this recipe, we have the following Jordan algebras over
$\bb R$:
\begin{enumerate}
\item The algebra $\Gamma(k)$. Here $\Phi=\mr{Cl}(\bb R^k)$---the Clifford
algebra of $\bb
R^k$ and $V=\bb R\oplus \bb R^k$.

\item The algebra ${\mathcal H}_k(\bb R)$. Here $\Phi=M_k(\bb R)$---the algebra
of real $k\times k$-matrices and $V\subset \Phi$ is the set of symmetric
$k\times k$-matrices.

\item The algebra ${\mathcal H}_k(\bb C)$. Here $\Phi=M_k(\bb C)$---the algebra
of complex $k\times k$-matrices (considered as an algebra over $\bb R$)
and $V\subset \Phi$ is the set of Hermitian $k\times k$-matrices.

\item The algebra ${\mathcal H}_k(\bb H)$. Here $\Phi=M_k(\bb H)$---the algebra
of quaternionic $k\times k$-matrices (considered as an algebra over $\bb R$)
and $V\subset \Phi$ is the set of Hermitian $k\times k$-matrices.
\end{enumerate}
The Jordan algebras over $\bb R$ listed above are {\em
special} in the sense that they are derived from associated
algebras via the above recipe. Let us use ${\mathcal H}_k(\bb O)$ to denote the algebra for which
the underlying real vector space is the set of Hermitian
$k\times k$-matrices over $\bb O$ and the  product is the
symmetrization of the matrix product. One can show that ${\mathcal
H}_k(\bb O)$ is a Jordan algebra if and only if $k\le 3$.  However, only
${\mathcal
H}_3(\bb O)$ is new because ${\mathcal H}_1(\bb O)={\mathcal H}_1(\bb R)$,
${\mathcal
H}_2(\bb O)\cong \Gamma(9)$. ${\mathcal H}_3(\bb O)$  is called {\it
exceptional} because
it cannot be obtained via the above recipe,  even if the associative
algebra $\Phi$ is allowed to be infinite dimensional.

\subsection{Euclidean Jordan algebras}
Any Jordan algebra $V$ comes with a canonical symmetric
bilinear form \begin{eqnarray}\tau(u, v):=\mbox{the trace of
$L_{uv}$}.\end{eqnarray}
It is a fact that $L_u$ is self-adjoint with respect to $\tau$.

We say that Jordan algebra $V$ is {\it semi-simple} if the symmetric
bilinear form $\tau$ is non-degenerate. We say that Jordan algebra
$V$ is {\it simple} if it is semi-simple and has no ideal other than
$\{0\}$ and $V$ itself.

By definition, an {\bf euclidean Jordan algebra}\footnote{Called
{\it formally real Jordan algebra} in the old literature.} is a
real Jordan algebra with an identity element $e$ such that the symmetric bilinear form $\tau$ is
positive definite. Therefore, an euclidean Jordan algebra is
semi-simple and can be uniquely written as the direct sum of simple
ideals --- ideals which are simple as Jordan algebras.

\begin{Th}[Jordan, von Neumann and Wigner, Ref. \cite{JVW34}]\label{JAclassification}
The complete list of simple euclidean Jordan algebras are
\begin{enumerate}
\item The algebra $\Gamma(k)=\bb R\oplus \bb R^k$ ($k\ge 2$).

\item The algebra ${\mathcal H}_k(\bb R)$ ($k\ge 3$ or $k=1$).

\item The algebra ${\mathcal H}_k(\bb C)$ ($k\ge 3$).

\item The algebra ${\mathcal H}_k(\bb H)$ ($k\ge 3$).

\item The algebra ${\mathcal H}_3(\bb O)$.
\end{enumerate}
\end{Th}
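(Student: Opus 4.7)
The plan is to follow the classical route via spectral theory and Peirce decomposition, reducing the classification to Hurwitz's theorem on composition algebras. First I would establish the spectral theorem: every element $x\in V$ admits a decomposition $x=\sum\lambda_i c_i$ for some system of pairwise orthogonal primitive idempotents $c_i$ summing to the identity $e$, with the number $\rho$ of such idempotents being an invariant (the rank). This uses the fact that the minimal polynomial of $L_u$ has distinct real roots (since $L_u$ is self-adjoint with respect to the positive-definite $\tau$) and that the positive-definiteness forbids nilpotent idempotents. The rank-one case gives $V=\mathbb{R}e\cong\mathcal{H}_1(\mathbb{R})$ at once.

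Next I would fix a Jordan frame $c_1,\ldots,c_\rho$ and derive the Peirce decomposition
\[
V=\bigoplus_{i\le j}V_{ij},\qquad V_{ii}=\mathbb{R}c_i,\quad V_{ij}=\{v\in V:L_{c_i}v=L_{c_j}v=\tfrac12 v\}\ (i<j),
\]
together with the Peirce multiplication rules $V_{ij}V_{jk}\subset V_{ik}$, $V_{ij}V_{k\ell}=0$ whenever $\{i,j\}\cap\{k,\ell\}=\varnothing$, and $V_{ij}V_{ij}\subset V_{ii}+V_{jj}$. Using simplicity (which is equivalent to transitivity of the automorphism group on Jordan frames, to be verified), I would show that all off-diagonal Peirce spaces $V_{ij}$ ($i<j$) have a common dimension $\delta$, the degree. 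For $\rho=2$ the algebra decomposes as $\mathbb{R}c_1\oplus\mathbb{R}c_2\oplus V_{12}$, and a direct computation identifies it with the spin factor $\Gamma(\delta)$.

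For $\rho\ge 3$ the decisive step is to equip any off-diagonal Peirce space $V_{12}$ with a composition algebra structure. Choosing an element $c\in V_{12}$ with $c^2=c_1+c_2$ as a unit, the multiplication $x\bullet y:=2(xc_2)\cdot y$ (after transporting through $V_{12}\otimes V_{12}\to V_{11}+V_{22}$ and back) turns $V_{12}$ into a unital algebra, and I would verify that the restriction of the inner product satisfies the multiplicativity $\|x\bullet y\|=\|x\|\,\|y\|$. By Hurwitz's theorem, $V_{12}$ must then be isomorphic to one of $\mathbb{R},\mathbb{C},\mathbb{H},\mathbb{O}$, so $\delta\in\{1,2,4,8\}$. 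Reconstruction of $V$ as hermitian $\rho\times\rho$ matrices over this composition algebra proceeds by using the Peirce rules to identify each $V_{ij}$ with $V_{12}$ consistently, and reading off Jordan multiplication as the symmetrized matrix product.

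The main obstacle is the case $\rho\ge 4$ combined with $\delta=8$: I would show that the associativity of triple products $V_{12}\cdot V_{23}\cdot V_{34}\to V_{14}$ forced by the Jordan identity implies associativity of $\bullet$ on $V_{12}$, excluding $\mathbb{O}$ and leaving only $\mathcal{H}_\rho(\mathbb{R})$, $\mathcal{H}_\rho(\mathbb{C})$, $\mathcal{H}_\rho(\mathbb{H})$. For $\rho=3$ no such obstruction arises, so $\mathcal{H}_3(\mathbb{O})$ appears, and a separate verification that the recipe $uv=\frac12(u+v)^2-\frac12u^2-\frac12v^2$ on $3\times 3$ hermitian octonionic matrices indeed satisfies the Jordan identity \eqref{JA} (checked on the dense set of diagonalizable elements) completes the list. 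Finally, I would rule out redundancies ($\mathcal{H}_1(\mathbb{R})=\mathbb{R}$, $\mathcal{H}_2(\mathbb{R})\cong\Gamma(2)$, $\mathcal{H}_2(\mathbb{C})\cong\Gamma(3)$, $\mathcal{H}_2(\mathbb{H})\cong\Gamma(5)$, $\mathcal{H}_2(\mathbb{O})\cong\Gamma(9)$) to arrive at the stated nonoverlapping list.
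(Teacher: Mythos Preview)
Your outline follows the standard Peirce-decomposition-and-Hurwitz route and is essentially correct as a sketch of the classical proof (it is the argument given in Faraut--Kor\'anyi, Chapter~V). There is nothing mathematically wrong with the plan, though several steps you flag as ``to be verified'' (transitivity of $\mathrm{Aut}(V)$ on Jordan frames, the composition-algebra identity on $V_{12}$, the Jordan identity for $\mathcal H_3(\mathbb O)$) are genuine work and would need to be filled in.

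However, the paper does \emph{not} prove this theorem. Theorem~\ref{JAclassification} is stated as a classical result attributed to Jordan, von~Neumann and Wigner \cite{JVW34}, and the surrounding section is explicitly a review of background material drawn from \cite{FK94}. No argument is given in the paper; the result is simply quoted. So there is no ``paper's own proof'' to compare against, and your proposal, while a reasonable summary of how the classification is established in the literature, goes well beyond what the paper itself does.
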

Note that $\Gamma(1)$ is not simple and ${\mathcal H}_1(\bb F)$ ($=\bb R$) is the
only associative simple euclidean Jordan algebra. Note also that there are
various isomorphisms: $\Gamma(2)\cong {\mathcal H}_2(\bb R)$,
$\Gamma(3)\cong {\mathcal H}_2(\bb C)$, $\Gamma(5)\cong {\mathcal
H}_2(\bb H)$, $\Gamma(9)\cong {\mathcal H}_2(\bb O)$.

\begin{rmk}
A simple euclidean Jordan algebra is uniquely specified by its rank $\rho$ and
degree $\delta$:
\begin{displaymath}
\begin{array}{|c|c|c|c|c|c|}
\hline
J & \Gamma(k) & {\mathcal H}_k(\bb R) & {\mathcal H}_k(\bb C) &{\mathcal
H}_k(\bb H) & {\mathcal H}_3(\bb O)\\
\hline
\rho & 2 & k &  k & k &  3\\
\hline
\delta & k-1 & 1 &  2 &  4 &  8\\
\hline
\end{array}
\end{displaymath}
Hence, for the
simple euclidean Jordan algebras, there is one with rank-one,
infinity many with rank two, four with rank three, and three with
rank four or higher.
\end{rmk}

The notion of {\bf trace} is valid for Jordan algebras. For the
simple euclidean Jordan algebras, the trace can be easily described:
For $\Gamma(k)$, we have
$$
\tr(\lambda, \vec u)=2\lambda,
$$
and for all other types, it is the usual
trace for matrices.

For the {\bf inner product} on $V$, we take
\begin{eqnarray}\label{innerProduct}
\fbox{$\langle u\mid v\rangle :={1\over
\rho}\tr(uv)$}
\end{eqnarray} so that the identity element $e$ becomes a unit vector. One
can check that $L_u$ is self-adjoint with respect to this inner
product: $ \langle v u\mid w\rangle = \langle v\mid u w\rangle$,
i.e., $L_u'=L_u$. For $u$, $v$ in $V$, we introduce linear map $S_{uv}:= [L_u,
L_v]+L_{uv}$, and write
$S_{uv}(w)$ as $\{uvw\}$. One can check that $S_{uv}'=S_{vu}$ and 
$$
[S_{uv}, S_{zw}]=S_{\{uvz\}w}-S_{z\{vuw\}}.
$$

\vskip 10pt
In the remainder of this paper, we fix a simple euclidean Jordan algebra $V$,
and use $e$, $\rho$, $\delta$ and $n$ to denote its identity element, rank,
degree and dimension respectively. We shall use $\{e_{ii}\}$  to denote a {\bf Jordan frame}  and $V_{ij}$ to denote
the resulting $(i,j)$-{\bf Pierce component}. 
Choose an orthogonal basis $e_{ij}^\mu$ for each $V_{ij}$ ($1\le i<j\le \rho$) with each basis vector having length $1\over \sqrt \rho$, here $1\le\mu\le \delta$. Then $$\{e_{kk}, e_{ij}^\mu\mid 1\le k\le \rho, 1\le i<j\le \rho, 1\le\mu\le\delta\}$$ is an orthogonal basis for $V$, and each basis vector has length $1\over \sqrt \rho$. Such a basis is referred to as a {\bf Jordan basis} with respect to the Jordan frame $\{e_{ii}\}$. 
\begin{comment}
For any {\em orthonomal} basis $\{e_\alpha\}_{\alpha=1}^n$ of $V$, one can verify that
\begin{eqnarray}
\sum e_\alpha^2=ne, \quad
\sum L_{e_\alpha}^2(x)=\rho\left(1+{\rho-2\over 4}\delta\right)x+{\rho
\delta\over 4}\tr x\; e,\cr
\sum  \mid [L_{e_\alpha}, L_{e_\beta}]x\rangle \langle  [L_{e_\alpha},
L_{e_\beta}]x \mid = {1\over \rho}\left(1+{\rho-2\over
4}\delta\right)(L_{x^2}-{L_x}^2).
\end{eqnarray}
\end{comment}

Here is a convention we shall adopt: $x$ is reserved for a point in
the smooth space $V$, and $u$, $v$, $z$, $w$ are reserved for vectors in vector
space $V$. We shall also use $V$ to denote the Euclidean space
with underlying smooth space $V$ and Riemannian metric $ds^2_E$: 
\begin{eqnarray}
T_xV\times T_xV & \to &  \bb R \cr ((x, u), (x, v)) &\mapsto &
\langle u\mid v\rangle.
\end{eqnarray}

Finally, we would like to remark that, if one takes Jordan algebras as
analogues of Lie algebras, then euclidean Jordan algebras are the analogues of
compact Lie algebras.

\subsection{Structure algebras and TKK Algebras}\label{SS: TKK}
As always, we let $V$ be a finite dimensional simple euclidean Jordan algebra. We use $\Omega$ to denote
the {\bf symmetric cone} of $V$ and $\mr{Str}(V)$ to denote the {\bf structure group} of $V$.  By definition,
$\Omega$ is the topological interior of $$\{x^2\mid x\in V\}$$ and 
$$
\mr{Str}(V)=\{g\in GL(V)\mid P(gx) =gP(x)g' \quad \forall x\in V\}.
$$
Here $P(x):=2L_x^2-L_{x^2}$ and it is called the {\bf quadratic representation} of $x$.

We write $V^{\bb C}$ for the complexification of $V$, denote by $T_\Omega$ the tube domain associated with $V$. By definition, $T_\Omega=V\oplus i\Omega$. We say that map $f$: $T_\Omega\to T_\Omega$ is a holomorphic automorphism of $T_\Omega$ if $f$ is invertible and both $f$ and $f^{-1}$ are holomorphic. We use $\mr{Aut}(T_\Omega)$ to denote the group of holomorphic automorphisms
of $T_\Omega$. 

It is a fact that both $\mr{Str}(V)$ and $\mr{Aut}(T_\Omega)$  are Lie groups.  The Lie algebra of $\mr{Str}(V)$ is referred to as the {\bf structure algebra} of $V$ and is denoted by $\frk{str}(V)$ or simply $\frk{str}$, the Lie algebra of
$\mr{Aut}(T_\Omega)$ is referred to as the {\bf conformal algebra} of $V$ and is denoted by $\frk{co}(V)$ or simply $\frk{co}$, and its universal enveloping algebra is called the {\bf TKK algebra} of $V$. The simply connected Lie group with $\frk{co}$ as its Lie algebra, denoted by $\mr{Co}(V)$ or simply $\mr{Co}$, shall be referred to as the {\bf conformal group} of $V$. 

Both the structure algebra and the conformal algebra have a simple direct algebraic description, cf. Subsection \ref{ss:eJA}. While the structure algebra is reductive, the conformal algebra is simple.
\vskip 10pt
Since $\frk{co}$ is a non-compact real simple Lie algebra, it
admits a Cartan involution $\theta$, unique up
to conjugations by inner automorphisms. Indeed, one can choose $\theta$ such
that
$$
\theta(X_u)=Y_u, \quad \theta(Y_u)=X_u, \quad
\theta(S_{uv})=-S_{vu}.
$$  The resulting Cartan decomposition is
$\frk{co}=\frk{u}\oplus\frk{p}$ with
$$\frk{u}=\mr{span}_{\bb R}\{[L_u, L_v], X_w+Y_w\mid u, v, w\in V\}, \quad
\frk{p}=\mr{span}_{\bb R}\{L_u, X_v-Y_v\mid u, v\in V\}.$$
Note that $\frk{u}$ is reductive with center spanned by $X_e+Y_e$ and its semi-simple part  $\bar {\frk{u}}$ is
$$ \mr{span}_{\bb R}\{[L_u, L_v], X_w+Y_w\mid u,
v, w\in ({\bb R}e)^\perp\}.$$
Sometime we need to emphasize the dependence on $V$, then we rewrite $\frk u$ as ${\frk u}(V)$. It is a fact that $\frk{str}$ and $\frk{u}$ are different real forms of the same complex reductive Lie algebra. In fact, one can identify their complexfications as follows:
\begin{eqnarray}\label{id: str=k}
[L_u, L_v] \leftrightarrow [L_u, L_v],\quad
-{i\over 2}(X_w+Y_w)  \leftrightarrow  L_w.
\end{eqnarray}
Here is a detailed summary of all real Lie algebras we have
encountered:
\begin{displaymath}
\begin{array}{|c|c|c|c|c|}
\hline
V & \frk{der} & \frk{str} & \frk{u} & \frk{co} \\
\hline
\Gamma(n) & \frk {so}(n) & \frk{so}(n,1)\oplus \bb R & \frk{so}(n+1)\oplus \frk{so}(2) & \frk{so}(n+1,2) \\
\hline
{\mathcal H}_n(\bb R)  & \frk{so}(n) & \frk{sl}(n,{\bb R})\oplus \bb R & \frk{u}(n) &  \frk{sp}(n,{\bb R}) \\
\hline
{\mathcal H}_n(\bb C)  & \frk{su}(n) & \frk{sl}(n,{\bb C})\oplus \bb R & \frk{su}(n)\oplus\frk{su}(n)\oplus \frk{u}(1) &  \frk{su}(n,n)\\
\hline
{\mathcal H}_n(\bb H) & \frk{sp}(n) &  \frk{su}^*(2n)\oplus \bb R & \frk{u}(2n)& \frk{so}^*(4n)  \\
\hline
{\mathcal H}_3(\bb O) & \frk{f}_4 & \frk{e}_{6(-26)}\oplus \bb R & \frk{e}_{6}\oplus \frk{so}(2) & \frk{e}_{7(-25)}  \\
\hline
\end{array}
\end{displaymath}

Recall that $e_{11}$ denotes the first element of a Jordan frame for $V$. The following lemma has been proved in Subsection 4.2 of Ref. \cite{meng09}.

\begin{Lem}\label{LemmaVogan}
There is a maximally compact $\theta$-stable Cartan subalgebra $\frk h$ for
$\frk{co}$, with respect to which, there is a simple root system
consisting of imaginary roots $\alpha_0$, $\alpha_1$, \ldots,
$\alpha_r$ such that, for $i\ge 1$, $\alpha_i$ is compact with $H_{\alpha_i},
E_{\pm\alpha_i}\in \bar { \frk u}^{\bb C}$,  and $\alpha_0$
is non-compact with
$$
H_{\alpha_0}=i(X_{e_{11}}+Y_{e_{11}}), \quad E_{\pm \alpha_0}={i\over
2}(X_{e_{11}}-Y_{e_{11}})\mp L_{e_{11}}.
$$  
\end{Lem}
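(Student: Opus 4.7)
The approach is to exploit that $(\frk{co},\frk u)$ is a Hermitian symmetric pair and then engineer the Cartan and positive system so that $H_{\alpha_0}$ and $E_{\pm\alpha_0}$ emerge with the prescribed form. A short TKK calculation gives $[X_e+Y_e,L_u]=Y_u-X_u$ and $[X_e+Y_e,X_u-Y_u]=4L_u$, so $\mathrm{ad}(X_e+Y_e)$ has eigenvalues $\pm 2i$ on $\frk p^{\bb C}$, from which one reads off $\frk p^{\bb C}=\frk p^+\oplus\frk p^-$ with $\frk p^\pm=\{L_u\pm\tfrac{i}{2}(X_u-Y_u)\mid u\in V^{\bb C}\}$. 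In particular $\bb R(X_e+Y_e)$ spans the one-dimensional center of $\frk u$, confirming that $(\frk{co},\frk u)$ is Hermitian; by the standard theory of real simple Lie algebras of Hermitian type, $\frk{co}$ then admits a maximally compact $\theta$-stable Cartan subalgebra $\frk h\subset\frk u$ on which every root is purely imaginary, together with a positive system containing a unique non-compact simple root.

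Next I would construct $\frk h$ explicitly so that it contains $H_{\alpha_0}$. Because $e_{ii}e_{jj}=\delta_{ij}e_{ii}$ and $[L_{e_{ii}},L_{e_{jj}}]=0$, the TKK relation $[X_u,Y_v]=-2S_{uv}$ yields $[X_{e_{ii}}+Y_{e_{ii}},X_{e_{jj}}+Y_{e_{jj}}]=0$, so the $\rho$ elements $X_{e_{kk}}+Y_{e_{kk}}\in\frk u$ span an abelian subspace $\frk t_0$. Let $\frk h\subset\frk u$ be any Cartan subalgebra extending $\frk t_0$ (for example by adjoining a Cartan subalgebra of the centralizer of $\frk t_0$ in $\bar{\frk u}$); it is automatically $\theta$-stable because $\theta|_{\frk u}=\mathrm{id}$, and $H_{\alpha_0}=i(X_{e_{11}}+Y_{e_{11}})\in\frk h^{\bb C}$.

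The third step is to verify directly that $(H_{\alpha_0},E_{\pm\alpha_0})$ is an $\frk{sl}_2$-triple. Using $[X_{e_{11}},Y_{e_{11}}]=-2L_{e_{11}}$, $[L_{e_{11}},X_{e_{11}}]=X_{e_{11}}$ and $[L_{e_{11}},Y_{e_{11}}]=-Y_{e_{11}}$ (all immediate from the TKK relations and $e_{11}^2=e_{11}$), one computes
\[
[H_{\alpha_0},E_{\pm\alpha_0}]=\pm 2\,E_{\pm\alpha_0},\qquad [E_{\alpha_0},E_{-\alpha_0}]\in\bb C^{\times}H_{\alpha_0},
\]
so $\alpha_0$ is a root of $(\frk{co}^{\bb C},\frk h^{\bb C})$. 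Since $L_u\in\frk p$ (from $L_u=S_{ue}$ and $\theta(S_{ue})=-S_{eu}=-L_u$) and $X_u-Y_u\in\frk p$, each $E_{\pm\alpha_0}$ lies in $\frk p^{\bb C}$, so $\alpha_0$ is non-compact; matching with the explicit description of $\frk p^\pm$ from the first step places $E_{\alpha_0}\in\frk p^-$ and $E_{-\alpha_0}\in\frk p^+$.

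Finally, I would pick a positive root system in which $\alpha_0$ is simple and every other simple root is compact. A lexicographic ordering on $i\frk h_{\bb R}^*$ with the functional dual to $iH_{\alpha_0}$ as first coordinate, extended compatibly on the orthogonal complement, yields the Hermitian-type positive system in which $\frk p^-$ is the span of the positive non-compact root spaces; by Harish-Chandra's theory such a positive system has exactly one non-compact simple root. The main obstacle, and the least routine step, is to verify that this simple non-compact root is \emph{our} $\alpha_0$ rather than some Weyl conjugate. This is where the euclidean Jordan algebra structure enters: the Harish-Chandra strongly orthogonal system $\{\gamma_1,\dots,\gamma_\rho\}$ associated with the Jordan frame $\{e_{11},\dots,e_{\rho\rho}\}$ consists of non-compact roots with coroots $i(X_{e_{kk}}+Y_{e_{kk}})$ and root vectors $\tfrac{i}{2}(X_{e_{kk}}-Y_{e_{kk}})\mp L_{e_{kk}}$; the chosen ordering makes $\gamma_1=\alpha_0$ the smallest positive non-compact root, which in any Hermitian positive system is automatically the unique non-compact simple root. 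The remaining simple roots $\alpha_1,\dots,\alpha_r$ are then the simple roots of the induced positive system on the compact form $\bar{\frk u}^{\bb C}$, whence $H_{\alpha_i},E_{\pm\alpha_i}\in\bar{\frk u}^{\bb C}$.
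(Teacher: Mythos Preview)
The paper does not actually prove this lemma: immediately before the statement it says ``The following lemma has been proved in Subsection~4.2 of Ref.~\cite{meng09}.'' So there is no in-paper argument to compare yours against.

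Your outline is a sound direct approach and the computations you do are correct: the TKK identities $[X_e+Y_e,L_u]=Y_u-X_u$ and $[X_e+Y_e,X_u-Y_u]=4L_u$, the commutativity of the $X_{e_{kk}}+Y_{e_{kk}}$, and the $\frk{sl}_2$-triple relations for $(H_{\alpha_0},E_{\pm\alpha_0})$ all check. Two places deserve a little more care before this becomes a complete proof. First, you show $[H_{\alpha_0},E_{\pm\alpha_0}]=\pm 2E_{\pm\alpha_0}$, but for $\alpha_0$ to be a root of $(\frk{co}^{\bb C},\frk h^{\bb C})$ you need $E_{\pm\alpha_0}$ to be eigenvectors for \emph{all} of $\frk h$, including the piece you adjoin from the centralizer of $\frk t_0$ in $\bar{\frk u}$; this follows from the Peirce decomposition (anything in $\frk{der}$ centralizing all $e_{kk}$ must preserve each $V_{ij}$ and in particular kill $e_{11}$, hence commute with $L_{e_{11}}$, $X_{e_{11}}$, $Y_{e_{11}}$), but it is not automatic from the $\frk{sl}_2$ computation alone. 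Second, your identification of the unique non-compact simple root with $\alpha_0$ via the Harish-Chandra strongly orthogonal system is the right idea, but the sentence ``$\gamma_1=\alpha_0$ is the smallest positive non-compact root, which \ldots\ is automatically the unique non-compact simple root'' is doing a lot of work; to make it airtight you should either invoke the explicit restricted-root description for tube domains (where the long non-compact roots are $\tfrac12(\gamma_i+\gamma_j)$ and the simple non-compact root is the lowest $\gamma_k$ in the chosen order) or simply pick the Borel generated by $E_{\alpha_0}$ together with a Borel of $\bar{\frk u}^{\bb C}$ and verify directly that $\alpha_0$ is simple there.
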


\subsection{The decomposition of the action of $\mr{Str}(V)$ on $\mathcal P(V)$}\label{SS: decomposition}
The structure group acts on $V$ linearly, so it acts on  $\mathcal P(V)$ --- the set of complex-valued polynomial functions on $V$. The goal here is to describe the known decomposition of the action of $\mr{Str}(V)$ on $\mathcal P(V)$ into irreducible components.

With a Jordan frame $\{e_{ii}\}$ for $V$ chosen, for $1\le k\le \rho$, we let $e[k]=e_{11}+\cdots+e_{kk}$. Denote by $V_k$ the eigenspace of $L_{e[k]}$ with eigenvalue $1$ and by $P_k$ the orthogonal projection of $V$ onto $V_k$. Then $V_k$ is a simple euclidean Jordan algebra of rank $k$ and there is a filtration of euclidean Jordan algebras:
$$
V_1\subset V_2\cdots\subset V_\rho=V.
$$ 

Let ${\bf m}\in\bb Z^\rho$. We write  ${\bf m}=(m_1, \ldots, m_\rho)$ and say that $\bf m\ge 0$ if $m_1\ge\ldots\ge m_\rho\ge 0$. Let 
$$\Delta_{\bf m}(x)=\prod_{i=1}^\rho\Delta_i(x)^{m_i-m_{i+1}},$$
here $m_{\rho+1}=0$, $\Delta_i(x)$ is the determinant of $P_i(x)$, considered as an element of $V_i$.

For ${\bf m}\ge 0$, we let ${\mathcal P}_{\bf m}(V)$\label{Pm} be the subspace of $\mathcal P(V)$ generated by the polynomials $g\cdot \Delta_{\bf m}$, $g\in \mr{Str}(V)$. The polynomials belonging to ${\mathcal P}_{\bf m}(V)$ are homogeneous of degree $|{\bf m}|=\sum m_i$, hence $\mathcal P_{\bf m}(V)$ is finite dimensional. 

\begin{thm}The subspaces $\mathcal P_{\bf m}(V)$ are mutually inequivalent irreducible as representation spaces of $\mr{Str}(V)$, and $\mathcal P(V)$ is the direct sum
$$
{\mathcal P}(V)=\bigoplus_{\bf m\ge 0}{\mathcal P}_{\bf m}(V).
$$
\end{thm}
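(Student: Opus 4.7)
My strategy is to realize each $\Delta_{\bf m}$ as a highest weight vector for the reductive group $\mr{Str}(V)$, to deduce irreducibility and pairwise inequivalence from the theorem of the highest weight, and to settle completeness of the sum by a Hilbert-series comparison after descent to $K$-invariants, where $K=\mr{Aut}(V)$ is the maximal compact subgroup of $\mr{Str}(V)$ (the stabilizer of $e$).

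The first step sets up the highest weight data. Let $P\subset\mr{Str}(V)$ be the parabolic subgroup stabilizing the filtration $V_1\subset V_2\subset\cdots\subset V_\rho$ attached to the chosen Jordan frame. Since $\Delta_i(x)$ is the Jordan determinant of $P_i(x)$ computed inside $V_i$, any $g\in P$ acts on $\Delta_i$ by a multiplicative character $\chi_i$ of $P$, namely the Jordan determinant of the induced map $V_i\to V_i$; consequently $\Delta_{\bf m}$ is a $P$-eigenvector with character $\chi_{\bf m}=\prod_i\chi_i^{m_i-m_{i+1}}$. Applied to $\mr{Str}(V)$, the theorem of the highest weight then yields that the cyclic submodule $\mathcal P_{\bf m}=\mr{Str}(V)\cdot\Delta_{\bf m}$ is irreducible with highest weight $\chi_{\bf m}$, that distinct ${\bf m}\ge 0$ produce inequivalent irreducibles (the exponents $m_i-m_{i+1}$ can be read off from $\chi_{\bf m}$), and that the sum $\sum_{{\bf m}\ge 0}\mathcal P_{\bf m}$ is automatically direct inside $\mathcal P(V)$.

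The remaining --- and most delicate --- step is to show that this direct sum fills out all of $\mathcal P(V)$. I would work with $K$-invariants. Averaging $\Delta_{\bf m}$ over $K$ produces a nonzero spherical polynomial $\Phi_{\bf m}\in \mathcal P_{\bf m}^K$, and by irreducibility of $\mathcal P_{\bf m}$ there is no further $K$-invariant line. On the other hand, the spectral resolution (every $x\in V$ is $K$-conjugate to some $\sum_i t_i e_{ii}$, uniquely up to the symmetric group $S_\rho$ permuting the $t_i$) gives a Chevalley-type isomorphism $\mathcal P(V)^K\cong \bb C[t_1,\dots,t_\rho]^{S_\rho}$, whose Hilbert series equals $\prod_{i=1}^\rho(1-q^i)^{-1}$; the same series counts the dominant $\rho$-tuples ${\bf m}\ge 0$ by $|{\bf m}|$, so $\{\Phi_{\bf m}\}_{{\bf m}\ge 0}$ is in fact a basis of $\mathcal P(V)^K$. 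Since $\bigoplus_{\bf m}\mathcal P_{\bf m}$ is a $\mr{Str}(V)$-submodule containing all $K$-invariants, and any $\mr{Str}(V)$-stable complement in $\mathcal P(V)$ would itself have to contain some $K$-spherical vector by complete reducibility of $K$-representations, I conclude $\bigoplus_{\bf m}\mathcal P_{\bf m}=\mathcal P(V)$. The principal obstacle is precisely the Chevalley-type identification $\mathcal P(V)^K\cong \bb C[t_1,\dots,t_\rho]^{S_\rho}$: it rests on the Jordan-algebraic facts that $K$ acts transitively on Jordan frames and that the residual stabilizer of a fixed frame acts on the spectral coordinates as the full symmetric group.
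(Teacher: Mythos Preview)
The paper does not prove this statement: note the custom theorem environment, which labels it as Theorem~XI.2.4 of \cite{FK94}. It is quoted as background from Faraut--Kor\'anyi with no argument supplied, so there is no in-paper proof to compare against. I can only comment on your outline on its own terms and against the standard proof.

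Your highest-weight step is correct and is essentially how the standard argument begins: $\Delta_{\bf m}$ is a semi-invariant for the triangular subgroup, hence a highest-weight vector, and the cyclic module it generates in the (completely reducible, since $\mr{Str}(V)$ is reductive) finite-dimensional space of degree-$|{\bf m}|$ polynomials is irreducible; distinct ${\bf m}$'s give distinct characters, hence inequivalent modules.

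There is, however, a real gap in your completeness argument. You assert that a nonzero $\mr{Str}(V)$-stable complement ``would itself have to contain some $K$-spherical vector by complete reducibility of $K$-representations.'' Complete reducibility of $K$ only says the complement decomposes into $K$-irreducibles; it does \emph{not} force the trivial $K$-type to occur. What you actually need is that every irreducible $\mr{Str}(V)$-submodule of $\mathcal P(V)$ is $K$-spherical, and that is not free. One clean way to supply it: the orbit $\mr{Str}(V)_0\cdot e=\Omega$ is open in $V$, so restriction gives a $\mr{Str}(V)_0$-equivariant injection $\mathcal P(V)\hookrightarrow C^\infty(\Omega)=C^\infty(\mr{Str}(V)_0/K)$, and Frobenius reciprocity then forces $W_0^K\neq 0$ for every irreducible $W_0\subset\mathcal P(V)$. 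With that patch your Hilbert-series comparison (which is correct: the Chevalley restriction $\mathcal P(V)^K\cong\bb C[t_1,\dots,t_\rho]^{S_\rho}$ follows from the spectral theorem and the fact that the trace powers $\tr(x^j)$ are $K$-invariant) finishes the job.

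For contrast, the proof in \cite{FK94} avoids $K$-invariants and works instead with the unipotent radical $N$ of the triangular group: one shows directly that $\mathcal P(V)^N=\bb C[\Delta_1,\dots,\Delta_\rho]$, so every highest-weight line in $\mathcal P(V)$ is already some $\bb C\,\Delta_{\bf m}$, and completeness is immediate from reductivity. Your $K$-invariant route is a legitimate alternative once the sphericity step is properly justified.
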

Since ${\mathcal P}(V)$ consists of complex-valued polynomials on $V$, the representations in this theorem  naturally extends to the complxification of $\mr{Str}(V)$. Using the identification in Eq. (\ref{id: str=k}), these representations naturally becomes
representations of $\frk{u}$. 

From here on, as representations of $ \frk u$, ${\mathcal P}(V)$ and ${\mathcal P}_{\bf m}(V)$ \label{idrep} shall always be viewed in this sense. For later use, we use $\xi_\nu$ to denote the one-dimensional representation $\bb C$ of $ \frk u$ such that
$-{i\over 2}(X_e+Y_e)$ acts as the scalar multiplication by $-\nu{\rho\over 2}$.

\section{Realizations Of TKK Algebras}\label{Realizations of TKK algebras}
The goal of this section is to realize the TKK algebras. The results and their presentations here are strongly influenced by the thinking/practice in physics.  Although our perspective is different, we don't claim any originality here, because most (maybe all) of the materials presented here should be known to the experts in one area or another area. 

We start with the classical realization on symplectic space $T^*V$, from which the operator realizations follow
via the straightforward canonical quantization. Due to the operator ordering ambiguity, we get a family of operator realizations, parametrized by a real parameter $\nu$. The case $\nu={2n/ \rho}$ is well-known to physicists, cf. Ref. \cite{MG93}. The case $\nu > 1+(\rho-1)\delta$ has been worked out by M. Aristidou, M. Davidson and G. \'{O}lafsson \cite{ADO06} by an indirect method.

\subsection{The classical realization of TKK algebras}
As is well-known, the total cotangent space $T^*V$ is a natural symplectic
space. By virtue of the euclidean metric $ds^2_E$ on $V$, one can identify
$T^*V$ with the total tangent space $TV$.  Now the tangent bundle and cotangent
bundle of $V$ both have a natural trivialization, with respect to which, one can
denote an element of $T^*V$ by $(x,p)$ and its corresponding element in $TV$ by
$(x,\pi)$.  We fix an orthonormal basis $\{e_\alpha\}$ for $V$ so that we can write
$x=x^\alpha e_\alpha$ and $\pi=\pi^\alpha e_\alpha$. Then the basic Poisson bracket
relations on $TV$ are $\{x^\alpha, \pi^\beta\}=\delta^{\alpha\beta}$, $\{x^\alpha, x^\beta\}=0$, and $\{\pi^\alpha, \pi^\beta\}=0$.

Introduce the moment functions
\begin{eqnarray}\label{MomentMap}
{\mathcal S}_{uv} :=\langle S_{uv}(x)\mid \pi\rangle, \quad {\mathcal X}_u: =
\langle x\mid \{\pi u\pi \}\rangle, \quad {\mathcal Y}_v: =\langle x\mid
v\rangle
\end{eqnarray} on $TV$. The following theorem would be well-known to experts on
Jordan algebra.
\begin{Thm}\label{Classical Hidden Symmetry}
As polynomial functions on $TV$, ${\mathcal S}_{uv}$, ${\mathcal X}_u$ and
${\mathcal Y}_v$ satisfy
the following Poisson bracket relations: for any $u$, $v$, $z$ and $w$ in $V$, 
\begin{eqnarray}\label{PoissonR}
\left\{\begin{matrix}
\{\mathcal X_u, \mathcal X_v\}=0, \quad \{\mathcal Y_u, \mathcal Y_v\}=0, \quad
\{\mathcal X_u,
\mathcal Y_v\} = -2\mathcal S_{uv},\cr\\ \{\mathcal S_{uv},
\mathcal X_z\}=\mathcal X_{\{uvz\}}, \quad \{\mathcal S_{uv}, \mathcal
Y_z\}=-\mathcal Y_{\{vuz\}},\cr\\
\{\mathcal S_{uv}, \mathcal S_{zw}\} = \mathcal S_{\{uvz\}w}-\mathcal
S_{z\{vuw\}}.
\end{matrix}\right.
\end{eqnarray}
\end{Thm}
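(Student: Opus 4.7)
The plan is to verify the six Poisson bracket relations in (\ref{PoissonR}) by direct computation in the coordinates $(x^\alpha, \pi^\alpha)$ furnished by an orthonormal basis $\{e_\alpha\}$, relying on three ingredients already in the paper: the basic Poisson brackets $\{x^\alpha, \pi^\beta\} = \delta^{\alpha\beta}$, the adjoint identity $S_{uv}' = S_{vu}$, and the structure-algebra bracket $[S_{uv}, S_{zw}] = S_{\{uvz\}w} - S_{z\{vuw\}}$. The symmetry $\{uvw\} = \{wvu\}$ of the Jordan triple product in its outer slots, which follows from the explicit expression $\{uvw\} = u(vw) - v(uw) + (uv)w$ together with commutativity, is also used freely.

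The routine brackets come first. Clearly $\{\mathcal{Y}_u, \mathcal{Y}_v\} = 0$ since both sides depend only on $x$. Polarizing $\{\pi u \pi\}$ yields $\partial_{\pi^\alpha}\mathcal{X}_u = 2\langle x, \{e_\alpha u\pi\}\rangle$, and a one-line calculation (plus $S_{vu}' = S_{uv}$) gives $\{\mathcal{X}_u, \mathcal{Y}_v\} = -2\langle x, \{vu\pi\}\rangle = -2\mathcal{S}_{uv}$. The two brackets $\{\mathcal{S}_{uv}, \mathcal{Y}_z\}$ and $\{\mathcal{S}_{uv}, \mathcal{S}_{zw}\}$ both reduce to the standard fact that, for $T\in\mr{End}(V)$, the function $\mathcal{T}(x,\pi) := \langle T(x), \pi\rangle$ satisfies $\{\mathcal{T}_1, \mathcal{T}_2\} = \mathcal{T}_{[T_1, T_2]}$ and $\{\mathcal{T}, \mathcal{Y}_z\} = -\mathcal{Y}_{T'(z)}$; specializing to $T = S_{uv}$ and invoking the structure-algebra bracket and the adjoint identity gives the required answers.

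For $\{\mathcal{S}_{uv}, \mathcal{X}_z\}$ I would use that $\mathcal{S}_{uv}$ generates the cotangent lift of the linear vector field $x\mapsto S_{uv}(x)$, whose flow is $\dot x = S_{uv}(x)$, $\dot\pi = -S_{vu}(\pi)$. The product rule and outer-slot symmetry yield
\[
\{\mathcal{S}_{uv}, \mathcal{X}_z\} = \langle x,\; 2\{S_{vu}(\pi),z,\pi\} - S_{vu}\{\pi z\pi\}\rangle.
\]
I would then rewrite this as $\mathcal{X}_{\{uvz\}}$ by invoking the ``derivation identity''
\[
S_{vu}\{\pi z\pi\} = 2\{S_{vu}(\pi),z,\pi\} - \{\pi, S_{uv}(z), \pi\},
\]
which is just the structure-algebra bracket $[S_{vu}, S_{\pi z}] = S_{\{vu\pi\}z} - S_{\pi\{uvz\}}$ applied to $\pi$; no new Jordan identity is needed.

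The potentially delicate relation is $\{\mathcal{X}_u, \mathcal{X}_v\} = 0$, but given the five relations above it follows for free from the Jacobi identity for the Poisson bracket. Applying Jacobi to $\mathcal{X}_u$, $\mathcal{X}_v$, $\mathcal{Y}_w$, substituting the already-verified brackets, and using $\{uwv\} = \{vwu\}$ produces $\{\mathcal{Y}_w, \{\mathcal{X}_u, \mathcal{X}_v\}\} = 0$ for every $w\in V$. Since $\{\mathcal{Y}_w, f\} = w\cdot\partial_\pi f$, this forces $\partial_{\pi^\alpha}\{\mathcal{X}_u, \mathcal{X}_v\} = 0$ for all $\alpha$. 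But $\{\mathcal{X}_u, \mathcal{X}_v\}$ is homogeneous of degree $3$ in $\pi$ (as the Poisson bracket of two bidegree-$(1,2)$ functions has bidegree $(1,3)$), so Euler's relation forces $\{\mathcal{X}_u, \mathcal{X}_v\} \equiv 0$, completing the proof.
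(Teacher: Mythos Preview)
Your proof is correct. For the five ``easy'' relations your argument coincides with the paper's: both compute $\{\mathcal Y_u,\mathcal Y_v\}$, $\{\mathcal X_u,\mathcal Y_v\}$, $\{\mathcal S_{uv},\mathcal Y_z\}$, $\{\mathcal S_{uv},\mathcal S_{zw}\}$, and $\{\mathcal S_{uv},\mathcal X_z\}$ directly, the last one via the same identity $S_{vu}\{\pi z\pi\}=2\{S_{vu}(\pi),z,\pi\}-\{\pi,\{uvz\},\pi\}$ obtained by applying $[S_{vu},S_{\pi z}]=S_{\{vu\pi\}z}-S_{\pi\{uvz\}}$ to $\pi$.

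The one genuine difference is the vanishing of $\{\mathcal X_u,\mathcal X_v\}$. The paper computes this bracket directly: it first shows $\{\mathcal X_u,\mathcal X_v\}=2\langle x\mid[S_{\pi v},S_{\pi u}]\pi\rangle$, then expands each commutator via the structure-algebra relation and uses $\{u\pi v\}=\{v\pi u\}$ to rewrite the result as $-\langle x\mid[S_{\pi v},S_{\pi u}]\pi\rangle$, so that the quantity equals its own negative half and hence vanishes. Your route is instead to bootstrap from the five relations already established: Jacobi for $(\mathcal X_u,\mathcal X_v,\mathcal Y_w)$ together with $\{uwv\}=\{vwu\}$ gives $\{\mathcal Y_w,\{\mathcal X_u,\mathcal X_v\}\}=0$ for all $w$, and since $\{\mathcal X_u,\mathcal X_v\}$ is homogeneous of degree $3$ in $\pi$ this forces it to vanish. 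Both arguments ultimately rest on the same outer-slot symmetry of the triple product; the paper's is a self-contained computation with a neat ``equals its own negative'' trick, while yours avoids any further manipulation of triple products by exploiting the Poisson structure and bidegree, which is a nice shortcut once the other five relations are in hand.
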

\begin{proof} It is clear that $ \{\mathcal Y_u, \mathcal Y_v\}=0$. 
\begin{eqnarray}
\{\mathcal X_u, \mathcal Y_v\} &=&\{\langle x\mid \{\pi u\pi \}\rangle, \langle
x\mid v\rangle \}\cr
&=& -2 \langle x\mid \{v u\pi \}\rangle =-2 \langle S_{uv}(x)\mid \pi\rangle\cr
&=& -2\mathcal S_{uv}.\nonumber
\end{eqnarray}
\begin{eqnarray}
\{\mathcal S_{uv}, \mathcal Y_z\} &=&\{\langle S_{uv}(x)\mid \pi\rangle, \langle
x\mid z\rangle \}\cr
&=&-\langle S_{uv}(x)\mid z\rangle=-\langle x\mid \{vuz\}\rangle\cr
&=& -\mathcal Y_{\{vuz\}}.\nonumber
 \end{eqnarray}
 \begin{eqnarray}
\{\mathcal S_{uv}, \mathcal S_{zw}\} &=&\{\langle S_{uv}(x)\mid \pi\rangle,
\langle S_{zw}(x)\mid \pi\rangle \}\cr
&=&\langle S_{uv}S_{zw}(x)\mid \pi\rangle -\langle S_{zw}S_{uv}(x)\mid \pi\rangle\cr
&=& \langle [S_{uv},S_{zw}](x)\mid \pi\rangle=\langle (S_{\{uvz\}w}-
S_{z\{vuw\}})(x)\mid \pi\rangle\cr
&=&\mathcal S_{\{uvz\}w} - \mathcal S_{z\{vuw\}}.\nonumber
 \end{eqnarray}
\begin{eqnarray}
\{\mathcal S_{uv}, \mathcal X_z\} &=&\{\langle S_{uv}(x)\mid \pi\rangle, \langle
x\mid \{\pi z\pi\}\rangle \}\cr
&=&-\langle S_{uv}(x)\mid \{\pi z\pi\}\rangle + 2\langle x\mid \{\pi
z\{vu\pi\}\}\rangle\cr
&=& \langle x\mid 2S_{\pi z}S_{vu}(\pi)-S_{vu}S_{\pi z}(\pi)\}\rangle\cr
&= &  \langle x\mid S_{\pi z}S_{vu}(\pi)-[S_{vu},S_{\pi z}](\pi)\}\rangle\cr
&=& \langle x\mid S_{\pi z}(\{vu\pi\})-S_{\{vu\pi\}
z}(\pi)+S_{\pi\{uvz\}}(\pi)\}\rangle\cr
&=& \langle x\mid \{\pi\{uvz\}\pi\}\rangle\cr
&=& \mathcal X_{\{uvz\}}.\nonumber
\end{eqnarray}
Finally, 
\begin{eqnarray}
\{\mathcal X_u, \mathcal X_v\} &=&\{\langle x\mid \{\pi u\pi \}\rangle, \langle
x\mid \{\pi v \pi\rangle \}\cr
&=& 2 \langle x\mid \{\pi v \{\pi u\pi\}\}\rangle- 2\langle x\mid \{\pi u
\{\pi v\pi\}\}\rangle\cr
&=& 2\langle x\mid [S_{\pi v},S_{\pi u}]\pi\rangle\\
&=& \langle x\mid [S_{\pi v},S_{\pi u}]\pi\rangle-\langle x\mid [S_{\pi u},S_{\pi v}]\pi\rangle\cr
&=&  \langle x\mid S_{\{\pi v\pi\}u}\pi-S_{\pi\{v\pi u\}}\pi\rangle- \langle x\mid S_{\{\pi u\pi\}v}\pi-S_{\pi\{u\pi v\}}\pi\rangle\cr
&=&  \langle x\mid S_{\{\pi v\pi\}u}\pi\rangle- \langle x\mid S_{\{\pi u\pi\}v}\pi\rangle\quad \mbox{because  $\{u\pi v\}=\{v\pi u\}$}\cr&=& - \langle x\mid [S_{\pi v},S_{\pi u}]\pi\rangle\\
&=&0,\nonumber
\end{eqnarray}
because it is equal to the negative half of itself.
\end{proof}

\subsection{The operator realizations of TKK Algebras }
The canonical quantization involves promoting classical physical variables $\mathcal O$ to
differential operators $\hat  O$ (or the duals
$\check O$) using recipe: $\pi_\alpha\to -i{\partial \over \partial
x^\alpha}$ (or $x_\alpha\to i{\partial \over \partial \pi^\alpha}$). Here is a
word of warning:  in order to get anti-hermitian differential operators in the end, instead of
using the quantized differential operators, we actually use the quantized differential operators multiplied
by $-i$.

For simplicity we write $\sum_\alpha e_\alpha {\partial \over \partial x^\alpha}$ as
${/\hskip -5pt \partial}$ and $\sum_\alpha e_\alpha {\partial \over \partial \pi^\alpha}$ as
${\backslash    
 \hskip -6pt \partial}$. We introduce differential operators on $V$:
\begin{eqnarray}
{\hat S}_{uv} :=-\langle S_{uv}(x)\mid {/\hskip -5pt \partial}\rangle,\quad
{\hat X}_u : = i \langle x\mid \{{/\hskip -5pt \partial} u {/\hskip -5pt
\partial} \}\rangle, \quad
{\hat Y}_v := -i\langle x\mid v\rangle.
\end{eqnarray}
and differential operators on $V^*$:
\begin{eqnarray}
{\check S}_{uv} :=\langle S_{vu}(\pi)\mid {\backslash    
 \hskip -6pt \partial}\rangle,\quad
{\check X}_u : =  \langle \{\pi u \pi\}\mid {\backslash    
 \hskip -6pt \partial}\rangle, \quad
{\check Y}_v := \langle v\mid {\backslash    
 \hskip -6pt \partial}\rangle.
\end{eqnarray}
It is easy to see that the TKK commutation relations (\ref{TKKRel}) hold when all $O$ there are replaced by either their hat version or their check version. Note that the check version is well-known to physicists, cf. Ref. \cite{MG93}. However, the quantization has ambiguity because of the operator ordering problem.
To get the general version of quantization, we let $\nu$ be a real parameter and
introduce differential operators on $V$:
\begin{eqnarray}\label{OperatorR}
{\acute S}_{uv}(\nu) :=\hat S_{uv}-{\nu\over 2}\tr (uv),\quad
{\acute X}_u (\nu): = \hat X_u+i\nu\tr(u{/\hskip -5pt \partial}), \quad
{\acute Y}_v(\nu) :=\hat Y_v. 
\end{eqnarray}
 and differential operators on $V^*$:
 \begin{eqnarray}
{\grave S}_{uv}(\nu) :=\check S_{uv}-{\nu^*\over 2}\tr (uv),\quad
{\grave X}_u (\nu): = \check X_u-\nu^*\tr(u\pi), \quad
{\grave Y}_v(\nu) :=\check Y_v.\end{eqnarray}
where $\nu^*=\nu-{2n\over \rho}$.
\begin{Thm}\label{universalQ} 
 The TKK commutation relations (\ref{TKKRel}) still hold when all $O$
there are replaced by either their acute version or their grave version. 
\end{Thm}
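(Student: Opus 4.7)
The plan is to leverage the already-noted fact that the hat and check operators each satisfy the TKK relations, reducing the problem to tracking the $\nu$-dependent corrections. Writing $\acute O(\nu) = \hat O + \delta_\nu O$ with $\delta_\nu S_{uv} = -\tfrac{\nu}{2}\tr(uv)$, $\delta_\nu X_u = i\nu\tr(u{/\hskip -5pt \partial})$, and $\delta_\nu Y_v = 0$, I would expand each of the six TKK brackets on the acute side and show that the correction terms either vanish individually or combine to produce precisely the $\delta_\nu$-contribution on the right-hand side.

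Three of the relations are essentially immediate. The identity $[\acute Y_u,\acute Y_v]=0$ holds because $\delta_\nu Y=0$. For $[\acute S_{uv},\acute Y_z]$ and $[\acute S_{uv},\acute S_{zw}]$, the scalar nature of $\delta_\nu S_{uv}$ makes everything collapse to the hat-side relations, and what remains is to confirm that the scalar pieces match on both sides; for the $SS$-relation this requires $\tr(\{uvz\}w)=\tr(z\{vuw\})$, an immediate consequence of the self-adjointness identity $S_{uv}'=S_{vu}$ together with $\tr(ab)=\rho\langle a\mid b\rangle$. For $[\acute X_u,\acute Y_v]=-2\acute S_{uv}$, the cross term $i\nu[\tr(u{/\hskip -5pt \partial}),\hat Y_v]$ computes via the canonical commutation relation to $\nu\tr(uv)$, which is exactly what is needed to upgrade $-2\hat S_{uv}$ to $-2\acute S_{uv}$.

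The two substantive relations are $[\acute S_{uv},\acute X_z]=\acute X_{\{uvz\}}$ and $[\acute X_u,\acute X_v]=0$. For the former, the key identity is $[\hat S_{uv},\tr(z{/\hskip -5pt \partial})] = \tr(\{uvz\}{/\hskip -5pt \partial})$, which I would establish by viewing $\hat S_{uv}$ as (minus) the linear vector field on $V$ generating the infinitesimal action of $S_{uv}\in\frk{str}(V)$ and $\tr(z{/\hskip -5pt \partial})=\rho D_z$ as the constant vector field in direction $\rho z$; the Lie bracket of the two returns the constant vector field in direction $\rho S_{uv}(z)=\rho\{uvz\}$, as required. For $[\acute X_u,\acute X_v]=0$, expansion leaves the obstruction $i\nu\bigl([\hat X_u,\tr(v{/\hskip -5pt \partial})]-[\hat X_v,\tr(u{/\hskip -5pt \partial})]\bigr)$, and a direct coordinate computation gives $[\hat X_u,\tr(v{/\hskip -5pt \partial})]=-i\rho\sum_{\alpha,\beta}\langle v\mid\{e_\alpha u e_\beta\}\rangle \partial_\alpha\partial_\beta$; the Jordan-triple symmetry $\{pqr\}=\{rqp\}$ (checked directly from $\{pqr\}=(pq)r+p(qr)-q(pr)$ and commutativity of $V$) makes this expression symmetric in $u$ and $v$, so the two contributions cancel.

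The grave case is handled by the same pattern applied to the check realization: one decomposes $\grave O(\nu)=\check O+\delta'_{\nu^*} O$ where the corrections are now multiplication operators $-\nu^* \tr(u\pi)$ in place of derivatives, and the required commutator identities are structurally identical, with the shift $\nu\mapsto \nu^*=\nu-2n/\rho$ absorbing the reordering of $x$'s and $\partial$'s needed to pass between $\hat O$ and $\check O$. I expect the main obstacle to be the $[\acute X_u,\acute X_v]=0$ identity: both $\hat X$'s are second-order with nonconstant coefficients, and the vanishing of the commutator obstruction rests entirely on the Jordan-theoretic symmetry $\{pqr\}=\{rqp\}$, which must be applied at precisely the right place; everything else is essentially bookkeeping.
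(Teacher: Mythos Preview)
Your proposal is correct and follows essentially the same route as the paper: both arguments reduce to the $\nu=0$ (hat/check) case and then track the $\nu$-dependent corrections relation by relation, invoking $S_{uv}'=S_{vu}$ for the $SS$-bracket and the Jordan-triple symmetry $\{pqr\}=\{rqp\}$ for the vanishing of $[\acute X_u,\acute X_v]$. The only cosmetic difference is that you phrase the $[\hat S_{uv},\tr(z{/\hskip -5pt\partial})]$ step in vector-field language and write the $XX$ cross term in coordinates, whereas the paper keeps both computations coordinate-free; the substance is identical.
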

\begin{proof} 
When $\nu=0$, the proof is essentially the same as the proof of Theorem
\ref{Classical Hidden Symmetry}, so we skip it. For the general case, we shall verify the acute
version and leave the grave version to the readers.

Verify that $[\acute Y_u(\nu), \acute Y_v(\nu)]=0$: 
$$[\acute Y_u(\nu), \acute Y_v(\nu)]= [\hat Y_u, \hat Y_v]=0.$$

Verify that $[\acute S_{uv}(\nu),
\acute Y_z(\nu)]=-\acute Y_{\{vuz\}}(\nu)$: 
$$[\acute S_{uv}(\nu), \acute Y_z(\nu)] = [\hat S_{uv},\hat Y_z]=-\hat
Y_{\{vuz\}}=-\acute Y_{\{vuz\}}(\nu).$$

Verify that $[\acute S_{uv}, \acute S_{zw}]=\acute S_{\{uvz\}w}-\acute
S_{z\{vuw\}}$: 
Since $[\hat S_{uv}, \hat
S_{zw}]=\hat S_{\{uvz\}w}-\hat S_{z\{vuw\}}$, all we need to
check is that
$$
\tr(\{uvz\}w)=\tr(z\{vuw\}),\quad\mbox{i.e.,}\quad \langle S_{uv}(z)\mid w
\rangle= \langle z\mid S_{vu}(w) \rangle
$$ which is true because $S_{uv}'=S_{vu}$.

Verify that $[\acute X_u(\nu), \acute
Y_v]=-2\acute S_{uv}$: 
\begin{eqnarray}
[\acute X_u(\nu), \acute
Y_v(\nu)] &=&[\hat X_u+i\nu\tr(u{/\hskip -5pt \partial}), \hat
Y_v]\cr 
&=& -2\hat S_{uv}+\nu\tr(uv)\cr 
&= &-2\acute
S_{uv}(\nu).\nonumber
\end{eqnarray}

Verify that $[\acute S_{uv}(\nu), \acute
X_z(\nu)]=\acute X_{\{uvz\}}(\nu)$:
\begin{eqnarray}
[\acute S_{uv}(\nu), \acute
X_z(\nu)] &=&[\hat S_{uv}(\nu), \hat X_z+i\nu\tr(z{/\hskip -5pt \partial})]\cr 
&=& \hat X_{\{uvz\}}+i\rho\nu[\hat S_{uv}, \langle z\mid {/\hskip -5pt
\partial}\rangle]\cr
&=& \hat X_{\{uvz\}}+i\rho\nu\langle S_{uv}(z)\mid {/\hskip -5pt
\partial}\rangle\cr
&=&  \acute X_{\{uvz\}}(\nu) .\nonumber
\end{eqnarray}

Verify that $[\acute X_u(\nu), \acute X_v(\nu)]=0$: 
\begin{eqnarray}
[\acute X_u(\nu), \acute X_v(\nu)]&=& [\hat X_u+i\rho\nu\langle u\mid {/\hskip
-5pt \partial}\rangle, \hat X_v++i\rho\nu\langle v\mid {/\hskip -5pt
\partial}\rangle]\cr
&=&i\rho\nu[\hat X_u, \langle v\mid {/\hskip -5pt \partial}\rangle] - <u\leftrightarrow v>\cr
&=&-\rho\nu[\langle x\mid \{{/\hskip -5pt \partial}u{/\hskip -5pt
\partial}\}\rangle, \langle v\mid {/\hskip -5pt \partial}\rangle] -
<u\leftrightarrow v>\cr
&=&\rho\nu\langle v\mid \{{/\hskip -5pt \partial}u{/\hskip -5pt
\partial}\}\rangle - <u\leftrightarrow v>\cr
&=&\rho\nu \langle \{u{/\hskip -5pt \partial} v\}\mid {/\hskip -5pt
\partial}\rangle -
<u\leftrightarrow v>\cr
&=&0.\nonumber
\end{eqnarray}
Here,  $<u\leftrightarrow v>$ means the term same as the one on the left except that $u$ and $v$ are interchanged.

\end{proof}
In the remainder of this paper, let us focus the attention on the operator realization on $V$:
$O\to\acute O(\nu)$ for a fixed $\nu$. Note that this operator realization provides a linear action of the TKK algebra on $C^\infty(V)$. We shall investigate the unitarity of this action in the next section. In order to do that, let us make some preparations here.

Let $\mathcal P(V)$ be the algebra of $\bb C$-valued polynomial functions on $V$, and $\mathcal P_I(V)$  be the vector subspace consisting of polynomials of degree at most $I$. Let
\begin{eqnarray}\acute D(V)= e^{-r}\mathcal P(V), \quad \acute D_I(V)= e^{-r}\mathcal P_I(V).
\end{eqnarray}
It is clear that the action of $\frk{co}$ on $C^\infty(V)$, which maps $O$ to $\acute O(\nu)$,  leaves $\acute D(V)$ invariant. Let us denote by $\pi_\nu$ this action on $\acute D(V)$.

Recall that $\nu$ is a real parameter and $\frk{u}$ is the maximal compact Lie subalgebra of $\frk{co}$.
\begin{Thm} \label{universalQ2} Let $H_e = i( X_e+ Y_e)$, $\pi_\nu|_{ \frk u}$ be the restriction of $\pi_\nu$ from $\frk{co}$ to $\frk{u}$.

i) $\pi_\nu|_{ \frk u}$ leaves $\acute D_I(V)$ invariant and commutes with the inclusion of $\acute D_{I-1}(V)$ into  $\acute D_I(V)$, consequently it induces a linear action on ${\acute D}_I(V)/{\acute D}_{I-1}(V)$. 

ii) As a representation of $\frk{u}$, we have isomorphism
\begin{eqnarray}
\begin{array}{ccc}
\xi_\nu\otimes\bigoplus_{{\bf m}\ge 0}^{|{\bf m}|=I} {\mathcal P}_{ {\bf m}}(V) &\to & {\acute D}_I(V)/{\acute D}_{I-1}(V)\\
\\
1\otimes p &\mapsto &e^{-r}p+{\acute D}_{I-1}(V).\nonumber
\end{array}
\end{eqnarray}

iii) The induced linear map $$\acute H_e:  \quad{\acute D}_I(V)/{\acute D}_{I-1}(V)\to {\acute D}_I(V)/{\acute D}_{I-1}(V)$$
is the scalar multiplication by $(2I+\nu\rho)$.

iv) $\pi_\nu$ is unitarizable $\implies$ $\nu \ge 0$.

v) $e^{-r}$ is a lowest weight state with
weight $\nu\lambda_0$.

vi) $\pi_\nu$ is indecomposable.

\end{Thm}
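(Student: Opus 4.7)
All six parts flow from a single bookkeeping of the filtration $\acute D_0(V)\subset \acute D_1(V)\subset\cdots$ under the two generating families $\{[L_u,L_v]\}$ and $\{X_w+Y_w\}$ of $\frk u$, together with the identification $\frk u^{\bb C}\cong \frk{str}^{\bb C}$ of Subsection \ref{SS: decomposition}. The technical heart is a cancellation of the apparent top-degree term in $(\acute X_w+\acute Y_w)(\nu)(e^{-r}p)$; this is not a formal consequence of the TKK relations but uses the Jordan identities $\{eue\}=u$ and $\sum_\alpha\langle e|e_\alpha\rangle e_\alpha=e$. Once this cancellation is established, (i) is immediate, (ii) and (iii) come from reading off the graded action, (v) and (vi) are short verifications at $p=1$, and (iv) is a one-line Shapovalov-type estimate.

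\textbf{Parts (i)--(iii).} Since $[L_u,L_v]e=0$, the relation $\acute S_{uv}(\nu)-\acute S_{vu}(\nu)=-2\partial_{[L_u,L_v]x}$ yields $(\acute S_{uv}-\acute S_{vu})(\nu)(e^{-r}p)=2e^{-r}\cdot[L_u,L_v]\cdot p$, with the dot denoting the natural $\frk{str}$-action on $\mathcal P(V)$. For $\acute X_w(\nu)(e^{-r}p)$, applying the product rule to $\partial_\alpha e^{-r}=-\langle e|e_\alpha\rangle e^{-r}$ and using the Jordan identities above shows the would-be degree-$(I{+}1)$ piece is exactly $i\langle x|w\rangle e^{-r}p$, which is cancelled by $\acute Y_w(\nu)(e^{-r}p)=-i\langle x|w\rangle e^{-r}p$; this proves (i). Modulo $\acute D_{I-1}(V)$ the induced action becomes $e^{-r}p\mapsto e^{-r}[L_w\cdot p-\nu{\rho\over 2}\langle e|w\rangle p]$, which under the identification $-{i\over 2}(X_w+Y_w)\leftrightarrow L_w$ is the action of $L_w$ on $\xi_\nu\otimes\bigoplus_{|{\bf m}|=I}\mathcal P_{\bf m}(V)$ (using $L_w\equiv\langle e|w\rangle L_e$ modulo $[\frk{str},\frk{str}]$ and $L_e$ acting on $\xi_\nu$ as the scalar $-\nu\rho/2$); combined with the decomposition theorem of Subsection \ref{SS: decomposition} this gives (ii). Specializing $w=e$ makes $L_e=\mr{id}_V$ act as multiplication by $-I$ on homogeneous degree-$I$ polynomials, whence $\acute H_e=-2\cdot(-{i\over 2}(X_e+Y_e))$ acts on the quotient as $2I+\nu\rho$, proving (iii).

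\textbf{Parts (iv)--(vi).} Evaluating the preceding formulas at $p=1$ yields $(\acute X_w+\acute Y_w)(\nu)e^{-r}=-i\nu\rho\langle e|w\rangle e^{-r}$ and $(\acute S_{uv}-\acute S_{vu})(\nu)e^{-r}=0$, so $e^{-r}$ spans a $1$-dimensional $\frk u$-submodule isomorphic to $\xi_\nu$; Lemma \ref{LemmaVogan} then identifies its weight as $\nu\lambda_0$. For (v) it remains to show that $\frk p^+:=\mathrm{span}\{L_u+{i\over 2}(X_u-Y_u):u\in V^{\bb C}\}$ annihilates $e^{-r}$, which follows by direct addition of $\acute S_{ue}(\nu)e^{-r}=e^{-r}[\langle x|u\rangle-\nu{\rho\over 2}\langle u|e\rangle]$ and $(i/2)(\acute X_u-\acute Y_u)(\nu)e^{-r}=-e^{-r}[\langle x|u\rangle-\nu{\rho\over 2}\langle u|e\rangle]$. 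For (iv), take $Z=L_u-{i\over 2}(X_u-Y_u)\in\frk p^-$ with $u\in V$ real; any unitarization satisfies $\pi(Z)^*=-\pi(\bar Z)$ with $\bar Z\in\frk p^+$ killing $e^{-r}$, so $\|\pi(Z)e^{-r}\|^2=-\langle e^{-r},\pi([\bar Z,Z])e^{-r}\rangle$. The bracket $[\bar Z,Z]=-i(X_{u^2}+Y_{u^2})$, computed from $[L_u,X_u]=X_{u^2}$ and $[L_u,Y_u]=-Y_{u^2}$, acts on $e^{-r}$ as the scalar $-\nu\rho\|u\|^2$, giving $\|\pi(Z)e^{-r}\|^2=\nu\rho\|u\|^2\|e^{-r}\|^2$; positivity for all $u\ne 0$ forces $\nu\ge 0$. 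Finally, (vi) holds because $e^{-r}$ is cyclic --- iteration of $\acute Y_u=-i\langle x|u\rangle$ exhausts $e^{-r}\mathcal P(V)=\acute D(V)$ --- and by (iii) its weight $\nu\lambda_0$ has multiplicity one; any nontrivial direct-sum decomposition $\pi_\nu=M_1\oplus M_2$ would place $e^{-r}$ entirely in one summand and, by cyclicity, identify that summand with the whole of $\pi_\nu$.
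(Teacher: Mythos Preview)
Your proof is correct and follows essentially the same strategy as the paper's: conjugating by $e^{-r}$ to see that $\frk u$ preserves the polynomial filtration, reading off the graded action to get (ii) and (iii), checking lowest-weight-ness by direct evaluation on $e^{-r}$, and invoking cyclicity of $e^{-r}$ under the $\acute Y_u$'s. Your computations for the top-degree cancellation and for the annihilation $\bigl(\acute S_{ue}(\nu)+{i\over 2}(\acute X_u-\acute Y_u)(\nu)\bigr)e^{-r}=0$ agree with the paper's Eq.~(\ref{identificationR}) and its explicit formulas for $e^r\acute E_{-\alpha_0}e^{-r}$ and $e^r\acute H_{\alpha_0}e^{-r}$.

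Two minor remarks worth flagging. First, your subspace $\mathrm{span}\{L_u+{i\over 2}(X_u-Y_u)\}$ actually contains the paper's $E_{-\alpha_0}$, so labelling it $\frk p^+$ is nonstandard relative to the paper's (and the usual) convention; it would be clearer to call it $\frk n^-$ or simply ``the span of the non-compact lowering operators.'' Second, your argument for (vi) is in fact more complete than the paper's: the paper stops at cyclicity of $e^{-r}$, which by itself does not exclude decomposability, whereas you add the observation (from (iii)) that the $\acute H_e$-eigenspace with eigenvalue $\nu\rho$ is one-dimensional, forcing $e^{-r}$ to lie in a single summand of any putative decomposition. For (iv) you run the Shapovalov-type estimate with a general $u\in V$ and compute $[\bar Z,Z]=-i(X_{u^2}+Y_{u^2})$; the paper does the special case $u=e$ via $E_\pm={i\over 2}(X_e-Y_e)\mp S_{ee}$, but the two arguments are the same in spirit.
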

\begin{proof}
i) Since
\begin{eqnarray}\label{identificationR}
e^r {-i\over 2}(\acute X_u+\acute Y_u) e^{-r}={1\over 2}(\langle x\mid \{{/\hskip -5pt \partial}u{/\hskip -5pt \partial}\}\rangle+\nu\tr (u{/\hskip -5pt \partial}))+\hat
L_u-{\nu\over 2}\tr u,
\end{eqnarray}
$(\acute X_u+\acute Y_u)$ maps $\acute D_I$ into $\acute D_I$. It is also clear that $[\acute L_u, \acute L_v]$ maps $\acute D_I$
into $\acute D_I$. Therefore, in view of the fact that $$\frk{u}=\mr{span}_{\bb R}\{[L_u, L_v], X_w+Y_w\mid u, v, w\in
V\},$$
the linear action of $\frk{u}$
on $\acute D$ leaves $\acute D_I$ invariant. The rest is clear.

ii) That is clear from Eq. (\ref{identificationR}) and the last paragraph of Subsection \ref{SS: decomposition}. 

iii) For any homogeneous degree $I$ polynomial $p$, we have
\begin{eqnarray}
i(\acute X_e+\acute Y_e) e^{-r} p &\equiv &e^{-r} (-2\hat L_e+\nu\rho)p \mod
\acute D_{I-1}\cr
&\equiv & (2I+\nu\rho)e^{-r}p \mod \acute D_{I-1}.\nonumber
\end{eqnarray}

iv) Let $E_{\pm} : = {i\over 2}( X_e- Y_e)\mp S_{ee}$. Then 
$$
 [H_e, E_\pm]=\pm2 E_\pm,\quad [E_+, E_-]=-H_e.
$$ 
Suppose that $\pi_\nu$ is unitarizable, and $(,)$ is the inner product on
$\acute D$. Let \fbox{$\psi_0=e^{-r}$}.  Since $||\pi_\nu(E_+)\psi_0 ||^2\ge 0$ and
$\pi_\nu(E_-)\psi_0=0$, using relation
$[E_+, E_-]=-H_e$, we arrive at $(\psi_0, \pi_\nu(H_e)\psi_0)\ge 0$, i.e., 
$\nu\rho||\psi_0||^2\ge 0$. So $\nu\ge 0$.

v) Let us take the simple root system $\alpha_0$, \ldots, $\alpha_r$
specified in Lemma \ref{LemmaVogan}. Since $\acute D_0$ ($=\mr{span}_{\bb C}\{\psi_0\}$) is one dimensional
and $\bar{ \frk u}$ is semi-simple, the action of  $\bar{ \frk u}^{\bb C}$ on $\acute D_0$ must be
trivial. Therefore, for $i\ge 1$,  in view of the fact that $E_{\pm\alpha_i}, H_{\alpha_i}\in
\bar{ \frk u}^{\bb C}$, we have
\begin{eqnarray}\label{lw1}
\pi_\nu(E_{-\alpha_i}) \psi_0=0, \quad \pi_\nu(H_{\alpha_i}) \psi_0=0.
\end{eqnarray}
On the other hand, since $E_{-\alpha_0}={i\over
2}(X_{e_{11}}-Y_{e_{11}})+L_{e_{11}}$ and
$H_{\alpha_0}=i(X_{e_{11}}+Y_{e_{11}})$, by a computation, we have
\begin{eqnarray}
e^r\acute E_{-\alpha_0}e^{-r} &=& -{1\over 2}\left(\langle x\mid \{{/\hskip -5pt
\partial}e_{11}{/\hskip -5pt \partial}\}\rangle+\nu\tr (e_{11}{/\hskip -5pt
\partial})\right),\cr
e^r\acute H_{\alpha_0}e^{-r} &=& -\langle x\mid \{{/\hskip -5pt
\partial}e_{11}{/\hskip -5pt \partial}\}\rangle-\nu\tr (e_{11}{/\hskip -5pt
\partial})-2\hat L_{e_{11}}+\nu,\nonumber
\end{eqnarray}
so it is easy to see that
\begin{eqnarray}\label{lw2}
\pi_\nu(E_{-\alpha_0})\psi_0=0, \quad \pi_\nu(H_{\alpha_0})\psi_0=\nu\psi_0.
\end{eqnarray}
Therefore,  in view of the fact that $\alpha_0(H_{\alpha_0})=2$, Eqns
(\ref{lw1}) and (\ref{lw2})
imply that $\psi_0$ is a lowest weight state with weight $\nu\lambda_0$.

vi) In view of the fact that operator $\acute Y_v$ is the multiplication by $-i\langle v\mid x\rangle$, this is obvious: 
$$
e^{-r}\sum_{i_1, \ldots, i_n} \alpha_{i_1\cdots i_n} x_1^{i_1}\cdots x_n^{i_n}=\left(\sum_{i_1, \ldots, i_n} \alpha_{i_1\cdots i_n} (i\acute Y_{e_1})^{i_1}\cdots(i\acute Y_{e_n})^{i_n}\right) \psi_0.
$$

\end{proof}

We shall show in the next section that $\pi_\nu$ is irreducible when $\nu>(\rho-1){\delta\over 2}$ and is not irreducible when $\nu=k{\delta\over 2}$, $k=0$, $1$, \ldots, $(\rho-1)$. The collection of theses values of $\nu$, denoted by $\mathcal W(V)$, is called the {\bf Wallach set} for $V$. So
$$
{\mathcal W}(V)=\left\{k{\delta\over 2}\mid k=0, 1, \ldots, \rho-1\right\}\cup\left((\rho-1){\delta\over 2}, \infty\right).
$$
It is known from Ref. \cite{EHW82} that the set of scalar-type unitary lowest weight representation of $\mr{Co}$ is isomorphic to ${\mathcal W}(V)$. We shall show in the next section that these representations are precisely the irreducible quotient of these $\pi_\nu$, also denoted by $\pi_\nu$. 

As mentioned in the introduction, the operator realizations as given in Eq. (\ref{OperatorR}) are not unitary with respect to the obvious $L^2$-inner product
$$
(\psi_1, \psi_2)=\int_V \bar \psi_1\psi_2\, dm,
$$
where $dm$ is the Lebesgue measure. The right inner product  must be found in order to have unitary realizations. For that, let us move on to the next section.

\section{Quantizations Of TKK Algebras}\label{quantization}
The goal of this section is to investigate the unitarity of representation $\pi_\nu$ obtained in the previous section. The question is to find a positive hermitian form $(, )_\nu$ on $\acute D(V)$ with respect to which operators $\acute O(\nu)$ are all anti-hermitian. More generally,   $(, )_\nu$ can be semi-positive because then $\pi_\nu$ descends to a unitary representation by formally setting the ``spurious states" (i.e., elements of $\acute D(V)$ with zero norm) as zero. It is a fact from Ref. \cite{FK94} that such a $(, )_\nu$ does exist for $\nu$ in the Wallach set. Our purpose here is to present a new route towards this fact along with its refinements. 

The case $\nu={\delta\over 2}$ is already known from our work in Ref. \cite{meng09}. Recall from Ref. \cite{meng09}, the $J$-Kepler problem for $V$ is a dynamic problem on $\ms P$ --- the submanifold of $V$ consisting of semi-positive elements of rank one. 
By comparing Remark 8.1 and Proposition 8.1 from Ref. \cite{meng09} with Theorem \ref{universalQ2} here,  we have
\begin{eqnarray}\label{hermitionform}
(\psi_1, \psi_2)_{\delta\over 2}=\int_{\ms P}\bar\psi_1\, \psi_2\, r^{-1-(\rho/2-1)\delta}\,\mr{vol}.
\end{eqnarray}
Here, $\mr{vol}$ is the volume form for the {\bf Kepler metric}
\begin{eqnarray}\label{KeplerM}
ds^2_K:={2\over \rho}ds^2_E-dr^2.
\end{eqnarray}
It is clear that, as a vector subspace of $\acute D(V)$, the space of ``spurious states" consists
of elements of $\acute D(V)$ which vanish on $\ms P$.

With this result in mind, it is not hard to imagine what the general picture should be: replacing $\ms P$ by the submanifold of $V$ consisting of semi-positive elements of a fixed positive rank. Of course, some technical hurdles must be overcome.
The initial hurdle is the generalization of the Kepler metric in Eq. (\ref{KeplerM}). The second hurdle is the generalization of the extra factor $r^{-1-(\rho/2-1)\delta}$ in measure $$d\mu_{\delta\over 2}:=r^{-1-(\rho/2-1)\delta}\,\mr{vol}.$$

It turns out, the second hurdle simply disappears by itself as we walk along a natural path towards quantizations of TKK algebras. The clue for removing the first hurdle comes from the study of the universal Kepler problem in Ref. \cite{meng10}, as we shall sketch below. 

We have noted in the past that the total tangent space of a {\it Riemannian} manifold is a symplectic manifold, and if $N$ is a submanifold of $M$, then $TN$ is a symplectic submanifold of $TM$. With this understood, we remarked in Ref. \cite{meng10}
that, by restricting the classical universal hamiltonian 
$$
{\mathcal H}={1\over 2}{\langle x\mid \pi^2\rangle\over r}-{1\over r}
$$ 
from $TV$ to $T\ms P$, one obtains the classical hamiltonian for the J-Kepler problem. Since the first term in $\mathcal H$ should be identified with the kinetic energy, we must have the following new formula for the Kepler metric \footnote{Here, for endomorphism $A$ on $V$ and elements $u$, $v$ in $V$, Dirac's notation $\langle u\mid A  \mid v\rangle$ means $\langle u\mid A v\rangle$.}:
\begin{eqnarray}\label{KineticE}
(\pi, \pi)_{ds^2_K}={\langle x\mid \pi^2\rangle\over r}={\langle \pi\mid L_x \mid \pi\rangle\over r},
\end{eqnarray}
a fact which can be verified directly. Now it becomes clear how to generalize the Kepler metric.

\subsection{Canonical cones}\label{SS:Canonical Cones}
We say that an element $x\in V$ is {\it semi-positive} if $x=y^2$ for some $y\in
V$.
Let us denote by ${\mathcal Q}$ the space of semi-positive elements in $V$ and recall that $\mr {Str}$ is
the structure group
of $V$. Then one can check that the action of $\mr{Str}$  on $V$ leaves ${\mathcal Q}$ invariant, so
we have a partition of ${\mathcal Q}$ into the disjoint union of
$\mr{Str}$-orbits:  $${\mathcal Q}=\cup_{k=0}^\rho {\mathcal C}_k. $$ Here, 
homogeneous space ${\mathcal C}_k$ is the space of semi-positive elements of
rank $k$. 
Note that,  ${\mathcal C}_0=\{0\}$,  ${\mathcal C}_\rho$ is the symmetric cone
$\Omega$ of $V$, and $\mathcal Q$ is the topological closure of $\Omega$. 

As a sub-manifold of the Euclidean space $V$,  ${\mathcal C}_k$  has an induced
Riemannian metric. When we say that $T{\mathcal C}_k$ is a symplectic manifold, it is
this Riemannian metric that is used to identify $T{\mathcal C}_k$ with $T^*{\mathcal C}_k$. However, the Riemannian metric for the Kepler-type dynamics on ${\mathcal C}_k$ is a different one, which we shall describe below.

For any $u\in V$, since  $L_u$:  $V\to V$ is self-adjoint, we have an orthogonal
decomposition $V=\mr{Im} L_u\oplus \ker L_u $, with respect to which, $L_u$
decomposes as $L_u=\bar L_u\oplus 0$. Since $\bar L_u$ is invertible, we can
introduce endomorphism
\begin{eqnarray}
{1\over L_u}\buildrel\mr{def}\over =  {\bar L_u}^{-1}\oplus 0
\end{eqnarray} on $V$.  For any $x\in {\mathcal C}_k$, one can check that the
tangent space $T_x{\mathcal C}_k$ is $\{x\}\times \mr{Im} L_x$, the normal space $N_x{\mathcal C}_k$ is
$\{x\}\times \ker L_x$ .
\begin{Definition}[Canonical Metric]\label{canonical metric}
The {\bf canonical metric} on ${\mathcal C}_k$, denoted by $ds^2_K$, is defined as follows:
\begin{eqnarray}
T_x{\mathcal C}_k\times T_x{\mathcal C}_k & \to &  \bb R\cr
((x,u), (x, v)) &\mapsto &r  \langle u\mid {1\over L_x}\mid v\rangle =r  \langle
u\mid {\bar L_x}^{-1}\mid v\rangle.
\end{eqnarray} 
\end{Definition}
One can check that, on ${\mathcal C}_1$, the canonical metric is the Kepler
metric introduced in Ref. \cite{meng09}. On the symmetric cone $\Omega$, the
canonical metric is
\begin{eqnarray}
T_x\Omega\times T_x\Omega & \to &  \bb R\cr
((x,u), (x, v)) &\mapsto &r  \langle u\mid {L_x}^{-1}\mid v\rangle.
\end{eqnarray} 

\begin{Definition}[Canonical Cone]\label{D:main1} 
Let $V$ be a simple euclidean Jordan algebra of rank $\rho$, ${\mathcal C}_k$
be its submanifold  consisting of the semi-positive elements of rank $k$, $1\le k\le
\rho$. The $V$'s {\bf canonical cone} of rank $k$ is defined to be the Riemannian manifold $({\mathcal
C}_k, ds^2_K)$, where $ds^2_K$ is the canonical metric in Definition \ref{canonical
metric} .
\end{Definition}
We remark that, since the action of structure group on $V$ leaves
 ${\mathcal C}_k$ invariant, for any $u, v\in V$ and $x\in {\mathcal C}_k$, $\hat
S_{uv}|_x\in T_x{\mathcal C}_k$; i.e., $\hat S_{uv}$ descends to a vector field
on ${\mathcal C}_k$, which shall still be denoted by $\hat S_{uv}$. Since $L_u=S_{ue}$, we write
$\hat L_u$ for $\hat S_{ue}$. In the remainder of this paper, we shall use $\Delta$ ($\mr{vol}$ resp.) denote the Laplace operator (the volume form resp.) on a Riemannian manifold, e.g., a canonical cone.

\subsection{Basic facts on canonical cones}
We use $Tr$ to denote the trace for endomorphisms on $V$. For any $x$ in the
canonical cone, we use $P_x$: $V\to V$ to denote the orthogonal projection onto
$\mbox{Im }L_x$. Throughout this subsection,  we focus our attention on a fixed
canonical cone $\mathcal C$ of rank $k$.

Let us start with a local analysis of the canonical metric around a point $x_0\in
{\mathcal C}$. Choose a Jordan frame
$\{e_{ii}\}_{i=1}^\rho$ such that
$$
x_0=\sum_{i=1}^k\lambda_i e_{ii}
$$ for some positive numbers $\lambda_1$, \ldots, $\lambda_k$. With this Jordan frame $\{e_{ii}\}_{i=1}^\rho$ fixed, we let $N:= \bigoplus_{j\ge
i\ge k} V_{ij}$  and $T$ be the orthogonal complement of $N$ in (euclidean
vector space) $V$. Then $T_{x_0}\mathcal C=\{x_0\}\times T$. 

For any $x\in V$, one can
uniquely decompose $x=x_0+t+y$ with $t\in T$ and $y\in N$. Now if we assume that
$x$ is always in ${\mathcal C}$, then 
$$
y=O(|t|^2)\quad\mbox{near $x_0$. }
$$
Choose basis $\{e_i\}$ for $T$ which is  orthonormal with respect to the inner
product on $V$. Write $t=t^i e_i$ and the
canonical metric $ds^2_K$ as  $h_{ij} dt^i\,dt^j$, then
$$
h_{ij}(x) =\langle e\mid x\rangle  \langle x_i\mid {\bar L_{x}}^{-1}\mid x_j\rangle
,$$
where $x_i={\partial x\over \partial t^i}$. Let $g_{ij}:=\langle x_i\mid
x_j\rangle $,
then
$$
g_{ij}(x)=\delta_{ij}+O(|t|^2)\quad\mbox{ near $x_0$.}
$$

As usual, the inverse of $[g_{ij}]$ is denoted by $[g^{ij}]$ and  the inverse of
$[h_{ij}]$ is denoted by $[h^{ij}]$. It is easy to see that
$$
h^{ij}(x)={g^{ik}(x) \langle x_k\mid L_{x}\mid x_l\rangle g^{lj}(x)\over \langle e\mid
x\rangle }. 
$$ 
Consequently, when $x\in {\mathcal C}$ is near $x_0$, we have
\begin{eqnarray}\label{h-estimate}
h^{ij} (x)&= &{ \langle x_i\mid L_{x}\mid x_j\rangle  \over \langle e\mid x\rangle
}+O(|t|^2) \cr
&=& { \langle e_i\mid x_0 e_j\rangle  \over \langle e\mid x_0\rangle
}\left(1-{\langle e\mid t\rangle \over \langle e\mid x_0\rangle }\right)+{
\langle e_i\mid t e_j\rangle  \over \langle e\mid x_0\rangle }+O(|t|^2).
\end{eqnarray}

\begin{Prop}  \label{PropLu1} Fix a canonical cone. Let $\mr{vol}$ be its volume form, and $\ms L_u$
the Lie derivative with respect to vector field $\hat L_u$.  Then
\begin{eqnarray}\label{lambda-uID}
\ms L_u({1\over r}\mr{vol})=-2\lambda_u{1\over r}\mr{vol},
\end{eqnarray}
where
\begin{eqnarray}
\fbox{$2\lambda_u=-{1\over 2}\mr{Tr} ({1\over L_x}L_{ux})+\mr{Tr} (P_xL_u)+{\langle u\mid
x\rangle \over \langle e\mid x \rangle }\left({\mr{Tr} P_x\over 2} -1\right).$}
\end{eqnarray}
Consequently, $\lambda_u$ depends on $u$ linearly and 
\begin{eqnarray}\label{lambdaP}
\hat L_u(\lambda_v)=\hat L_v(\lambda_u), \quad 
{\hat S}_{uv}(\lambda_{zw})-{\hat S}_{zw}(\lambda_{uv})=\lambda_{
\{uvz\}w}-\lambda_{z\{vuw\}} .
\end{eqnarray}
\end{Prop}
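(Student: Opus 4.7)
The plan is to translate the Lie-derivative identity into a divergence identity, compute the divergence pointwise at a generic $x_0\in\mathcal C$ using the local expansion (\ref{h-estimate}), and then deduce the two symmetry relations in (\ref{lambdaP}) abstractly via Lie-algebra brackets.

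For the main identity (\ref{lambda-uID}): by Leibniz and $\ms L_X\mr{vol}=(\mr{div}_K X)\mr{vol}$,
\begin{equation*}
\ms L_u\bigl(\tfrac{1}{r}\mr{vol}\bigr)=\Bigl(-\tfrac{\hat L_u r}{r}+\mr{div}_K\hat L_u\Bigr)\tfrac{1}{r}\mr{vol},
\end{equation*}
and since $\hat L_u r=\langle e\mid L_u x\rangle=\langle u\mid x\rangle$, this reduces to showing that $\mr{div}_K\hat L_u=\tfrac{\langle u\mid x\rangle}{r}-2\lambda_u$. I would compute $\mr{div}_K\hat L_u\big|_{x_0}=\partial_i X^i|_{x_0}+\tfrac{1}{2}X^i\,\partial_i\log\det h\big|_{x_0}$ in the adapted coordinates $(t^i)$ introduced before (\ref{h-estimate}). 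Using $\log\det h=-\mr{Tr}\log h^{-1}$ together with (\ref{h-estimate}), the zeroth-order term of $h^{ij}$ feeds $\partial_i X^i|_{x_0}$ to give $\mr{Tr}(P_{x_0}L_u)$, while the two first-order corrections in (\ref{h-estimate}) produce via $\tfrac{1}{2}\partial_i\log\det h$ the traces $-\tfrac{1}{2}\mr{Tr}\bigl(\tfrac{1}{L_x}L_{ux}\bigr)$ and $\tfrac{\langle u\mid x\rangle}{r}\bigl(\tfrac{\mr{Tr}\,P_x}{2}-1\bigr)$. Linearity of $\lambda_u$ in $u$ is then immediate from the resulting formula.

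For the two identities in (\ref{lambdaP}), I would observe that the assignment $X\mapsto\hat X$ (with $\hat X|_x:=X(x)$) is a Lie-algebra anti-homomorphism $\frk{str}\to\mr{Vect}(\mathcal C)$, giving $[\hat S_{uv},\hat S_{zw}]=-\hat S_{\{uvz\}w}+\hat S_{z\{vuw\}}$. Specializing $v=w=e$ shows $[\hat L_u,\hat L_v]=-\widehat{[L_u,L_v]}$; since $[L_u,L_v]\in\frk{der}(V)$ and every derivation $D$ satisfies $De=0$ together with $[D,L_x]=L_{Dx}$, the flow of $\hat D$ on $\mathcal C$ is an isometry of the canonical metric preserving $r$, hence preserves $(1/r)\mr{vol}$. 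Combining this vanishing with the Leibniz identity $[\ms L_u,\ms L_v]\bigl((1/r)\mr{vol}\bigr)=-2\bigl(\hat L_u\lambda_v-\hat L_v\lambda_u\bigr)(1/r)\mr{vol}$ yields the first identity. For the second, I would extend the definition by $\ms L_{\hat S_{uv}}\bigl((1/r)\mr{vol}\bigr):=-2\lambda_{uv}(1/r)\mr{vol}$ (consistent with $\lambda_u=\lambda_{ue}$ since $\hat L_u=\hat S_{ue}$) and then compare the two Leibniz-type expansions of $[\ms L_{\hat S_{uv}},\ms L_{\hat S_{zw}}]\bigl((1/r)\mr{vol}\bigr)$.

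The hard part will be the bookkeeping in the divergence calculation: one must carefully disentangle the conformal factor $\langle e\mid x\rangle$ from the endomorphism $\bar L_x^{-1}$ in the first-order expansion of the inverse metric, and track both their contributions along the flow of $\hat L_u$, so that the extrinsic-looking trace $\mr{Tr}(P_x L_u)$ is cleanly separated from the intrinsic trace $\mr{Tr}(\tfrac{1}{L_x}L_{ux})$ and from the dimensional coefficient $\tfrac{\mr{Tr}\,P_x}{2}-1$. All other steps --- the Leibniz expansions, the isometry argument for derivations, and the comparison of bracket expansions --- are essentially formal once the explicit divergence formula is in hand.
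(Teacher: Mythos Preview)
Your plan is correct and is essentially the paper's own proof, just repackaged: the paper computes $\ms L_u(\mr{vol})$ directly by expanding $\hat L_u(\ln\sqrt h)\,\mr{vol}+\sqrt h\sum_i dt^1\wedge\cdots\wedge\ms L_u(dt^i)\wedge\cdots$ at $x_0$ (which is your $\tfrac12 X^i\partial_i\log\det h+\partial_iX^i$), reads off the three trace terms from the first-order expansion (\ref{h-estimate}) together with the Jordan-frame identity $\langle\sum_i e_i^2\mid u\rangle=\mr{Tr}(P_{x_0}L_u)$, and then derives (\ref{lambdaP}) by exactly your bracket argument $[\ms L_{\hat S_{uv}},\ms L_{\hat S_{zw}}]=\ms L_{[\hat S_{uv},\hat S_{zw}]}$ combined with the $\mr{Aut}(V)$-invariance of $ds_K^2$. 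One small slip to fix when you write it out: with the paper's convention $\hat L_u=-\langle ux\mid{/\hskip -5pt\partial}\rangle$ one has $\hat L_u r=-\langle u\mid x\rangle$, so the sign in your reduction to $\mr{div}_K\hat L_u$ (and correspondingly in $\partial_iX^i|_{x_0}$) is off; this is purely bookkeeping and does not affect the argument.
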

\begin{proof}
We just need to show that
\begin{eqnarray}
\ms L_u(\mr{vol})=-2\tilde \lambda_u \mr{vol}\nonumber
\end{eqnarray}
at a point $x_0$, where
\begin{eqnarray}\label{i1}
-2\tilde \lambda_u={1\over 2}\mbox{Tr} \left(-{\langle u\mid x\rangle \over
\langle e\mid x\rangle }P_x+{1\over L_x}L_{xu}\right)-\mr{Tr} (P_xL_u).
\end{eqnarray}

Recall that $ds_K^2=h_{ij}\,dt^i\,dt^j$. Since $\mr{vol}=\sqrt h\, dt^1\wedge dt^2\wedge\cdots$ where $h=\det[h_{ij}]$, we
have
$$
{\ms L}_u(\mr{vol})=\hat L_u(\ln \sqrt h) \mr{vol} +\sum_i \sqrt h\,
dt^1\wedge\cdots\wedge{\ms L}_u(dt^i) \wedge \cdots.
$$
Since ${\ms L}_u(dt^i)=d{\ms L}_u(\langle e_i\mid x\rangle )
= -d (\langle xe_i\mid u\rangle )$, we have
$${\ms L}_u(dt^i) = -\langle e_i^2\mid u\rangle dt^i\quad \mbox{at $x_0$},
$$
consequently,
\begin{eqnarray}\label{i2}
-2\tilde \lambda_u=\hat L_u(\ln \sqrt h) -\langle \sum_i e_i^2\mid u\rangle
\quad \mbox{at $x_0$}.
\end{eqnarray}
In view of Eq. (\ref{h-estimate}), for $x$ near $x_0$,
$$
h=\det [{\langle e_i\mid x_0e_j\rangle \over \langle e\mid x_0\rangle
}]^{-1}\left(1+\mbox{Tr} \left({\langle e\mid t\rangle \over \langle e\mid
x_0\rangle }P_{x_0}-{1\over L_{x_0}}L_t\right)\right)+O(|t|^2),
$$
so
\begin{eqnarray}\label{i3}
\hat L_u(\ln \sqrt h) |_{x_0}={1\over 2}\mbox{Tr} \left(-{\langle u\mid x_0\rangle \over
\langle e\mid x_0\rangle }P_{x_0}+{1\over L_{x_0}}L_{x_0u}\right).
\end{eqnarray}
Since
$$
 {1\over \rho}\sum_i e_i^2=(1+{\delta(\rho-k-1)\over 2})\sum_{j=1}^k
e_{jj}+{\delta k\over 2} e,
$$ 
where $k$ is the rank of the canonical cone, we have
\begin{eqnarray}\label{i4}
 \langle \sum_i e_i^2\mid u\rangle  &= &(1+{\delta(\rho-k-1)\over
2})\sum_{j=1}^k u_{jj}+{\delta k\over 2} \tr u\cr
&=& \mr{Tr}(P_{x_0}L_u). 
\end{eqnarray}
Plugging Eqs. (\ref{i3}) and (\ref{i4}) into Eq. (\ref{i2}), we arrive at Eq.
(\ref{i1}).

Since $[{\ms L}_{{\hat L}_u}, {\ms L}_{{\hat L}_v}] = {\ms L}_{[\hat L_u, \hat
L_v]}$, in view of fact that the Kepler metric is invariant under the action of the
$\mr{Aut}(J)$,  we have $[{\ms L}_{{\hat L}_u}, {\ms L}_{{\hat L}_v}] ({1\over
r}\mr{vol})=0$. Then
$$-2\left({\hat L}_u(\lambda_v) - {\hat L}_v(\lambda_u)\right) \cdot {1\over
r}\mr{vol} =0,$$ 
consequently ${\hat L}_u(\lambda_v) = {\hat L}_v(\lambda_u)$. 

Since $[{\ms L}_{{\hat S}_{uv}}, {\ms L}_{{\hat S}_{zw}}]={\ms L}_{[{\hat
S}_{uv}, {\hat S}_{zw}]}={\ms L}_{{\hat S}_{\{uvz\}w}-{\hat S}_{z\{vuw\}}}$,
acting on ${1\over r}\mr{vol}$, we have 
$$
{\hat S}_{uv}(\lambda_{zw})-{\hat S}_{zw}(\lambda_{uv})=\lambda_{
\{uvz\}w}-\lambda_{z\{vuw\}} .
$$
\end{proof} 
In the remainder of this paper, we let
\begin{eqnarray}
\fbox{${\tilde S}_{uv}={\hat S}_{uv}-\lambda_{uv}$, \quad $\tilde L_u=\tilde S_{ue}$.}
\end{eqnarray}

\begin{Prop}  \label{PropLu2}
Fix a canonical cone and let $\Delta$ be its Laplace operator. Then
\begin{eqnarray}\label{KeyId1'}
[r\Delta, \langle u\mid x\rangle]=-2\tilde L_u, \quad u\in V.
\end{eqnarray}
\end{Prop}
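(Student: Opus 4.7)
The plan is to expand the commutator using the standard identity for a Laplace--Beltrami operator against multiplication by a smooth function, and then to identify each piece with the two parts of $-2\tilde L_u=-2(\hat L_u-\lambda_{ue})$. Since $r$ and $\langle u\mid x\rangle$ are both multiplication operators, they commute, so
\[
[r\Delta,\,\langle u\mid x\rangle]=r[\Delta,\,\langle u\mid x\rangle]=r\bigl(\Delta\langle u\mid x\rangle\bigr)+2r\,\mathrm{grad}_{K}\langle u\mid x\rangle,
\]
where in the last term I view the canonical-metric gradient as a derivation. So the target reduces to two computations: (a) $r\,\mathrm{grad}_{K}\langle u\mid x\rangle=-\hat L_u$ as a vector field on the cone, and (b) $r\,\Delta\langle u\mid x\rangle=2\lambda_{ue}$.

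For (a) I use the explicit form of $ds^2_K$ from Definition \ref{canonical metric}. Given $v\in T_x\mathcal C=\mathrm{Im}\,L_x$, the defining property of the gradient is
\[
\langle u\mid v\rangle=(V,v)_{ds^2_K}=r\,\langle V\mid {\textstyle\frac{1}{L_x}} v\rangle
\]
for $V=\mathrm{grad}_K\langle u\mid x\rangle$. Writing $v=L_x w$ with $w\in \mathrm{Im}\,L_x$ and using $\langle u\mid xw\rangle=\langle ux\mid w\rangle$ together with the fact that $ux=L_xu\in\mathrm{Im}\,L_x$, one reads off $V=ux/r$; interpreted as a derivation on functions this gives $r\,\mathrm{grad}_K\langle u\mid x\rangle=-\hat L_u$ (recall that by convention $\hat L_u|_x=-L_u x$).

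For (b) I use $\Delta f=\mathrm{div}_K(\mathrm{grad}_K f)$, applied to $f=\langle u\mid x\rangle$, together with the fundamental identity from Proposition \ref{PropLu1},
\[
\ms L_{\hat L_u}\!\left(\tfrac{1}{r}\mathrm{vol}\right)=-2\lambda_{ue}\,\tfrac{1}{r}\mathrm{vol}.
\]
Expanding this with $\hat L_u(1/r)=\langle u\mid x\rangle/r^2$ (which follows from $\hat L_u(r)=-\langle u\mid x\rangle$) yields $\mathrm{div}_K(\hat L_u)=-\bigl(2\lambda_{ue}+\langle u\mid x\rangle/r\bigr)$. Now the Leibniz rule $\mathrm{div}_K(fX)=f\,\mathrm{div}_K X+X(f)$ applied to $-\hat L_u/r=\mathrm{grad}_K\langle u\mid x\rangle$ gives, after the two $\langle u\mid x\rangle/r^2$ contributions cancel, $\Delta\langle u\mid x\rangle=2\lambda_{ue}/r$. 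Thus $r\,\Delta\langle u\mid x\rangle=2\lambda_{ue}$.

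Combining (a) and (b),
\[
[r\Delta,\,\langle u\mid x\rangle]=2\lambda_{ue}+2(-\hat L_u)=-2(\hat L_u-\lambda_{ue})=-2\tilde L_u,
\]
which is the claim. I expect the only real subtlety to be the sign bookkeeping in (a) caused by the minus sign convention $\hat S_{uv}=-\langle S_{uv}(x)\mid\slashed\partial\rangle$, and the cancellation between $\hat L_u(1/r)$ and the $\langle u\mid x\rangle/r$ piece of $\mathrm{div}_K(\hat L_u)$ in step (b); once these are tracked correctly the rest is immediate from Proposition \ref{PropLu1}.
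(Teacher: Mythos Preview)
Your proof is correct and follows essentially the same route as the paper's: both reduce the commutator to the first-order (gradient) part and the zeroth-order part $r\Delta\langle u\mid x\rangle=2\lambda_u$, and both extract the latter from Proposition~\ref{PropLu1} via the Lie derivative of $\tfrac{1}{r}\mathrm{vol}$. The only cosmetic difference is packaging: the paper checks the symbol by the double-commutator test $[[r\Delta,\langle u\mid x\rangle],\langle v\mid x\rangle](1)=2\langle uv\mid x\rangle$ and writes the divergence computation in Hodge-star language, whereas you invoke the standard identity $[\Delta,f]=\Delta f+2\,\mathrm{grad}_Kf$ and compute the gradient directly from Definition~\ref{canonical metric}.
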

\begin{proof} Upon observing that $[r\Delta, \langle u\mid x\rangle ]$ is a linear
differential operator, it suffices to verify that
$$
[[r\Delta, \langle u\mid x\rangle ], \langle v\mid x\rangle ](1)=[-2\tilde L_u,
\langle v\mid x\rangle ](1),  \quad [r\Delta, \langle u\mid x\rangle ]
(1)=-2\tilde L_u (1), 
$$
i.e.,
\begin{eqnarray}\label{verification}
[[r\Delta, \langle u\mid x\rangle ], \langle v\mid x\rangle ](1)=2 \langle
uv\mid x\rangle ,  \quad r\Delta (\langle u\mid x\rangle )= 2\lambda_u. 
\end{eqnarray}
In view of the fact that  $*\Delta f =  d*df$,  $[[*\Delta, \langle u\mid
x\rangle ], \langle v\mid x\rangle ] (1)$ is equal to
\begin{eqnarray}
&& d*d(\langle u\mid x\rangle \langle v\mid x\rangle ) -\langle u\mid x\rangle
d*d(\langle v\mid x\rangle )-\langle v\mid x\rangle d*d(\langle u\mid x\rangle
)\cr
&=& d(\langle u\mid x\rangle )\wedge *d(\langle v\mid x\rangle )+d\langle v\mid
x\rangle \wedge *d(\langle u\mid x\rangle )\cr
&=& 2\langle u\mid x_i\rangle h^{ij} \langle v\mid x_j\rangle \;
\mr{vol},\nonumber
\end{eqnarray} 
so 
$$
[[r\Delta, \langle u\mid x\rangle ], \langle v\mid x\rangle ] (1)|_{x_0} =
2\langle u\mid e_i\rangle \langle e_i\mid L_{x_0}|e_j\rangle  \langle v\mid
e_j\rangle =2\langle uv\mid x_0\rangle .
$$
The first identity of Eq. (\ref{verification}) is verified.

In view of the fact that $*\Delta(f)=d\left(\sum_{i,j}h^{ij} \partial_i f
\iota_{\partial_j} (\mr{vol})\right)$, we have
\begin{eqnarray}
*r\Delta(\langle u\mid x\rangle )|_{x_0} &= & \left.r d\left(\sum_{i,j}h^{ij}
\langle u\mid x_i\rangle  \iota_{\partial_j} (\mr{vol})\right)\right |_{x_0}\cr
&=&- \left.r d\left({1\over r} \iota_{\hat L_u} (\mr{vol})\right)\right
|_{x_0}\cr
&=& -{\ms L}_u(\mr{vol})|_{x_0}-{\langle e\mid ux\rangle \over
r}\mr{vol}|{x_0}\cr
&=& -r{\ms L}_u({1\over r} \mr{vol})|_{x_0}\cr
&\buildrel {Eq. (\ref{lambda-uID})}\over =&2\lambda_u\mr{vol}|_{x_0}.\nonumber
\end{eqnarray}
The second identity of Eq. (\ref{verification}) is verified.
\end{proof}

As we have demonstrated in Ref. \cite{meng09}, to check a commutation relation
on the Kepler cone, it is easier to check the corresponding one on $V$. For this
reason, we wish to lift $r\Delta$ to a second order differential operator on $V$
with rational functions as its coefficients. In order to do that, we first need
to  lift $\lambda_u$ to a rational function on $V$. Let $c_k(x)$ ($\tau_k(x)$
resp.) be the polynomial in $\tr x$, $\tr x^2$, \ldots, $\tr x^k$ such that, if
$x=\sum_{i=1}^k\lambda_ie_{ii}$, then
$$c_k(x)=\prod_{i=1}^k \lambda_i \quad \left(\tau_k(x)=\prod_{1\le i<j\le
k}(\lambda_i+\lambda_j)\quad \mbox{resp.}\right).
$$  For example, $c_1(x)=\tr x$, $\tau_1(x)=1$,
$c_2(x)={1\over 2}((\tr x)^2-\tr x^2)$ and $\tau_2(x)=\tr x$. Let
$D_k=k[1+(\rho-{k+1\over 2})\delta]$ --- the dimension of the canonical cone
of rank $k$, and
\begin{eqnarray}\label{potential}
\fbox{$\varphi_k =\tau_k^\delta\cdot c_k ^{\delta-1}\cdot  r^{2-D_k}$.}
\end{eqnarray}
For example, $D_1=1+(\rho-1)\delta$, and $\varphi_1=r^{-\delta(\rho -2)}$ up to a multiplicative constant. Note
that $\varphi_k$ is a rational function on $V$ and is positive on the canonical cone of rank $k$. From here one, we shall call $\varphi_k$ the {\bf phi-function} on the canonical cone of rank $k$.

With an orthonormal basis $\{e_\alpha\}$ for $V$ chosen, we write $x$ as $x^\alpha e_\alpha$, ${\partial \over \partial x^\alpha}$ as $\partial_\alpha$. Recall that ${/\hskip -5pt \partial}=\sum_\alpha e_\alpha \partial_\alpha$.
 
\begin{Prop}\label{rDelta} 
Fix a canonical cone of rank $k$. Let $\varphi$ be its phi-function, $\Delta$ be its Laplace operator. Then
 
 i) $\lambda_u$ can be lifted to a rational function on $V$:
 \begin{eqnarray}
\fbox{$4\lambda_u = \hat L_u\ln \varphi+\delta k\tr u$\, .}
\end{eqnarray}

ii) $r\Delta$ can be lifted to a second order differential operator on $V$ with
rational function coefficients:
\begin{eqnarray}
 \fbox{$r\Delta =\langle x\mid {/\hskip -5pt \partial}^2\rangle +
2\sum_\alpha \lambda_{e_\alpha}\partial_\alpha$\, .}
\end{eqnarray}

\end{Prop}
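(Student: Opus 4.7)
The plan is to prove (ii) by a symbol-plus-commutator argument, and (i) by decomposing into two Jordan-algebraic sub-identities, one of log-determinant type and one verified by Peirce case analysis. Starting with (ii), the cleaner piece, I verify that $r\Delta$ and $\langle x\mid{/\hskip -5pt \partial}^2\rangle+2\sum_\alpha\lambda_{e_\alpha}\partial_\alpha$ coincide by checking three properties: they have the same principal symbol, the same commutator with every linear function $\langle u\mid x\rangle$, and the same action on the constant $1$. For the first, Definition \ref{canonical metric} gives $r\cdot h^{ij}=\langle\phi_i\mid L_x\phi_j\rangle$ in any ambient-orthonormal basis of $T_x\mathcal C_k$, matching the symbol of $\langle x\mid{/\hskip -5pt \partial}^2\rangle$. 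For the second, Proposition \ref{PropLu2} yields $[r\Delta,\langle u\mid x\rangle]=-2\tilde L_u=-2\hat L_u+2\lambda_u$; a direct computation using $\partial_i\partial_j\langle u\mid x\rangle=0$ gives $[\langle x\mid{/\hskip -5pt \partial}^2\rangle,\langle u\mid x\rangle]=2\langle ux\mid{/\hskip -5pt \partial}\rangle=-2\hat L_u$; and the linearity of $\lambda_u$ in $u$ yields $[2\sum_\alpha\lambda_{e_\alpha}\partial_\alpha,\langle u\mid x\rangle]=2\lambda_u$. The third property is immediate. The difference of the two operators is thus a multiplication operator vanishing on $1$, hence zero.

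For (i), I substitute the explicit formula from Proposition \ref{PropLu1} into the claim, expand $\hat L_u\ln\varphi$ using the definition of $\varphi$, and apply $\hat L_u\ln r=-\langle u\mid x\rangle/r$ to cancel the terms in $\langle u\mid x\rangle/r$. After this bookkeeping, the identity splits into two independent identities on $\mathcal C_k$:
\begin{align*}
(\mathrm{A})\quad & \delta\,\hat L_u\ln\tau_k + \bigl(1+\delta(\rho-k)\bigr)\,\hat L_u\ln c_k = -\mr{Tr}\bigl(\bar L_x^{-1}L_{ux}\bigr),\\
(\mathrm{B})\quad & \bigl(2+\delta(\rho-k-1)\bigr)\,\hat L_u\ln c_k = \delta k\,\tr u - 2\,\mr{Tr}(P_xL_u).
\end{align*}
Both sides of each identity are linear in $u$ and smooth in $x$, so it suffices to verify them at a generic $x_0=\sum_{i=1}^k\lambda_ie_{ii}$ with $u$ ranging over a Jordan basis of $V$. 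Identity (A) follows from the factorization $\det\bar L_x = C\cdot\tau_k^\delta c_k^{1+\delta(\rho-k)}$, verified at $x_0$ by reading off the nonzero eigenvalues of $L_{x_0}$ on each Peirce component $V_{ij}$ (namely $\lambda_i$, $(\lambda_i+\lambda_j)/2$, or $\lambda_i/2$ as appropriate), combined with the standard logarithmic-derivative formula $\hat L_u\ln\det\bar L_x = -\mr{Tr}(\bar L_x^{-1}L_{ux})$.

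Identity (B) I verify at $x_0$ by splitting on the Peirce type of $u$. For $u=e_{ll}$ with $l\le k$, both sides reduce to $-2-\delta(\rho-k-1)$, using $\mr{Tr}(P_{x_0}L_{e_{ll}})=1+(\rho-1)\delta/2$ and $\hat L_u\ln c_k|_{x_0}=-1$. For $u=e_{ll}$ with $l>k$, both sides vanish, the right because $\mr{Tr}(P_{x_0}L_{e_{ll}})=k\delta/2$ and $\tr e_{ll}=1$, the left because $ux_0=0$ forces $\hat L_u|_{x_0}=0$. For $u\in V_{ij}$ with $i\neq j$, both sides vanish: $\tr u=0$ and Peirce orthogonality force $\mr{Tr}(P_xL_u)=0$, while $\hat L_u\ln c_k|_{x_0}=0$ follows from $D\tr x^m[h]=m\,\tr(x_0^{m-1}h)$, which sees only the diagonal Peirce components of $h$, so that $\nabla c_k|_{x_0}$ is diagonal whereas $ux_0\in V_{ij}$ is off-diagonal. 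The main obstacle is recognizing the right decomposition (A)+(B); once in hand, (A) reduces to an eigenvalue product and (B) to three tractable Peirce cases, while part (ii) is clean operator theory.
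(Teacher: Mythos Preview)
Your argument is correct. For part (ii) you follow essentially the same route as the paper: both sides annihilate constants, and both have commutator $-2\tilde L_u$ with every linear function $\langle u\mid x\rangle$, hence their difference is a multiplication operator killing $1$, so zero. Your added principal-symbol check is redundant (the commutator already pins down everything above order zero) but harmless.

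For part (i) you take a somewhat different and more explicit path than the paper. The paper works at a point $x_0=\sum_{i\le k}\lambda_i e_{ii}$, records the single observation
\[
\hat L_u\varphi\big|_{x_0}=-\sum_{i=1}^k\langle ux_0\mid e_i\rangle\,\partial_i\varphi\big|_{x_0}
\]
(valid because $\varphi$ is built from the $\tr x^m$ and $\nabla\tr x^m|_{x_0}$ is diagonal), and then declares the remaining computation ``straightforward'' and omits it. Your decomposition into (A) and (B) makes this omitted computation transparent: identity (A) is really the statement that $\det\bar L_x=C\,\tau_k^{\delta}c_k^{1+\delta(\rho-k)}$ on $\mathcal C_k$ together with the pseudo-inverse log-derivative formula $\tfrac{d}{dt}\ln\det\bar L_{x(t)}=\mr{Tr}\bigl(\tfrac{1}{L_x}L_{\dot x}\bigr)$, both of which are clean eigenvalue facts; identity (B) is a short Peirce-type case check. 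This buys you a conceptual explanation for why the particular exponents in $\varphi_k$ arise (they are exactly what makes $\sqrt{\varphi_k}$ compensate for $\det\bar L_x$ up to a power of $c_k$), whereas the paper's approach treats them as given and verifies by direct substitution. The paper's route is shorter to state but leaves more to the reader; yours is longer but self-contained.
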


\begin{proof}

i) We just need to prove the identity at a point $x_0$ on the canonical cone. Choose a Jordan frame $\{e_{ii}\}$ such that
$x_0=\sum_{i=1}^k\lambda_i e_{ii}$ for some numbers $\lambda_1$, \ldots,
$\lambda_k$.  Then we extend $\{\sqrt \rho e_{ii}\}_{i=1}^\rho$ to an orthonomal basis
$\{e_\alpha\}$ such that $e_i=\sqrt{\rho}e_{ii}$, $1\le i\le k$, and $\{e_i\}_{i=1}^{D_k}$ is an basis of $\mbox{Im}L_{x_0}$.

In view of that fact that $x^k x^l=x^{k+l}$ and $\tr(x_0^k e_j)=0$ for $j>k$, by induction on $m$, we 
have
$$
\hat L_u(\tr x^m)|_{x_0}=-\sum_{i=1}^{k}\langle ux_0\mid e_i\rangle \partial_i
(\tr x^m)|_{x_0},
$$
consequently
\begin{eqnarray}\label{observation}
\hat L_u \varphi |_{x_0}=-\sum_{i=1}^k\langle ux_0\mid e_i\rangle \partial_i
\varphi |_{x_0}.
\end{eqnarray}
The rest of the proof is just a straightforward computation based on identity (\ref{observation}), so we leave it to readers.

ii) Since both sides are differential operators without the zero-th order term, in view of identity (\ref{KeyId1'}),
we just need to show that 
$$[\langle x\mid{/\hskip -5pt \partial}^2\rangle +2
\sum_\alpha\lambda_{e_\alpha}\partial_\alpha, \langle u\mid x\rangle ]=-2\tilde L_u,$$
something that can be easily verified. 
\end{proof}

For $\nu\in{\mathcal W}(V)\setminus\{0\}$, we introduce integer
\begin{eqnarray}
\rho(\nu)=\left\{
\begin{array}{ll}
k
&  \mbox{if $\nu= k{\delta \over 2}$},\\
\\
\rho &  \mbox{if $\nu> (\rho-1){\delta \over 2}$}.\\
\end{array}
\right.
\end{eqnarray}
and rational function
\begin{eqnarray}\label{phinu}
\varphi(\nu):=\left\{
\begin{array}{ll}
\varphi_k & \mbox{if $\nu=k{\delta\over 2}$ },\\
\\
\varphi_\rho\,\det(x)^{2\nu-\rho \delta } & \mbox{if  $\nu>(\rho-1){\delta \over
2}$ }
\end{array}
\right.
\end{eqnarray}
on $V$. Note that $\varphi(\nu)$ is always positive on the canonical cone of rank $\rho(\nu)$. 

For canonical cone $\mathcal C$, we let
$$
\acute D({\mathcal C}):= \{\psi:\; {\mathcal C} \to \bb C \mid \psi\in \acute D(V)\}, \quad \acute D_I({\mathcal C}):= \{\psi:\; {\mathcal C} \to \bb C \mid \psi\in \acute D_I(V)\}.
$$

\begin{Prop}\label{Lem4} Let $\nu\in{\mathcal W}(V)\setminus\{0\}$ and $\mathcal C$ be the canonical cone of rank $\rho(\nu)$. 
Then $\acute D({\mathcal C})$ is
dense in $L^2({\mathcal C}, {\sqrt{\varphi(\nu)}\over r}\mr{vol} )$.
\end{Prop}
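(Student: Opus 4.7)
The plan is to prove density by duality. Since $\acute D(\mathcal C)$ is a linear subspace of $L^2(\mathcal C,\mu_\nu)$ with $d\mu_\nu:=\sqrt{\varphi(\nu)}/r\,\mr{vol}$, it suffices to show that any $f\in L^2(\mu_\nu)$ satisfying the orthogonality condition
\begin{eqnarray*}
\int_{\mathcal C}\bar f(x)\,e^{-r(x)}\,p(x)\,d\mu_\nu(x)\;=\;0\qquad\mbox{for every $p\in\mathcal P(V)$} \qquad(\star)
\end{eqnarray*}
must vanish almost everywhere on $\mathcal C$.

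My first step will be to verify the inclusion $\acute D(\mathcal C)\subset L^2(\mu_\nu)$, i.e.\ that $\int_{\mathcal C}|p|^2 e^{-2r}\,d\mu_\nu<\infty$ for every polynomial $p$. Away from the boundary stratum this is immediate from the exponential decay of $e^{-2r}$, while near the boundary the singularity of $\sqrt{\varphi(\nu)}/r$, encoded by $c_k$, $\tau_k$ and---in the continuous range $\nu>(\rho-1)\delta/2$---by the factor $\det(x)^{2\nu-\rho\delta}$ of (\ref{phinu}), is integrable against the canonical volume by the gamma-function computations for Jordan-algebra integrals already available in Ref.~\cite{FK94}. The same estimates will give the stronger bound $\int_{\mathcal C}|p|^2 e^{-(2-s)r}\,d\mu_\nu<\infty$ for all sufficiently small $s>0$, so that the auxiliary measure $e^{-2r}d\mu_\nu$ has all exponential moments in $r$ of order $<2$.

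The second step is a Laplace-transform argument. Given $f$ satisfying $(\star)$, set
\begin{eqnarray*}
g(x):=\bar f(x)\,e^{-r(x)}\,\frac{\sqrt{\varphi(\nu)(x)}}{r(x)}\qquad(x\in\mathcal C),
\end{eqnarray*}
so that $(\star)$ rewrites as $\int_{\mathcal C}g(x)\,p(x)\,\mr{vol}(x)=0$ for every $p\in\mathcal P(V)$. Since $r(x)\ge c\,\|x\|$ on the cone of semi-positive elements for some constant $c>0$ depending only on $V$, Cauchy--Schwarz in $L^2(\mu_\nu)$ together with Step~1 yields $\int_{\mathcal C}|g(x)|\,e^{s\|x\|}\,\mr{vol}(x)<\infty$ for all sufficiently small $s>0$. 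Hence the Laplace transform
\begin{eqnarray*}
G(\zeta):=\int_{\mathcal C}g(x)\,e^{\langle\zeta\mid x\rangle}\,\mr{vol}(x)
\end{eqnarray*}
is holomorphic on a neighbourhood of $0$ in $V^{\bb C}$. Expanding the exponential as a convergent power series and integrating term by term identifies its Taylor coefficients at $0$ with the monomial moments $\int_{\mathcal C}g\cdot q\,\mr{vol}$, all of which vanish by $(\star)$; therefore $G\equiv 0$. Injectivity of the Laplace transform on finite signed measures on $V$ then forces the measure $g\cdot\mr{vol}$ to be zero, and since $\sqrt{\varphi(\nu)}/r>0$ a.e.\ on $\mathcal C$ we conclude $f=0$ a.e., as required.

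The main obstacle will be the careful boundary analysis in Step~1. In the degenerate case $\nu=k\delta/2$ with $k<\rho$, the cone $\mathcal C=\mathcal C_k$ is a stratified submanifold of $V$ whose closure meets the lower-rank strata along singular loci, and the canonical metric of Definition~\ref{canonical metric} degenerates there through the factor $\bar L_x^{-1}$; so the integrability of $\sqrt{\varphi_k}/r$ against $\mr{vol}$ near the boundary must be read off from the precise exponents in Ref.~\cite{FK94}. Once Step~1 is in hand, Step~2 is essentially formal, since injectivity of the Laplace transform applies equally well to finite measures on a real finite-dimensional vector space supported on a lower-dimensional submanifold.
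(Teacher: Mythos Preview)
Your argument is correct, but the route is different from the paper's. The paper gives a direct approximation: it first checks that $M:=\int_{\mathcal C}e^{-2r}\,d\mu_\nu<\infty$ (via the polar-coordinate formula of Theorem~\ref{dmuk}), then, given $f\in L^2(\mu_\nu)$, approximates $f$ in $L^2$ by some $g\in C_c(\mathcal C)$ and invokes a weighted Stone--Weierstrass theorem (the de~Branges version, Ref.~\cite{deBranges59}) to approximate $e^{r}g$ by a polynomial $p$, so that $\|g-e^{-r}p\|_{L^2}^2\le M\cdot\|e^{r}g-p\|_\infty^2$ is small. Your proof instead argues by duality: the same moment estimate (your Step~1, equivalent to $M<\infty$ and in fact $\int e^{-\alpha r}\,d\mu_\nu<\infty$ for all $\alpha>0$) makes the Laplace transform of the annihilating measure holomorphic in a tube about the imaginary axis in $V^{\bb C}$, and vanishing of all polynomial moments kills all Taylor coefficients, hence the Fourier transform, hence the measure. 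Both proofs rest on the identical technical input---finiteness of the exponential moments of $\mu_\nu$, which in this paper is supplied by Theorem~\ref{dmuk} rather than directly by Ref.~\cite{FK94}---so your reference to the boundary analysis is accurate but could be sharpened to point at that appendix result. The paper's approach is slightly more elementary and constructive; yours is the classical determinate--moment--problem argument, and has the advantage of avoiding the somewhat delicate weighted Stone--Weierstrass step.
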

\begin{proof}  Let us write $d\mu_\nu$ for ${\sqrt{\varphi(\nu)}\over r}\mr{vol}$.
Let $C_c({\mathcal C})$ be the set of compactly-supported continuous
complex-valued functions and 
$$M=\int_{\mathcal C} e^{-2r}\, d\mu_\nu.$$
It is clear that $M>0$.  By applying Theorem \ref{dmuk} in appendix A, one can
easily check that $M<\infty$.

Suppose that  $f\in L^2({\mathcal C},d\mu )$ and $\epsilon>0$. By Theorem 3.14
in Ref. \cite{Rudin87}, there is $g\in C_c({\mathcal C})$ such that
\begin{eqnarray}\label{estimate1}
||f-g||_{L^2}<{\epsilon\over 2}.
\end{eqnarray}
Since $e^r g\in C_c({\mathcal C})$, by the Stone-Weierstrass Theorem in Ref.
\cite{deBranges59}, there is a polynomial $p$ such that
\begin{eqnarray}\label{estimate2}
|e^r g - p|<{\epsilon\over 2\sqrt{M}} \quad\mbox{on ${\mathcal C}$},
\end{eqnarray} 
so
\begin{eqnarray}\label{estimate3}
||g-e^{-r}p||_{L^2}  &= &\left(\int_{\mathcal C}  |g-e^{-r}p|^2\,d\mu_\nu
\right)^{1\over 2}\cr
&< & { \epsilon\over 2} \left({1\over M}\int_{\mathcal C}  e^{-2r}\,d\mu_\nu
\right)^{1\over 2}\quad\mbox{using Eq. (\ref{estimate2})}\cr
&= &{\epsilon \over 2}.
\end{eqnarray}

Combining Eqs (\ref{estimate1}) and (\ref{estimate3}), we have
$$
||f-e^{-r}p||_{L^2}\le ||f-g||_{L^2}+||g-e^{-r}p||_{L^2}<\epsilon.
$$

\end{proof}

Let
\begin{eqnarray}
U(\nu):=\left\{
\begin{array}{ll}
{r\over 4}\left(\Delta(\ln \varphi(\nu))+{1\over 4}|d\ln \varphi(\nu)|^2\right)
&  \mbox{if $\nu\le (\rho-1){\delta \over 2}$}\\
\\
{r\over 4}\left(\Delta(\ln \varphi_\rho)+{1\over 4}|d\ln
\varphi_\rho|^2\right)\cr
+{\rho\over 4}\left((\nu-{n\over \rho})^2-({\delta \over 2}-1)^2\right) \tr
x^{-1} &  \mbox{if $\nu> (\rho-1){\delta \over 2}$}.\\
\end{array}
\right.
\end{eqnarray}
Here, $d$ and $|\;|$ denote the exterior derivative operator and the point-wise
norm for differential one-form on $\mathcal C$ respectively, $x^{-1}$ denotes the Jordan inverse of $x\in\Omega$. Note that $U(\nu)$
can be lifted to a rational function on $V$. Recall that $\tilde L_u=\hat
L_u-\lambda_u$.

\begin{Prop}\label{conjugation}Let $\nu\in{\mathcal W}(V)\setminus\{0\}$ and $\mathcal C$ be the canonical cone of rank $\rho(\nu)$. 

i) As a differential operator on $\mathcal C$,
\begin{eqnarray}
\tilde L_u = \sqrt[4]{\varphi_{\rho(\nu)}}\acute L_u(\nu) {1\over \sqrt[4]{\varphi_
{\rho(\nu)}}}.
\end{eqnarray}
 
ii) As a differential operator on $\mathcal C$,
\begin{eqnarray}
r\Delta = \,\sqrt[4]{\varphi(\nu)}(-i\acute X_e(\nu)){1\over
\sqrt[4]{\varphi(\nu)}}+U(\nu).
\end{eqnarray}

\end{Prop}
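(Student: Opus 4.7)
The plan is to reduce both identities to straightforward conjugation calculations in the algebra of differential operators on $V$, combining the trace identity for $\lambda_u$ from Proposition \ref{rDelta}(i) with the second-order decomposition of $r\Delta$ from Proposition \ref{rDelta}(ii).

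For (i), the starting point is the elementary identity $f\,D\,f^{-1} = D - D(\ln f)$ valid for any first-order operator $D$ without zeroth-order part and any positive function $f$. Taking $D = \hat L_u$ and $f = \sqrt[4]{\varphi_{\rho(\nu)}}$ yields
\[
\sqrt[4]{\varphi_{\rho(\nu)}}\;\hat L_u\;\frac{1}{\sqrt[4]{\varphi_{\rho(\nu)}}} \;=\; \hat L_u - \tfrac{1}{4}\hat L_u\bigl(\ln\varphi_{\rho(\nu)}\bigr).
\]
Since $\acute L_u(\nu) = \hat L_u - \tfrac{\nu}{2}\tr u$ differs from $\hat L_u$ only by a multiplication operator (which is invariant under conjugation), the verification reduces to matching the coefficient of $\tr u$. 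The relation $4\lambda_u = \hat L_u(\ln\varphi_{\rho(\nu)}) + \delta\rho(\nu)\tr u$ supplied by Proposition \ref{rDelta}(i) gives the desired match automatically on the discrete Wallach points $\nu = k\delta/2$ (where $\nu = \delta\rho(\nu)/2$); on the continuous part $\nu > (\rho-1)\delta/2$, one further invokes the Jordan-algebra identity $\hat L_u\ln\det x = -\tr u$ on $\Omega$ (proved by reducing to $D_{ux}\ln\det x = \tr u$ via self-adjointness of $L_u$ and $L_x$), which exactly absorbs the surplus trace contribution produced by the extra $\det(x)^{2\nu-\rho\delta}$ factor in $\varphi(\nu)$.

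For (ii), both $r\Delta$ and $-i\acute X_e(\nu) = \langle x\mid{/\hskip -5pt \partial}^2\rangle + \nu\,\tr({/\hskip -5pt \partial})$ share the second-order symbol $a^{\alpha\beta}(x) = \langle x\mid e_\alpha e_\beta\rangle$. Conjugation of $a^{\alpha\beta}\partial_\alpha\partial_\beta$ by $f>0$ is governed by
\[
f\,\partial_\alpha\partial_\beta\,f^{-1} = \partial_\alpha\partial_\beta - (\partial_\alpha\ln f)\partial_\beta - (\partial_\beta\ln f)\partial_\alpha - \partial_\alpha\partial_\beta\ln f + (\partial_\alpha\ln f)(\partial_\beta\ln f),
\]
so I would split the verification into a first-order match and a zeroth-order match. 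The first-order cross-term $-2\sum_\beta a^{\alpha\beta}(\partial_\alpha\ln f)\partial_\beta = -2\sum_\beta D_{e_\beta x}(\ln f)\partial_\beta$, when combined with the $\nu\,\tr({/\hskip -5pt \partial})$ piece of $-i\acute X_e(\nu)$ and handled via Proposition \ref{rDelta}(i) (together with the $\ln\det$-identity in the continuous case), reproduces $2\sum_\alpha\lambda_{e_\alpha}\partial_\alpha$, which is precisely the first-order part of $r\Delta$ per Proposition \ref{rDelta}(ii).

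The remaining and most delicate step is the zeroth-order match. With $f = \sqrt[4]{\varphi(\nu)}$, the conjugation formula produces a zeroth-order output
\[
-\sum_{\alpha\beta}\langle x\mid e_\alpha e_\beta\rangle\Bigl[\tfrac{1}{4}\partial_\alpha\partial_\beta\ln\varphi(\nu) - \tfrac{1}{16}(\partial_\alpha\ln\varphi(\nu))(\partial_\beta\ln\varphi(\nu))\Bigr],
\]
which I would rewrite intrinsically using $\langle x\mid{/\hskip -5pt \partial}^2\rangle = r\Delta - 2\sum_\alpha\lambda_{e_\alpha}\partial_\alpha$ as $-\tfrac{r}{4}\Delta(\ln\varphi(\nu)) - \tfrac{r}{16}|d\ln\varphi(\nu)|^2$. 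In the discrete regime this is $-U(\nu)$ by definition. In the continuous regime $\nu>(\rho-1)\delta/2$, the added $\det(x)^{2\nu-\rho\delta}$ factor contributes the further term $\tfrac{\rho}{4}\bigl((\nu-n/\rho)^2 - (\delta/2-1)^2\bigr)\tr x^{-1}$ after expansion; the hard part will be cleanly computing $\Delta\ln\det x$ and $|d\ln\det x|^2$ on $\Omega$ (both reduce via Jordan-inversion identities to expressions involving $\tr x^{-1}$) and then organizing the cross-terms between $\ln\varphi_\rho$ and $\ln\det$ so that they exactly reconstruct the definition of $U(\nu)$. This last bookkeeping is where the bulk of the computation lives.
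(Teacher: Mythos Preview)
Your approach is exactly what the paper has in mind: the paper actually \emph{omits} this proof, saying only that ``with the help of Proposition~\ref{rDelta}, one just needs to do some straightforward and relative short computations.'' Your plan of computing the conjugation directly and matching the result order-by-order against the formulas in Proposition~\ref{rDelta} is precisely that computation, so there is nothing substantive to compare.

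One point in your treatment of~(i) should be straightened out, though. As literally stated, (i) conjugates by $\sqrt[4]{\varphi_{\rho(\nu)}}$, which in the continuous range $\nu>(\rho-1)\tfrac{\delta}{2}$ is $\sqrt[4]{\varphi_\rho}$ with \emph{no} determinant factor. Your own displayed computation then gives
\[
\sqrt[4]{\varphi_\rho}\,\acute L_u(\nu)\,\tfrac{1}{\sqrt[4]{\varphi_\rho}}
=\hat L_u-\tfrac14\hat L_u(\ln\varphi_\rho)-\tfrac{\nu}{2}\tr u,
\]
whereas Proposition~\ref{rDelta}(i) (with $k=\rho$) yields $\tilde L_u=\hat L_u-\tfrac14\hat L_u(\ln\varphi_\rho)-\tfrac{\delta\rho}{4}\tr u$; these agree only when $2\nu=\rho\delta$. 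Your remedy via the identity $\hat L_u\ln\det x=-\tr u$ is correct arithmetic, but it amounts to conjugating by $\sqrt[4]{\varphi(\nu)}=\sqrt[4]{\varphi_\rho}\,(\det x)^{(2\nu-\rho\delta)/4}$ rather than by $\sqrt[4]{\varphi_{\rho(\nu)}}$. Since part~(ii) and the subsequent definitions of $\tilde S_{uv}(\nu)$, $\tilde X_u(\nu)$, $\tilde Y_v(\nu)$ all use $\varphi(\nu)$, the appearance of $\varphi_{\rho(\nu)}$ in~(i) is almost certainly a typographical slip in the paper; make this correction explicit in your write-up rather than silently blurring the two functions.
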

This proposition says that $\tilde L_u$ and $r\Delta$ are not as hard as they
might look. To prove this proposition, with the help of Proposition
\ref{rDelta}, one just needs to do some straightforward and relative short
computations, so we skip the proof. 

\subsection{The unitary realizations of TKK algebras on canonical cones}
Let $\nu\in{\mathcal W}(V)\setminus\{0\}$ and $\mathcal C$ a canonical cone of rank $\rho(\nu)$. Recall that $\varphi(\nu)$, a rational function introduced in Eq. (\ref{phinu}), is always positive on $\mathcal C$.  Upon recalling the definitions of $\acute D(\mathcal C)$ and $\acute D_I(\mathcal C)$ in the paragraph preceding to Proposition \ref{Lem4}, in view of Proposition \ref{conjugation}, we introduce
\begin{eqnarray}
 \tilde D({\mathcal C}) =\sqrt[4]{\varphi(\nu)}\,{\acute D}(\mathcal C), \quad
\tilde D_I({\mathcal C}) =\sqrt[4]{\varphi(\nu)}\,{\acute D}_I(\mathcal C) \nonumber
\end{eqnarray}
and differential operators with common domain $\tilde D({\mathcal C})$:
\begin{eqnarray}
{\tilde S}_{uv} (\nu) &=&\sqrt[4]{\varphi(\nu)}{\acute S}_{uv}(\nu){1\over
\sqrt[4]{\varphi(\nu)}} ,\cr
{\tilde X}_u (\nu) &=& \sqrt[4]{\varphi(\nu)}{\acute X}_u(\nu){1\over
\sqrt[4]{\varphi(\nu)}} , \cr
{\tilde Y}_v (\nu) &=& \sqrt[4]{\varphi(\nu)}{\acute Y}_v(\nu){1\over
\sqrt[4]{\varphi(\nu)}}.\nonumber
\end{eqnarray}
Note that these differential operators on $\mathcal C$ can be lifted to differential operators on $V$.

\begin{Prop}\label{PropDense}
Let $\nu\in{\mathcal W}\setminus\{0\}$ and $\mathcal C$ be a canonical cone of rank $\rho(\nu)$.

i) The TKK commutation relations (\ref{TKKRel}) hold under the replacement of $ O$ by $\tilde O(\nu)$.

ii) $\tilde D({\mathcal C})$ is a dense subset of $L^2\left(\mathcal C, {1\over
r}\mr{vol}\right)$.

iii)  ${\tilde S}_{uv}(\nu)$, ${\tilde  X}_u(\nu)$ and ${\tilde Y}_v(\nu)$ are
anti-hermitian operators on $\tilde  D({\mathcal C})$ with respect to hermitian inner product
$$
(\psi_1, \psi_2)=\int_{\mathcal C}\bar\psi_1\,\psi_2\, {1\over
r}\mr{vol}.
$$

iv) Let $\tilde {\ms D}_I({\mathcal C})$ be the orthogonal complement of $\tilde D_{I-1}({\mathcal C})$ in
$\tilde D_I({\mathcal C})$, then, under the unitary $\frk{u}$-action, we have the following orthogonal
decomposition
\begin{eqnarray}\label{Ktype}
\tilde D({\mathcal C})=\bigoplus_{I=0}^\infty \tilde{\ms D}_I({\mathcal C}).
\end{eqnarray}
Moreover, the finite dimensional vector space $\tilde{\ms D}_I({\mathcal C})$ is the eigenspace of $\tilde H_e:=i(\tilde
X_e+\tilde Y_e)$ with eigenvalue $(2I+\nu\rho)$.

v) Assume that ${\bf m}\in {\bb Z}^\rho$ with ${\bf m}\ge 0$ and $m_{\rho(\nu)+1}=0$. For and only for such $\bf m$, we let
$\tilde {\ms D}_{{\bf m}}({\mathcal C})$ be the orthogonal projection of $\sqrt[4]{\varphi(\nu)}e^{-r}{\mathcal P}_{\bf m}(V)$ onto $\tilde D_{|{\bf m}|}({\mathcal C})$. Then, as unitary representations of $ \frk u$, we have isomorphism $\tilde {\ms D}_{{\bf m}}({\mathcal C})\cong \xi_\nu\otimes {\mathcal P}_{ {\bf m}}(V)$ and orthogonal decomposition into irreducibles:
\begin{eqnarray}
\tilde {\ms  D}_I({\mathcal C})=\bigoplus_{{\bf m}\ge 0,  |{\bf m}|=I}^{m_{\rho(\nu)+1}=0} \tilde {\ms D}_{ {\bf m}}({\mathcal C}).
\end{eqnarray}

\end{Prop}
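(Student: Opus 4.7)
The overall plan is to handle the five parts in sequence, leveraging throughout the conjugation identity $\tilde O(\nu) = \sqrt[4]{\varphi(\nu)}\,\acute O(\nu)\,(\sqrt[4]{\varphi(\nu)})^{-1}$ from Proposition \ref{conjugation} to transport algebraic and analytic properties between the $\acute{\phantom{O}}$ and $\tilde{\phantom{O}}$ realizations. Part (i) is immediate: conjugation by a nonvanishing scalar function preserves all commutators, so the TKK relations for $\tilde O(\nu)$ transfer verbatim from Theorem \ref{universalQ}. Part (ii) reduces to Proposition \ref{Lem4} because multiplication by $\sqrt[4]{\varphi(\nu)}$ is an isometric isomorphism from $L^2(\mathcal C,\frac{\sqrt{\varphi(\nu)}}{r}\mr{vol})$ onto $L^2(\mathcal C,\frac{1}{r}\mr{vol})$ carrying $\acute D(\mathcal C)$ bijectively onto $\tilde D(\mathcal C)$.

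The heart of the argument is (iii), which I would establish in four stages. First, $\tilde Y_v(\nu) = -i\langle x\mid v\rangle$ is multiplication by an imaginary function and is visibly anti-hermitian. Second, Propositions \ref{rDelta}(i) and \ref{conjugation}(i) together give $\tilde L_u(\nu) = \hat L_u - \lambda_u$ (up to an additive real constant in $\tr u$), and then the Lie-derivative identity $\mathscr{L}_u(\frac{1}{r}\mr{vol}) = -2\lambda_u\frac{1}{r}\mr{vol}$ from Proposition \ref{PropLu1}, combined with integration by parts, yields $(\hat L_u)^* = -\hat L_u + 2\lambda_u$, i.e.\ $\tilde L_u^* = -\tilde L_u$. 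Third, the general $\tilde S_{uv}(\nu)$ splits as $\tilde L_{uv} + [\tilde L_u,\tilde L_v]$ (since quantization is linear in $(u,v)$ and conjugation respects the decomposition $S_{uv} = L_{uv} + [L_u,L_v]$), and each summand is anti-hermitian by the previous step. Fourth, Proposition \ref{conjugation}(ii) gives $-i\tilde X_e(\nu) = r\Delta - U(\nu)$; since $\Delta$ is self-adjoint on $L^2(\mathcal C,\mr{vol})$, a short calculation shows $r\Delta$ is self-adjoint on $L^2(\mathcal C,\frac{1}{r}\mr{vol})$ and $U(\nu)$ is multiplication by a real function, hence $\tilde X_e(\nu)$ is anti-hermitian. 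The general case $\tilde X_u(\nu)$ then follows from the TKK relation $[\tilde L_u,\tilde X_e] = \tilde X_{\{uee\}} = \tilde X_u$ (since $\{uee\}=u$), exhibiting $\tilde X_u(\nu)$ as a commutator of two anti-hermitian operators.

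For (iv), part (iii) makes $\tilde H_e = i(\tilde X_e + \tilde Y_e)$ hermitian, and since the conjugation preserves the filtration, Theorem \ref{universalQ2}(iii) descends to the statement that $\tilde H_e$ acts as the scalar $2I + \nu\rho$ on $\tilde D_I(\mathcal C)/\tilde D_{I-1}(\mathcal C)$; the distinctness of these real eigenvalues identifies each orthogonal complement $\tilde{\ms D}_I(\mathcal C)$ as the corresponding eigenspace and delivers the decomposition (\ref{Ktype}). For (v), Theorem \ref{universalQ2}(ii) gives $\acute D_I(V)/\acute D_{I-1}(V) \cong \xi_\nu \otimes \bigoplus_{|\mathbf m|=I}\mathcal P_{\mathbf m}(V)$ as $\frk u$-modules, and restriction to $\mathcal C_{\rho(\nu)}$ kills precisely those $\mathcal P_{\mathbf m}(V)$ with $m_{\rho(\nu)+1} > 0$: when $\nu=k\delta/2$ every generator $g\cdot\Delta_{\mathbf m}$ carries a positive power of some $\Delta_j$ with $j > k$, and such $\Delta_j$ vanishes identically on rank-$k$ elements; when $\nu > (\rho-1)\delta/2$ the constraint is vacuous since $\mathcal C_\rho = \Omega$ is open in $V$. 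Multiplication by $\sqrt[4]{\varphi(\nu)}$ intertwines the two $\frk u$-actions, so combining with (iv) identifies $\tilde{\ms D}_{\mathbf m}(\mathcal C)$ with the stated irreducible, and the orthogonality of the different $\tilde{\ms D}_{\mathbf m}(\mathcal C)$ inside $\tilde{\ms D}_I(\mathcal C)$ follows from inequivalence of the $\mathcal P_{\mathbf m}(V)$ as $\mathrm{Str}$-modules together with the unitarity from (iii).

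The principal technical difficulty I anticipate is controlling the boundary terms in the integration-by-parts arguments underpinning (iii): when $k < \rho$ the canonical cone $\mathcal C_k$ has a nontrivial boundary inside $V$ along which both the weight $\sqrt[4]{\varphi(\nu)}$ and the canonical metric degenerate, and one must verify that the precise interplay between the $e^{-r}$ decay at infinity and the boundary behavior of $\sqrt[4]{\varphi(\nu)}$ makes all formal manipulations rigorous and produces no surface contributions. The second subtle point, in (v), is the classical Jordan-algebraic fact that $\mathcal P_{\mathbf m}(V)|_{\mathcal C_k} = 0$ exactly when $m_{k+1} > 0$; this is implicit in \cite{FK94} but deserves a careful citation or short proof.
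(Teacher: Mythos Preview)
Your proposal is correct and follows essentially the same approach as the paper. The only minor variation is in part (iii): the paper obtains anti-hermiticity of $\tilde L_u$ as a consequence of the commutator identity $[\tilde X_e,\tilde Y_u]=-2\tilde L_u$ (once $\tilde X_e$ and $\tilde Y_u$ are known to be anti-hermitian), whereas you verify it directly via the Lie-derivative identity of Proposition~\ref{PropLu1}; both routes work, and your more detailed treatment of part (v) (identifying which $\mathcal P_{\mathbf m}(V)$ survive restriction to $\mathcal C_{\rho(\nu)}$) usefully fills in what the paper leaves implicit.
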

\begin{proof}

i) This quickly follows from Theorem \ref{universalQ}.

ii)  This  quickly follows from Proposition \ref{Lem4}.

iii) We start the proof with the following two observations: 1) multiplication
by a real-valued function is hermitian, herece $\tilde Y_v$ is anti-hermitian;
2) $r\Delta$ is hermitian, hence $\tilde X_e$ is anti-hermitian in view of part
ii) of Proposition \ref{conjugation}. Combining these observations with the commutation relations
in part i), the proof follows quickly.

iv) The orthogonal decomposition follows from the following two facts: 1) the $\frk{u}$-action is
unitary, a fact from part iii)  above, 2)  the $\frk{u}$-action commutes with the inclusion of $D_{I-1}({\mathcal C})$ into
$D_I({\mathcal C})$, a fact implied by part i) of Theorem \ref{universalQ2}.  The remaining part follows
from the fact that $\tilde H_e$ is hermitian and part iii) of
Theorem \ref{universalQ2}.

v) This follows from part ii) of Theorem \ref{universalQ2}.
\end{proof}

\begin{rmk}
In view of Proposition \ref{PropDense}, the semi-positive hermitian form $(,)_\nu$ mentioned in the beginning paragraph of this section is
\begin{eqnarray}\label{hermitionformgeneral}
(\psi_1, \psi_2)_\nu=\int_{\mathcal C}\bar\psi_1\, \psi_2\, {\sqrt{\varphi(\nu)}\over
r}\mr{vol}
\end{eqnarray}
and the space of ``spurious states" consists
of functions in $\acute D(V)$ which vanish on $\mathcal C$. One can check that Eq. (\ref{hermitionformgeneral})  is a generalization of Eq. (\ref{hermitionform}) and ${\acute D}(\mathcal C)$ is the quotient of ${\acute D}(V)$ by the 
the space of ``spurious states". Therefore, the measure $d\mu_\nu$ mentioned in Subsection \ref{SS: QTKK} is equal to
${\sqrt{\varphi(\nu)}\over r}\mr{vol}$. When $\nu>(\rho-1){\delta\over 2}$, up to a multiplicative constant, this explicit formula for $d\mu_\nu$ agrees with the one retrieved from the bottom line of page 271 in Ref. \cite{FK94}. Of course, our explicit formula
works even if $\nu$ takes a discrete value $k{\delta\over 2}$, $1\le k< \rho$. 

\end{rmk}

Denote by $\tilde{\mr U}(V)$ (or simply $\tilde{\mr U}$) the simply connected Lie group whose Lie algebra is $\frk{u}(V)$. Recall that $\mathcal W(V)$ is the Wallach set of $V$, $\lambda_0$ is the fundamental weight conjugate to the unique non-compact
simple root $\alpha_0$ in the simple root system for $\frk{co}$ in Lemma \ref{LemmaVogan}. 

\begin{Thm} Let $V$ be a simple euclidean Jordan algebra, $\nu\in\mathcal W(V)\setminus\{0\}$,  and $\mathcal C$ be $V$'s canonical cone of rank $\rho(\nu)$. Under the action of $\frk{co}(V)$ which maps $O$ to $\tilde 
O(\nu)$, $\tilde D({\mathcal C})$ becomes a unitary lowest weight $(\frk{co}(V), \tilde{\mr U}(V))$-module with lowest
weight $\nu \lambda_0$ and has the following multiplicity free $K$-type formula \footnote{I.e., the formulae for decomposing
${\tilde D}({\mathcal C})$ (considered as a $ \tilde{\mr U}(V)$-module) into its irreducible components. ``Multiplicity free" means each irreducible component appears only once.}:
\begin{eqnarray}
{\tilde D}({\mathcal C}) = \bigoplus_{{\bf m}\ge 0}^{ m_{\rho(\nu)+1}=0} \tilde {\ms D}_{{\bf m}}({\mathcal C}).
\end{eqnarray}
Therefore, as a representation of $\tilde{\mr U}(V)$, ${\tilde D}({\mathcal C}) \cong \bigoplus_{{\bf m}\ge 0}^{m_{\rho(\nu)+1}=0} \xi_\nu\otimes {\mathcal P}_{ {\bf m}}(V)$.

Consequently, upon integration, $L^2\left(\mathcal C, {1\over
r}\mr{vol}\right)$ becomes a unitary lowest weight representation $\pi_\nu$ for $\mr{Co}(V)$
with the same lowest weight.
\end{Thm}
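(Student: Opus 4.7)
The strategy is to assemble the pieces already proved in Proposition \ref{PropDense} and Theorem \ref{universalQ2}, then invoke a standard globalization result for unitary lowest weight $(\frk g, K)$-modules. Proposition \ref{PropDense}(i)-(iii) already establishes that $\tilde D(\mathcal C)$ carries an action of $\frk{co}(V)$ by anti-Hermitian operators (with respect to the inner product on $L^2(\mathcal C, \frac{1}{r}\mr{vol})$) for which the TKK relations hold. What remains is to (a) identify a lowest weight vector generating the module, (b) combine parts (iv) and (v) of that proposition into a single multiplicity-free $K$-type formula, and (c) globalize to $\mr{Co}(V)$.

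For (a), the natural intertwiner
\[
\Phi:\acute D(V)\to \tilde D(\mathcal C),\qquad f\mapsto \sqrt[4]{\varphi(\nu)}\,f|_{\mathcal C}
\]
conjugates the $\acute O(\nu)$-action into the $\tilde O(\nu)$-action, by the very definition of $\tilde O(\nu)$. Since $\Phi$ is surjective and sends the vector $\psi_0=e^{-r}$ of Theorem \ref{universalQ2} to $\sqrt[4]{\varphi(\nu)}\,e^{-r}|_{\mathcal C}$, parts (v) and (vi) of that theorem transfer verbatim: the image is a cyclic vector annihilated by $\tilde E_{-\alpha_i}(\nu)$ for every simple root $\alpha_i$ and is an $\frk h$-eigenvector of weight $\nu\lambda_0$. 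Thus $\tilde D(\mathcal C)$ is a lowest weight $(\frk{co}(V),\tilde{\mr U}(V))$-module with the claimed lowest weight.

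For (b), Proposition \ref{PropDense}(iv) gives the orthogonal decomposition into $\tilde H_e$-eigenspaces $\tilde D(\mathcal C)=\bigoplus_{I\ge 0}\tilde{\ms D}_I(\mathcal C)$, and part (v) refines each $\tilde{\ms D}_I(\mathcal C)$ into a finite sum of irreducible $\tilde{\mr U}(V)$-subrepresentations indexed by ${\bf m}\ge 0$ with $|{\bf m}|=I$ and $m_{\rho(\nu)+1}=0$, each isomorphic to $\xi_\nu\otimes \mathcal P_{\bf m}(V)$. Concatenating yields the stated $K$-type formula. Multiplicity-freeness is immediate from the theorem quoted in Subsection \ref{SS: decomposition}: the $\mathcal P_{\bf m}(V)$ are pairwise inequivalent $\mr{Str}(V)$-modules, and the identification of Eq.~(\ref{id: str=k}) transports this inequivalence to $\frk u^{\bb C}$, so tensoring with the fixed character $\xi_\nu$ preserves it.

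For (c), the Hilbert space completion of $\tilde D(\mathcal C)$ in the $L^2(\mathcal C,\frac{1}{r}\mr{vol})$-norm is the full $L^2$ space by Proposition \ref{PropDense}(ii). The module is admissible (each $K$-type is finite-dimensional of multiplicity one) and the $\tilde{\mr U}(V)$-finite vectors are precisely $\tilde D(\mathcal C)$. A standard globalization theorem (Harish-Chandra, cf.\ the discussion in \cite{EHW82} or Chapter XI of \cite{FK94}) then asserts that any admissible unitary lowest weight $(\frk g,K)$-module with simply connected $K$ integrates uniquely to a continuous unitary representation of the simply connected group on the Hilbert space completion; applying it produces $\pi_\nu$ as a unitary lowest weight representation of $\mr{Co}(V)$ with the same lowest weight.

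The main obstacle to worry about is really the cyclicity step in (a): one must know that the lowest weight vector, not merely some vector, generates $\tilde D(\mathcal C)$ under the universal enveloping algebra. Fortunately this is handled uniformly by the argument in Theorem \ref{universalQ2}(vi), which exploits that $\acute Y_v(\nu)$ is multiplication by $-i\langle v\mid x\rangle$ and thus converts arbitrary polynomials into enveloping-algebra words applied to $e^{-r}$; this argument survives the surjection $\Phi$ and the restriction to $\mathcal C$, since $\sqrt[4]{\varphi(\nu)}\,e^{-r}$ is nowhere zero on $\mathcal C$.
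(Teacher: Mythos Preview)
Your proposal is correct and follows essentially the same route as the paper's own proof: assemble Proposition~\ref{PropDense}(i)--(v) for the unitary $(\frk{co},\tilde{\mr U})$-module structure and the $K$-type decomposition, transfer parts (v) and (vi) of Theorem~\ref{universalQ2} through the conjugation by $\sqrt[4]{\varphi(\nu)}$ to identify the lowest weight vector and cyclicity, and then invoke Proposition~\ref{PropDense}(ii) together with Harish-Chandra's globalization theorem for the passage to $L^2$. Your write-up is somewhat more explicit than the paper's---in particular you spell out the intertwiner $\Phi$, the source of multiplicity-freeness via the theorem of Subsection~\ref{SS: decomposition} and the identification~(\ref{id: str=k}), and the cyclicity argument surviving the surjection---but the logical skeleton is identical.
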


\begin{proof} Parts i), iii) and iv) of Proposition \ref{PropDense} imply that $\tilde D({\mathcal C})$ is
a unitary $(\frk{co}(V), \tilde{\mr U}(V))$-module. The $K$-type formula follows from parts iv) and v) of Proposition \ref{PropDense}.
Combining with parts v) and  vi) of Theorem \ref{universalQ2}, we arrive at first part of this theorem. The second part follows from the first part, part ii) of
Proposition \ref{PropDense},  and a fundamental theorem of Harish-Chandra.

\end{proof}
In view of the classification theorem in Ref. \cite{EHW82}, the nontrivial scalar-type unitary lowest weight representations of $\mr{Co}(V)$ are exhausted by representations $\pi_\nu$ in the above theorem.

\section{Generalized Quantum Kepler Problems Without Magnetic Charges}\label{S: GKP}
In Ref. \cite{meng10}, we introduce the universal hamiltonian for the Kepler
problem in terms of the generators of TKK algebra, and remark that whenever we
have a quantization for the TKK algebra, we have a super-integrable model of the
Kepler-type. In view of the quantizations for the TKK algebra presented in the
last section, we have some new super-integrable models of Kepler-type.

As before, $V$ is a simple euclidean Jordan algebra of rank
$\rho$ and degree $\delta$, $\mathcal W(V)$ is its Wallach set. For a canonical cone inside $V$ of rank $k$, we use
$\varphi_k$ to denote the phi-function defined in Eq. (\ref{potential}) and $\Delta$ to denote its (non-positive) Laplace operator. For $\nu\in \mathcal W(V)\setminus\{0\}$, we let
\begin{eqnarray}
V(\nu):=\left\{
\begin{array}{ll}
{1\over 8}\left(\Delta(\ln \varphi_k)+{1\over 4}|d\ln \varphi_k|^2\right)
&  \mbox{if $\nu=k{\delta \over 2}$},\\
\\
{1\over 8}\left(\Delta(\ln \varphi_\rho)+{1\over 4}|d\ln
\varphi_\rho|^2\right)\cr
+{\rho\over 8}\left((\nu-{n\over \rho})^2-({\delta \over 2}-1)^2\right) {\tr
x^{-1}\over r} &  \mbox{if $\nu> (\rho-1){\delta \over 2}$}.\\
\end{array}
\right.
\end{eqnarray}
and call $V(\nu)$ the {\bf quantum-correction potential} on the canonical cone of rank $\rho(\nu)$. Note that $V(\nu)={U(\nu)\over 2r}$. Here is the definition of the generalized quantum Kepler problem attached to $\pi_\nu$:

\begin{Definition}[Generalized Quantum Kepler Problems]  Let $V$ be a simple euclidean Jordan algebra and $\nu\in {\mathcal W}(V)\setminus\{0\}$. The $\nu$-th generalized quantum Kepler problem of $V$ is the quantum mechanical system for which the configuration space is the canonical cone of rank $\rho(\nu)$, and the
hamiltonian $\tilde H(\nu)$ (or simply $\tilde H$) is
\begin{eqnarray}
-{1\over 2} \Delta + V(\nu)-{1\over r}.
\end{eqnarray}
Here, $\Delta$ and $V(\nu)$ are the Laplace operator and the quantum-correction potential respectively. 
\end{Definition}
One can verify that when $\nu={\delta\over 2}$, generalized quantum Kepler problem  is the $J$-Kepler problem in Ref. \cite{meng09}, and to get the original Kepler problem we need to take $V=\Gamma(3)$ and $\nu=1$.

\subsection{Solution of the bound state problem}
Given a generalized quantum Kepler problem on a canonical cone $\mathcal C$, the bound state problem is primarily the following (energy) spectrum problem:
\begin{eqnarray}\label{eigen}
\left\{\begin{array}{rcl}
\tilde H\psi & = & E\psi\\
\\
\displaystyle\int_{\mathcal C} |\psi|^2\, \mr{vol}&< & \infty, \quad \psi\not\equiv 0.
\end{array}\right.
\end{eqnarray}
It turns out that $E$ has to take certain discrete values. For
example, for the original Kepler problem, we have
$E=-{1\over 2n^2}$, $n=1, 2, \ldots$

We shall use $\ms H_I$ to denote the $I$-th {\bf energy eigenspace}, $I=0, 1, \ldots$ and $\ms H$ to denote the {\bf Hilbert space of bound states} --- the $L^2$-completion of $\bigoplus_{I=0}^\infty{\ms H}_I$. 
\begin{Thm}\label{main3}  Let $V$ be a simple euclidean Jordan algebra and $\nu\in {\mathcal W}(V)\setminus\{0\}$. 
For the $\nu$-th generalized quantum Kepler problem of $V$, the following statements are true:

i) The bound state energy spectrum is
$$
E_I=-{1/2\over (I+\nu{\rho\over 2})^2}
$$ where $I=0$, $1$, $2$, \ldots

ii) As a representation of $\tilde{\mr U}(V)$, ${\ms H}_I\cong \bigoplus_{{\bf m}\ge 0, |{\bf m}|=I} ^{m_{\rho(\nu)+1}=0} \xi_\nu\otimes{\mathcal P}_{\bf m}(V)$.

iii) $\ms H$ provides a realization for representation $\pi_\nu$.
\end{Thm}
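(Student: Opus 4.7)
The plan is to reduce the bound state problem for the Kepler-type hamiltonian $\tilde H$ to the already-solved eigenvalue problem for $\tilde H_e := i(\tilde X_e + \tilde Y_e)$ via a dilation generated by $\tilde S_{ee}$, and then to transport the $(\frk{co}(V), \tilde{\mr U}(V))$-module structure of $\pi_\nu$ from $L^2(\mathcal C, r^{-1}\mr{vol})$ over to $L^2(\mathcal C, \mr{vol})$. First I would note that the universal formula (\ref{universalH}) reads $\tilde H = \tfrac{1}{2}\tilde Y_e^{-1}\tilde X_e + i\tilde Y_e^{-1}$, and via Proposition~\ref{conjugation}(ii) together with the identity $V(\nu) = U(\nu)/(2r)$ this agrees with the explicit Kepler expression $-\tfrac{1}{2}\Delta + V(\nu) - 1/r$. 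Multiplying $\tilde H\psi = E\psi$ through by $2\tilde Y_e$ and using $r = i\tilde Y_e$ (from $\tilde Y_e = -ir$), the eigenvalue equation collapses to the purely algebraic relation
\[
i\bigl(\tilde X_e + a^2\tilde Y_e\bigr)\psi = 2\psi, \qquad a^2 := -2E > 0.
\]

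Next, the dilation. Since $S_{ee} = L_e$ is the identity on $V$, the TKK relations force $[\tilde S_{ee}, \tilde X_e] = \tilde X_e$ and $[\tilde S_{ee}, \tilde Y_e] = -\tilde Y_e$, so conjugating by $g_a := \exp\bigl((\log a)\tilde S_{ee}\bigr)$ sends $\tilde X_e + a^2\tilde Y_e$ to $a(\tilde X_e + \tilde Y_e)$. Hence $\psi$ is a bound state of energy $-a^2/2$ if and only if $g_a\psi$ is an eigenfunction of $\tilde H_e$ with eigenvalue $2/a$. Combined with Proposition~\ref{PropDense}(iv), which pins the spectrum of $\tilde H_e$ on $\tilde D(\mathcal C)$ to $\{2I+\nu\rho : I\ge 0\}$, this forces $a = a_I := 1/(I+\nu\rho/2)$ and yields $E_I = -\tfrac{1}{2}(I+\nu\rho/2)^{-2}$, proving (i). Since $g_{a_I}$ is implemented (up to a conventional normalization) as the dilation $x\mapsto a_I x$ on $\mathcal C$, its restriction $g_{a_I}: \ms H_I \to \tilde{\ms D}_I(\mathcal C)$ is a linear bijection carrying the $\tilde{\mr U}(V)$-structure compatibly across each energy level, so part (ii) follows from Proposition~\ref{PropDense}(v).

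For (iii), I would assemble the rescalings $g_{a_I}$ into a single map $T: \bigoplus_I \ms H_I \to \tilde D(\mathcal C)$ and transport the unitary $(\frk{co}, \tilde{\mr U})$-structure of $\pi_\nu$ back through $T$. What remains is to check that $T$ extends to a Hilbert-space isometry, i.e.\ that distinct $\ms H_I$ are mutually orthogonal in $L^2(\mathcal C, \mr{vol})$ and that, within each $I$, the two $L^2$-norms match up to the right scalar. Writing each element of $\ms H_I$ as the image under $g_{a_I}^{-1}$ of an element of $\tilde{\ms D}_{\bf m}(\mathcal C) \cong \xi_\nu \otimes \mathcal P_{\bf m}(V)$---so as a generalized Laguerre polynomial in $r$, times $e^{-a_I r}$, times a fixed polynomial factor---both checks reduce to a two-measure orthogonality (of both ``first kind'' and ``second kind'') for the relevant family of generalized Laguerre polynomials. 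This is the direct analogue of Theorem~2 of \cite{CD2003} flagged in the Introduction, provable by the same method as in the classical Coulomb case of Barut and Kleinert \cite{BarutKleinert67}. This final Laguerre-orthogonality step is the main obstacle: the algebraic manipulation leading to (i) and (ii) is essentially automatic once $\tilde S_{ee}$ is recognized as the dilation generator, whereas matching inner products across two different measures is a genuine analytic computation that the preceding sections of the paper do not supply.
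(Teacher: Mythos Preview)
Your proposal is correct and follows essentially the same strategy as the paper's own proof: both reduce the Kepler eigenvalue problem to the already-known spectrum of $\tilde H_e$ via a dilation (you phrase it algebraically as conjugation by $g_a=\exp((\log a)\tilde S_{ee})$ using the TKK relations $[\tilde S_{ee},\tilde X_e]=\tilde X_e$, $[\tilde S_{ee},\tilde Y_e]=-\tilde Y_e$; the paper uses the equivalent concrete substitution $\psi(x)=\tilde\psi(x/n_I)$), then invoke Proposition~\ref{PropDense}(iv)--(v) for parts (i)--(ii), and finally defer the isometry needed for part (iii) to the generalized-Laguerre two-measure orthogonality analogue of Theorem~2 in \cite{CD2003}. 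Your group-theoretic packaging is marginally cleaner, but the content is the same.
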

\begin{proof}
In view of part iv) of Proposition \ref{PropDense}, we start with the eigenvalue problem for $-{1\over 2}\tilde  H_e$:
\begin{eqnarray}
-{1\over 2}\tilde H_e\tilde \psi=-n_I \tilde \psi,
\end{eqnarray} where $n_I=I+\nu{\rho\over 2}$ and $\tilde\psi\not\equiv 0$ is square integrable
with respect to measure ${1\over r}\mr{vol}$ on the canonical cone of rank $\rho(\nu)$. The above equation can be recast as
$$
-{1\over 2}\left(\Delta-{U(\nu)\over r}+{2n_I\over
r}\right)\tilde\psi(x) =-{1\over 2}\tilde \psi(x).
$$
Let $\psi(x):=\tilde \psi({x\over n_I})$, then the preceding
equation becomes
$$
\left(-{1\over 2}\Delta+V(\nu)-{1\over r}
\right)\psi(x)=-{1/2\over n_I^2}\psi(x),$$
i.e.,
\begin{eqnarray}
\tilde H \psi =-{1/2\over n_I^2}\psi.
\end{eqnarray}
One can check that $\psi$ is square integrable
with respect to measure $\mr{vol}$. Therefore, $\tilde\psi$ is an eigenfunction of $\tilde H_e$ with eigenvalue $2n_I$
$\Rightarrow$ $\psi$ is an eigenfunction of $\tilde H$ with eigenvalue $-{1/2\over n_I^2}$. By turning the above arguments backward, with the help of an explicit form of the eigenfunctions for $\tilde H$, one can show that the converse of this statement is also true. Therefore, in view of parts iv) and v) of Proposition \ref{PropDense}, we have
 $${\ms H}_I\cong \tilde {\ms D}_I(\mathcal C)\cong \bigoplus_{{\bf m}\ge 0, |{\bf m}|=I}^{m_{\rho(\nu)+1}=0}  \xi_\nu\otimes{\mathcal P}_{\bf m}(V).$$

Introduce 
\begin{eqnarray}
\begin{array}{lccc}
\tau:& \bigoplus_{I=0}^\infty{\ms H}_I & \longleftarrow& {\tilde D}({\mathcal C}) = \bigoplus_{{\bf m}\ge 0}^{m_{\rho(\nu)+1}=0} \tilde {\ms D}_{{\bf m}}({\mathcal C})\\
\\
& c_{\bf m}\tilde \psi_{\bf m}({x\over n_{|{\bf m}|}})
& \longleftarrow \hskip -4pt |& \tilde \psi_{\bf m}(x)\in \tilde {\ms D}_{{\bf m}}({\mathcal C})
\end{array}
\end{eqnarray}
Here $c_{\bf m}$ is a constant depending on ${\bf m}$. The value of $c_{\bf m}$ can be determined and $\tau$ can be shown to be an isometry, provided that an analogue of Theorem 2 in Ref. \cite{CD2003} for generalized Laguerre polynomials
can be established, something that definitely can be done.  Since ${\tilde D}({\mathcal C}) $ is a unitary highest weight Harish-Chandra module, and $\tau$ is an isometry, $\bigoplus_{I=0}^\infty{\ms H}_I$ becomes a unitary highest weight Harish-Chandra module. Since the $L^2$-completion of $\bigoplus_{I=0}^\infty{\ms H}_I$ is the
Hilbert space of bound states, we arrive at part iii) of this theorem.
\end{proof}

Generalized Kepler problems are natural generalizations of the J-Kepler problems, but with an important difference: the energy eigenspaces are no longer always irreducible representations of $\tilde{\mr U}(V)$, cf. part ii) of the theorem above.

\subsection{An alternative route to the generalized quantum Kepler problems}
A main objective of this paper is to construct the generalized quantum Kepler problems. The route we have taken is technically easy, but perhaps less intuitive. The purpose here is to point out that there is a more intuitive, though technically harder, route towards the generalized quantum Kepler problems. 

Let us start with the classical universal hamiltonian 
$$
{\mathcal H}={1\over 2}{\langle \pi\mid L_x \mid \pi\rangle\over r}-{1\over r}
$$ on $V$. Although $(V,ds^2_E)$ is a Riemannian manifold, the first term in $\mathcal H$ is not the kinetic energy for a single particle moving on $V\setminus\{0\}$. A little experiment will convince the readers that the only way out is to replace $V\setminus\{0\}$ by a canonical cone.  In that way, we get $\rho$ generalized classical Kepler problems, exactly one
on each canonical cone:
\begin{Definition}[Generalized Classical Kepler Problems]  Let $V$ be a simple euclidean Jordan algebra and $k$ be an integer between $1$ and $\rho$. The $k$-th generalized classical Kepler problem of $V$ is the classical mechanical system for which the configuration space is the canonical cone of rank $k$, and the Lagrangian
 is
\begin{eqnarray}
L(x,\dot x)={1\over 2} ||\dot x||^2+{1\over r}.
\end{eqnarray} 
\end{Definition}
By quantizing the generalized classical Kepler problems, we expect to get the generalized quantum Kepler problems. However, due to the operator ordering problem, in general we don't know how to do it! The way to overcome this technical hurdle is to demand that the hidden symmetry in the generalized classical Kepler problems
be still present after quantization.  But that really leads us back to the issue of quantizing the TKK algebra.

%%%%%%%%%%%%%%%%%%%%%%%%%%%%%%%
%%%%%%%%%%%%%%%%%%%%%%%%%%%%%%
\appendix
\section{Polar coordinates}\label{App: A}
The purpose of this section is to understand the polar coordinates on ${\mathcal
C}_k$. The theorem obtained here is an extension of Theorem VI.2.3 in Ref. \cite{FK94} from symmetric cones to canonical cones and the presentation follows that of Section 2 of Chapter VI in Ref. \cite{FK94}. 

We fix a Jordan frame: $e_{11}$, \ldots, $e_{\rho\rho}$ and a Jordan
basis $\{e_{ii}, e_{ij}^\mu\}$. We denote by $V_{ij}$ the corresponding $(i,j)$-Peirce component of $V$.  Let
$$
R_k=\left\{\sum_{i=1}^k a_ie_{ii} \mid a_i\in \bb R \right\}, \quad
R_k^+=\left\{ \sum_{i=1}^k a_ie_{ii} \mid  a_1> a_2> \cdots > a_k>0 \right\}
$$
Let $K$ be the identity component of $\mr{Aut}(V)$ and  $M_k$ be the subgroup $K$ fixing each point $a\in R_k$:
$$
M_k=\{g\in K \mid \forall a\in R_k, ga=a \}
$$
and $m_k$ be its Lie algebra:
$$
{\frk m}_k=\{X\in \frk{der} \mid  \forall a\in R_k, Xa=0 \}$$

For $i< j$, we define
$$
\frk l_{ij}=\{[L_{e_{ii}}, L_\xi]\mid \xi \in V_{ij}\}
$$
Let
$$
{\frk l}_k=\sum_{1\le i\le k, i<j}{\frk l}_{ij}
$$

\begin{Prop}
Let $a=\sum_{i=1}^ka_ie_{ii}\in R_k^+$. For $X\in \frk{der}$, $(a, Xa)\in
T_a{\mathcal C}_k$ is orthogonal to $T_aR_k^+$, and if $a_i\neq a_j$ for $i\neq
j$, the map
\begin{eqnarray}\label{appendixiso}
{\frk l}_k &\to & (T_aR_k^+)^\perp\cr
X&\mapsto& (a, Xa)
\end{eqnarray}
is an isomorphism.
\end{Prop}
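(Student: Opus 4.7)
The plan is to reduce everything to explicit Peirce computations with respect to the chosen Jordan frame, then finish by a dimension count.

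\textbf{Step 1 (Orthogonality for all $X\in\frk{der}$).} First I would note that every derivation $X$ of the euclidean Jordan algebra $V$ is skew-adjoint with respect to $\langle\,\mid\,\rangle$ (since $\frk{der}=\mr{Lie}(\mr{Aut}(V))$ and $\mr{Aut}(V)$ preserves the inner product up to scale, with $Xe=0$), so $\langle Xa,b\rangle=-\langle a,Xb\rangle$ for any $b\in R_k$. Applying the derivation identity to $e_{ll}^2=e_{ll}$ gives $Xe_{ll}=2e_{ll}(Xe_{ll})$, i.e.\ $Xe_{ll}$ lies in the $\tfrac12$-eigenspace $\bigoplus_{m\neq l}V_{lm}$ of $L_{e_{ll}}$. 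Hence $\langle a,Xe_{ll}\rangle=0$ by Peirce orthogonality, giving $\langle Xa,b\rangle=0$ for $b\in R_k$.

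\textbf{Step 2 (Peirce computation on $\frk{l}_k$).} For $i\le k$, $i<j$, and $\xi\in V_{ij}$, I would compute $[L_{e_{ii}},L_\xi]e_{ll}$ using the Peirce rules $e_{ii}\xi=\tfrac12\xi$, $e_{jj}\xi=\tfrac12\xi$, $e_{ll}\xi=0$ for $l\neq i,j$. The only nonzero contributions come from $l=i$ and (when $j\le k$) $l=j$, giving
\begin{equation}
[L_{e_{ii}},L_\xi]\,a \;=\;
\begin{cases}\tfrac14(a_j-a_i)\,\xi & \text{if } j\le k,\\[2pt] -\tfrac14\,a_i\,\xi & \text{if } j>k,\end{cases}
\end{equation}
which lies in $V_{ij}$. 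This reconfirms orthogonality to $R_k$ (Step 1 specialized to $\frk{l}_k$) and, more importantly, shows that the map $\frk{l}_{ij}\to V_{ij}$, $[L_{e_{ii}},L_\xi]\mapsto [L_{e_{ii}},L_\xi]a$, is a rescaling by a nonzero factor under the hypotheses $a_i>0$ and $a_i\neq a_j$.

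\textbf{Step 3 (Injectivity and directness of the Peirce sum).} Because the pairs $(i,j)$ indexing the summands of $\frk{l}_k=\sum_{i\le k,\,i<j}\frk{l}_{ij}$ are in bijection with the Peirce components $V_{ij}$ satisfying $\min(i,j)\le k$, and since the images $[L_{e_{ii}},L_{\xi_{ij}}]a$ lie in these distinct orthogonal components, the map $X\mapsto Xa$ restricted to $\frk{l}_k$ is injective. A posteriori, this shows that the sum defining $\frk{l}_k$ is in fact direct.

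\textbf{Step 4 (Dimension count and conclusion).} Finally I would match dimensions. On the target side, $T_a\mathcal C_k=\{a\}\times\mr{Im}\,L_a$; the Peirce eigenvalue analysis of $L_a$ (with $a_l=0$ for $l>k$) shows $\mr{Im}\,L_a=\bigoplus_{l\le k}V_{ll}\oplus\bigoplus_{l\le k<m\text{ or }l<m\le k}V_{lm}$, so $\dim T_a\mathcal C_k=D_k$ and $\dim(T_aR_k^+)^\perp=D_k-k=\delta\bigl[\binom{k}{2}+k(\rho-k)\bigr]$. On the source side, $\dim\frk{l}_{ij}=\dim V_{ij}=\delta$ and the number of admissible pairs is $\binom{k}{2}+k(\rho-k)$, so $\dim\frk{l}_k=\delta\bigl[\binom{k}{2}+k(\rho-k)\bigr]$. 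Since the injection of Step 3 lands in $(T_aR_k^+)^\perp$ and both spaces have the same dimension, the map is an isomorphism.

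The genuinely technical point is Step 2 (the Peirce computation), but once the multiplication table is correctly applied there is no further obstacle, since Steps 3 and 4 are then immediate consequences. The only subtlety to be careful about is the case $j>k$, where the summand $V_{jj}$ is absent from $a$ and the coefficient becomes $-a_i/4$ instead of $(a_j-a_i)/4$; both are nonzero under the hypotheses, so the argument is uniform.
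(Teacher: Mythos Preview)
Your proof is correct and follows essentially the same route as the paper: the heart of both arguments is the Peirce computation $[L_{e_{ii}},L_\xi]a=\tfrac14(a_j-a_i)\xi$ (with the convention $a_j=0$ for $j>k$), after which a dimension comparison yields the isomorphism. The only minor differences are that the paper establishes orthogonality via the canonical metric and the commutativity of $L_a$ with $L_{\bar L_a^{-1}b}$ (rather than via skew-adjointness of derivations and the idempotent identity), and phrases the conclusion as surjectivity-plus-equal-dimensions rather than injectivity-plus-equal-dimensions.
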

\begin{proof}
For $X=[L_u, L_v]$, $u$ and $v$ in $V$, and for $a\in R_k^+$ and $(x, b)$ in
$T_aR_k^+$:
\begin{eqnarray}
((a,Xa), (a,b)) &=& \langle a\mid e\rangle \langle Xa\mid \bar L_a^{-1} b\rangle
\cr
&= &  \langle a\mid e\rangle \langle [L_a, L_{ \bar L_a^{-1} b}]u\mid v\rangle
\cr
&=& 0
\end{eqnarray}
because both $L_a$ and $L_{\bar L_a^{-1} b}$, being of diagonal form with respect to
the Jordan basis, commute with each other.

Assume that $1\le i\le k$ and $i<j\le \rho$. For $X$ in ${\frk l}_{ij}$:
$$
X=[L_{e_{ii}}, L_\xi], \quad \xi\in V_{ij},
$$
and for $a=\sum_{i=1}^k a_ie_{ii}$ in $R_k^+$ we have
$$
Xa={1\over 4}(a_j-a_i)\xi.
$$
Here it is understood that $a_j=0$ if $j>k$.
Therefore, the range of the map $X\mapsto Xa$ contains the subspaces $V_{ij}$
and the sum
$$
\bigoplus_{1\le i\le k, i<j}V_{ij}.
$$
Since $(T_aR_k^+)^\perp=\{a\}\times \bigoplus_{1\le i\le k, i<j}V_{ij}$,  map
(\ref{appendixiso}) is onto, hence must be an isomorphism because the dimensions of the domain and the target are equal.
\end{proof}

\begin{Cor}
---i) As a vector space, we have
$$
{\frk {der}}={\frk m}_k\oplus {\frk l}_k.
$$
ii) Map
\begin{eqnarray}
\phi:\quad K/M_k \times R_k^+ &\to& {\mathcal C}_k\cr
(gM_k, a)&\mapsto& ga\nonumber
\end{eqnarray}
has dense range and is a diffeomorphism onto its range.
\end{Cor}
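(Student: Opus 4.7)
The plan is to deduce (i) from the preceding Proposition via an orbit-stabilizer argument, then use (i) to show $d\phi$ is an isomorphism, and combine this with a rigidity argument for injectivity and the spectral theorem for density.

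For (i), the intersection $\frk m_k \cap \frk l_k = 0$ is immediate from the injectivity half of the Proposition: any $X$ in both satisfies $Xa=0$ for $a\in R_k^+$ with distinct coordinates, forcing $X=0$. For the matching dimension count I would fix such an $a$ and identify the stabilizer $\frk n_a:=\{X\in\frk{der}\mid Xa=0\}$ with $\frk m_k$. The key trick is Lagrange interpolation: since $0,a_1,\ldots,a_k$ are pairwise distinct, there exist polynomials $p_j$ with $p_j(0)=0$ and $p_j(a_i)=\delta_{ij}$, and then $p_j(a)=e_{jj}$ for $1\le j\le k$. Because $X$ is a derivation with $Xe=0$ (from $e^2=e$) and $Xa=0$, Leibniz gives $X(a^m)=0$ for all $m\ge 1$, whence $Xe_{jj}=0$ for $j\le k$, i.e. $X\in\frk m_k$. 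Combining this with the surjectivity in the Proposition (the image of $\frk{der}\to V$, $X\mapsto Xa$, is contained in $(T_aR_k^+)^\perp$ by the first part of its proof, and already $\frk l_k\cdot a$ fills that subspace) gives $\dim\frk{der}=\dim\frk n_a+\dim\frk l_k=\dim\frk m_k+\dim\frk l_k$, which together with the trivial intersection proves (i).

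For (ii), I would first compute $d\phi$ at $(eM_k,a)$ for generic $a\in R_k^+$: using (i) to identify $T_{eM_k}(K/M_k)\cong\frk l_k$, the differential is $(X,b)\mapsto Xa+b$, whose image is $\frk l_k\cdot a \oplus T_aR_k^+=(T_aR_k^+)^\perp\oplus T_aR_k^+=T_a\mathcal C_k$ by the Proposition (a dimension check against $\dim\mathcal C_k=k(1+\delta(\rho-(k+1)/2))$ confirms equality). $K$-equivariance extends this to every $(gM_k,a)$, making $\phi$ a local diffeomorphism. For injectivity, suppose $ga=g'a'$ with $a,a'\in R_k^+$: since $K\subset\mathrm{Aut}(V)$ preserves Jordan spectra, $a=a'$; then $g^{-1}g'$ lies in $\mathrm{Stab}_K(a)$, and the group-level version of the Lagrange trick (any $h\in K$ with $ha=a$ must fix each $e_{jj}$ for $j\le k$ because $h$ is an automorphism and $e_{jj}=p_j(a)$) shows $g^{-1}g'\in M_k$, so $gM_k=g'M_k$. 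For dense image, I would invoke the spectral theorem for semi-positive rank-$k$ elements to write any $x\in\mathcal C_k$ as $\sum_{i=1}^k\lambda_ic_{ii}$ with $\lambda_i>0$ and orthogonal primitive idempotents, then use the $K$-transitivity on ordered frames to rewrite $x=g\cdot\sum_{i=1}^k\lambda_ie_{ii}$; an arbitrarily small perturbation making the $\lambda_i$ strictly decreasing places $x$ in $\phi(K/M_k\times R_k^+)$.

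The main technical obstacle is the identification $\frk n_a=\frk m_k$ (and its group-level counterpart $\mathrm{Stab}_K(a)=M_k$), since this is what promotes a naive ``transversality plus dimension count'' picture into a genuine fibration statement. Once the Lagrange-interpolation argument is in place, both stabilizer identifications drop out and the rest of the corollary follows along the lines of the analogous result for symmetric cones in Section VI.2 of \cite{FK94}.
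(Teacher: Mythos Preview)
Your argument is correct. The paper itself gives no proof of this corollary: following the presentation of Section~VI.2 of \cite{FK94}, it is simply stated as an immediate consequence of the preceding Proposition, with all details suppressed. Your write-up supplies precisely the missing steps, and in particular the Lagrange-interpolation identification $\frk n_a=\frk m_k$ (and its group-level analogue $\mathrm{Stab}_K(a)=M_k$) is the right way to handle the rank-$k$ situation, where $a$ no longer generates the full diagonal $\bigoplus_{i=1}^\rho \bb R\,e_{ii}$ but only $\bigoplus_{i=1}^k \bb R\,e_{ii}$. The remaining ingredients you invoke (spectral theorem for semi-positive elements, $K$-transitivity on ordered Jordan frames) are standard facts from \cite{FK94} and are used in the same way in the symmetric-cone case there.
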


\begin{Th}\label{dmuk} Write $d\mu_{k{\delta\over 2}}$ for ${\sqrt{\varphi_k}\over r}\mr{vol}$.
Under the identification map $\phi$, we have
\begin{eqnarray}
d\mu_{k{\delta\over 2}}= C\,\mr{vol}_{K/M_k}\prod_{1\le i<j\le k}(a_i-a_j)^\delta \prod_{i=1}^k
\left(a_i^{{\delta \over 2}(\rho-k+1)-1} da_i \right)
\end{eqnarray}
where $\mr{vol}_{K/M_k}$ is the $K$-invariant measure on $K/M_k$ and $C$ is a
constant  depending only on ${\mathcal C}_k$.
\end{Th}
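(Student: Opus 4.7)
The plan is to pull the measure $\frac{\sqrt{\varphi_k}}{r}\mr{vol}$ back along the diffeomorphism $\phi\colon K/M_k\times R_k^+\to\mathcal C_k$ furnished by the preceding corollary, do the computation at a single base point $(eM_k,a)$ with $a=\sum_{i=1}^k a_ie_{ii}\in R_k^+$, and then invoke $K$-invariance of both sides to extend to all of $K/M_k\times R_k^+$; the complement of the image of $\phi$ has measure zero, so nothing is lost. At such an $a$, the preceding proposition splits $T_a\mathcal C_k$ orthogonally (in the canonical metric) into the radial piece $T_aR_k^+=\bigoplus_{i\le k}V_{ii}$ and the angular piece $(T_aR_k^+)^\perp=\bigoplus_{i\le k,\,i<j}V_{ij}$, the latter identified with $\frk l_k$ via $X\mapsto Xa$. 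Since $L_a$ acts on $V_{ii}$ by $a_i$ and on $V_{ij}$ ($i<j$) by $(a_i+a_j)/2$, with the convention $a_l=0$ for $l>k$, every eigenvalue on $\mr{Im}\,L_a$ is explicit, and $k+\delta\binom{k}{2}+\delta k(\rho-k)=D_k$ confirms these exhaust the tangent space.

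The computation proceeds in three steps. The radial Jacobian, using $\|\partial_{a_i}\|_K^2=r(a)\langle e_{ii}\mid L_a^{-1}e_{ii}\rangle=\frac{r(a)}{\rho a_i}$, equals $r^{k/2}\rho^{-k/2}\prod_i a_i^{-1/2}$. The angular Jacobian, obtained from the basis $X_{ij}^\mu:=[L_{e_{ii}},L_{e_{ij}^\mu}]$ of $\frk l_k$ together with $X_{ij}^\mu\cdot a=\tfrac{a_j-a_i}{4}e_{ij}^\mu$ (from the preceding proposition) and $\|e_{ij}^\mu\|_K^2=\frac{2r(a)}{\rho(a_i+a_j)}$, evaluates, upon multiplying the singular values over all admissible $(i,j,\mu)$, to
\begin{equation*}
C_1\, r^{(D_k-k)/2}\prod_{1\le i<j\le k}\frac{(a_i-a_j)^\delta}{(a_i+a_j)^{\delta/2}}\prod_{i=1}^k a_i^{\delta(\rho-k)/2},
\end{equation*}
where $C_1$ depends only on the normalization of $\mr{vol}_{K/M_k}$ relative to the basis $\{X_{ij}^\mu\}$. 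Finally, evaluating $\sqrt{\varphi_k}/r$ at $a\in R_k^+$ using $r=\sum a_i$, $c_k=\prod_{i=1}^k a_i$, and $\tau_k=\prod_{i<j\le k}(a_i+a_j)$ gives $\prod_{i<j\le k}(a_i+a_j)^{\delta/2}\prod_i a_i^{(\delta-1)/2}r^{-D_k/2}$.

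Multiplying these three factors, the identity $k+\delta\binom{k}{2}+\delta k(\rho-k)=D_k$ makes the power of $r$ vanish, the $(a_i+a_j)^{\pm\delta/2}$ factors cancel, and the total exponent of each $a_i$ collapses to $-\tfrac12+\tfrac{\delta-1}{2}+\tfrac{\delta(\rho-k)}{2}=\tfrac{\delta(\rho-k+1)}{2}-1$, reproducing the stated formula with $C$ absorbing $\rho^{-k/2}$, powers of $2$, $C_1$, and any remaining scalar. The only delicate point is the bookkeeping of the ``outer'' Peirce components $V_{ij}$ with $i\le k<j$: these are absent in the symmetric-cone case $k=\rho$ treated by Faraut--Kor\'{a}nyi (Theorem VI.2.3 of~\cite{FK94}), yet for $k<\rho$ they contribute the decisive $a_i^{\delta(\rho-k)/2}$ in the angular Jacobian, and only after combining with $\sqrt{\varphi_k}/r$ do the $a_i$-exponents conspire to give the clean $\tfrac{\delta(\rho-k+1)}{2}-1$. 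This is precisely the reason $\varphi_k$ was defined as in~(\ref{potential}).
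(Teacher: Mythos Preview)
Your proof is correct and follows essentially the same route as the paper's: compute the canonical metric and hence $\mr{vol}$ at a base point $a\in R_k^+$ using the Peirce decomposition, multiply by $\sqrt{\varphi_k}/r$, observe the cancellations, and then propagate by $K$-invariance. The paper organizes the same calculation by first writing $ds_K^2|_a$ in the coordinates $(a_i,x_{ij}^\alpha)$ coming from the exponential parametrization $x=\exp(\sum x_{ij}^\alpha X_{ij}^\alpha)a$, whereas you split the volume directly into ``radial'' and ``angular'' Jacobians; the arithmetic is identical. One point the paper makes explicit that you leave implicit is why the coordinate form $\bigwedge dx_{ij}^\alpha|_{x_{ij}^\alpha=0}$ agrees (up to a constant) with $\mr{vol}_{K/M_k}$: it notes that the $X_{ij}^\alpha$ are mutually orthogonal for the Killing form, so this wedge is the bi-invariant volume at $eM_k$. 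Your absorption of this into $C_1$ is legitimate since any $K$-invariant measure on $K/M_k$ is unique up to scalar, but a sentence to that effect would tighten the argument.
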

\begin{proof}
We start with a local parametrization
of $\mathcal C_k$ around point $a\in R_k^+$:
$$
x=\exp{\left(\sum_{1\le i\le k, i<j\le \rho}^{1\le\alpha\le \delta
}x_{ij}^\alpha X_{ij}^\alpha\right)} a.
$$
Here, \fbox{$X_{ij}^\alpha=[L_{e_{ii}}, L_{e_{ij}^\alpha}]$}. Then
$$
dx|_a=\sum_{i=1}^ke_{ii}da_i+\sum_{1\le i\le k, i<j\le \rho}^{1\le\alpha\le
\delta }{1\over 4}(a_j-a_i)e_{ij}^\alpha dx_{ij}^\alpha\mid_{x_{ij}^\alpha=0},$$
where it is understood that $a_j=0$ if $j>k$. Therefore, we can calculate the canonical metric $ds^2_K$ at point $a$:
\begin{eqnarray}
ds_K^2|_a &= &\langle a\mid e\rangle
\langle dx\mid \bar L_a^{-1}dx\rangle\mid_a\cr
&=& {\langle a\mid e\rangle\over \rho}\left(\sum_i {1\over a_i}da_i^2+{1\over
8}\sum_{1\le i\le k, i<j\le \rho}^{1\le\alpha\le \delta } {(a_j-a_i)^2\over
a_i+a_j}(dx_{ij}^\alpha)^2\mid_{x_{ij}^\alpha=0}\right),
\end{eqnarray}
so, up to a multiplicative numerical constant, we have
$$
\mr{vol}|_a=  (r^{D_k/2}c_k^{{1\over 2}(\delta (\rho-k)-1)}\tau_k^{-{\delta \over
2}}) |_a\prod_{1\le i<j\le k}(a_i-a_j)^\delta \bigwedge_{i=1}^k
da_i\bigwedge\bigwedge_{1\le i\le k, 1\le\alpha\le \delta}^{ i<j\le \rho}
dx_{ij}^\alpha|_{x_{ij}^\alpha=0}
$$
and 
$$
d\mu_{k{\delta\over 2}} |_a=  c_k(a)^{{\delta \over 2}(\rho-k+1)-1}\prod_{1\le i<j\le
k}(a_i-a_j)^\delta \bigwedge_{i=1}^k da_i\bigwedge\bigwedge_{1\le i\le
k, 1\le\alpha\le \delta}^{i<j\le \rho} dx_{ij}^\alpha|_{x_{ij}^\alpha=0}.
$$
On the other hand, since $K$ is a simple Lie group, one can show that $X_{ij}^\alpha$'s are mutually orthogonal
with respect to the negative-definite Cartan-Killing form\footnote{One just needs to show that
the trace of
$X_{ij}^\alpha X_{i'j'}^{\alpha'}$ is zero if $(i,j,\alpha)\neq
(i',j',\alpha')$.}, so the $K$-invariant volume form on $K/M_k$ at $eM_k$ is
equal to $\bigwedge_{1\le i\le k, 1\le\alpha\le \delta}^{i<j\le \rho}
dx_{ij}^\alpha|_{x_{ij}^\alpha=0}$ modulo a multiplicative numerical constant. 
Since $d\mu_k$ is also
$K$-invariant, up to a multiplicative numerical constant, we have
$$
d\mu_{k{\delta\over 2}}= \mr{vol}_{K/M_k}\prod_{1\le i<j\le k}(a_i-a_j)^\delta \prod_{i=1}^k
\left(a_i^{{\delta \over 2}(\rho-k+1)-1} da_i \right)$$
as a measure. 
\end{proof}
As a side remark, we would like to mention the fact that integral
$$
\int_\Omega e^{-2r}\det (x)^{\nu-\rho{\delta\over 2}}\, d\mu_{\rho{\delta\over 2}}
$$
is finite if and only if $\nu>(\rho-1){\delta\over 2}$.
\begin{comment}
\section{Spherical Functions on symmetric cones}
Assume integer $k$ is between $1$ and $\rho$. Let $\delta_k$ be the $\frk
k$-invariant polynomial
on $J$ such that if $x=\sum_{i=1}^\rho \lambda_ie_{ii}$, then
$$
\delta_k(x)=(\rho-k)!\sum_{1\le i_1< \ldots < i_k\le
\rho}\lambda_{i_1}\cdots\lambda_{i_k}
$$

\begin{Prop}

---i) $E_-(\delta_k)=\delta_{k-1}$

If $f$ is a $Aut(J)$-invariant function on $\Omega$, then
$$
E_-=\sum_{i=1}^\rho\lambda_i\partial_{\lambda_i}^2+{\delta\over 2}\sum_{i\neq
j}{1\over \lambda_i-\lambda_j}\left(\lambda_i{\partial \over \partial
\lambda_i}-\lambda_j{\partial \over \partial
\lambda_j}\right)+\left(\nu-(\rho-1){\delta\over 2}\right)
\sum_{i=1}^\rho\partial_{\lambda_i}
$$

\end{Prop}
\begin{proof}
Let
$$
 E_{\pm} : = {i\over 2}( X_e- Y_e)\mp S_{ee}, \quad
 H :=i( X_e+ Y_e).\nonumber
$$
Then
$$
 [H, E_\pm]=\pm2 E_\pm,\quad [E_+, E_-]=-H.
$$ 
One can check that
$$
e^r\acute E_-e^{-r}=-{1\over 2}\left(\langle x\mid {/\hskip -5pt \partial}^2
\rangle+\nu \tr {/\hskip -5pt \partial} \right)
$$
On the spherical functions, we have
$$
\langle x\mid {/\hskip -5pt \partial}^2 \rangle+\nu \tr {/\hskip -5pt \partial}
=\sum_{i=1}^\rho\lambda_i\partial_{\lambda_i}^2+{\delta\over 2}\sum_{i\neq
j}{1\over \lambda_i-\lambda_j}\left(\lambda_i{\partial \over \partial
\lambda_i}-\lambda_j{\partial \over \partial
\lambda_j}\right)+\left(\nu-(\rho-1){\delta\over 2}\right)
\sum_{i=1}^\rho\partial_{\lambda_i}
$$

\end{proof}
\end{comment}

\section{List of notations}\label{App:B}
The purpose here is to list some basic notations and terminologies for
this paper and its sequels. 

\begin{itemize}
\item $V$ --- a (finite dimensional) simple euclidean Jordan algebra;
\item $e$, $\rho$, $\delta$, and $n$ --- reserved for the identity element,
rank, degree, and dimension of $V$;
\item $\tr u$, $\det u$ --- the trace, determinant of $u\in V$; 
\item $\langle u\mid v\rangle$ --- the inner product of $u, v\in V$, and is
chosen to be ${1\over \rho}\tr(uv)$;
\item $x$ --- reserved for a generic point in $V$ when $V$ is considered as a
smooth space;
\item $r$ --- reserved for function $\langle e\mid \;\rangle$ on smooth space
$V$;
\item $\{e_\alpha\}$ --- an orthonormal basis for $V$;
\item $x^\alpha$ --- the coordinates of $x\in V$ with respect to basis
$\{e_\alpha\}$;
\item $\pi$ ---  reserved for a generic point in $V$ when $V$ is considered as
the tangent space of $V$;
\item $\pi^\alpha$ --- the coordinates of $\pi\in V$ with respect to basis
$\{e_\alpha\}$;
\item ${/\hskip -5pt\partial}$ --- a shorthand notation for $\sum_\alpha e_\alpha {\partial \over \partial x^\alpha}$;
\item ${\backslash \hskip -6pt \partial}$ --- a shorthand notation for $\sum_\alpha e_\alpha {\partial \over \partial \pi^\alpha}$;
\item $d$ --- the exterior derivative operator;
\item $\mr{vol}$ --- the volume form;
\item $uv$ --- the Jordan product of $u, v\in V$;
\item $\{uvw\} $--- the Jordan triple product of $u,v,w\in V$;
\item $L_u$ --- the multiplication by $u\in V$;
\item $S_{uv}$ --- defined to be $[L_u, L_v]+L_{uv}$, so $S_{uv}w=\{uvw\}$;  
\item ${\mathcal W}(V)$ --- the Wallach set of $V$;
\item ${\mathcal P}(V)$ --- the set of complex-valued polynomial functions on $V$;
\item $\frk{der}(V)$, $\frk{der}$ --- the derivation algebra of $V$;
\item $\frk{str}(V)$, $\frk{str}$ --- the structure algebra of $V$, it is generated by $L_u$, $u\in V$;
\item $\frk{co}(V)$, $\frk{co}$ --- the conformal algebra of $V$;
\item $\frk{u}(V)$, $\frk{u}$ --- the maximal compact Lie subalgebra of $\frk{co}$;
\item $\mr{Aut}(V)$ --- the automorphism group of $V$;
\item $\mr{Str}(V)$, $\mr{Str}$ --- the structure group of $V$;
\item $\mr{Co}(V)$, $\mr{Co}$ --- the conformal group of $V$, and is defined to be the simply
connected Lie group with $\frk{co}$ as its Lie algebra;
\item ${\tilde{\mr U}}(V)$, $\tilde{\mr U}$--- the simply connected Lie group with $\frk{u}$ as its Lie
algebra;
\item ${\tilde H}(\nu)$, $\tilde H$ --- the hamiltonian of the generalized Kepler problem corresponding to Wallach parameter $\nu$;
\item ${\ms H}_I$ --- the $I$th energy eigenspace for $\tilde H$;
\item $\ms H$ --- the Hilbert space of bound states for $\tilde H$. 
\end{itemize}

\end{document}